\documentclass[a4paper,onecolumn,11pt,accepted=2026-01-06]{quantumarticle}
\pdfoutput=1
\usepackage[utf8]{inputenc}
\usepackage[english]{babel}
\usepackage[T1]{fontenc}
\usepackage[numbers]{natbib}
\usepackage{amsthm, amsfonts, amsmath, amssymb}
\usepackage{mathtools}
\usepackage{thmtools}
\usepackage{mathrsfs}
\usepackage{braket}
\usepackage{enumerate}
\usepackage{enumitem}
\usepackage[ruled]{algorithm2e}
\usepackage{hyperref}
\usepackage{csquotes}

\usepackage[capitalize]{cleveref}
\crefname{thm}{Theorem}{Theorems}
\crefname{lem}{Lemma}{Lemmas}

\theoremstyle{plain}
\newtheorem{thm}{Theorem}[section]
\newtheorem*{thm*}{Theorem}
\newtheorem{prop}[thm]{Proposition}
\newtheorem{cor}[thm]{Corollary}
\newtheorem{lem}[thm]{Lemma}
\newtheorem*{lem*}{Lemma}
\newtheorem{prob}[thm]{Problem}

\theoremstyle{remark}
\newtheorem{note}{Remark}[thm]

\theoremstyle{definition}

\newtheorem{definition}[thm]{Definition}

\newtheorem*{note*}{Remark}

\DeclareMathOperator{\rel}{rel}
\DeclareMathOperator{\tr}{Tr}
\DeclareMathOperator{\soe}{soe}
\DeclareMathOperator{\iso}{ISO}
\DeclareMathOperator{\spn}{span}

\newcommand{\card}[1]{\left\lvert #1 \right\rvert}
\newcommand{\complex}{\mathbb{C}}
\def\calA{{\mathcal A}} \def\calB{{\mathcal B}} 
  \def\calF{{\mathcal F}}
 \def\calH{{\mathcal H}} 
  \def\calL{{\mathcal L}}
  
\def\calP{{\mathcal P}} \def\calQ{{\mathcal Q}} \def\calR{{\mathcal R}}
\def\calS{{\mathcal S}} \def\calT{{\mathcal T}}

\newcommand{\cyclicpermutations}{\mathscr{C}(1,\dots, k, 2k, \dots, k+1)}

\renewcommand{\epsilon}{\varepsilon}

\setlist[enumerate, 1]{font=\upshape, noitemsep, nolistsep}
\setlist[enumerate, 2]{font=\upshape, noitemsep, nolistsep}
\setlist[itemize, 1]{noitemsep, nolistsep,font=\upshape}
\setlist[itemize, 2]{noitemsep, nolistsep,font=\upshape}

\begin{document}

\title{NPA Hierarchy for Quantum Isomorphism and\texorpdfstring{\newline}{ } Homomorphism Indistinguishability$^{*}$}
\author{Prem Nigam Kar}
\affiliation{Technical University of Denmark}
\affiliation{Instytut Matematyczny Polskiej Akademii Nauk}
\homepage{https://sites.google.com/view/prem-nigam-kar/}
\thanks{Carlsberg Semper Ardens Accelerate CF21-0682 Quantum Graph Theory.}
\orcid{0009-0000-8235-9233}
\author{David E. Roberson}
\affiliation{Technical University of Denmark}
\affiliation{QMATH, University of Copenhagen}
\homepage{https://sites.google.com/site/davideroberson/}
\orcid{0000-0002-4463-8095}
\author{Tim Seppelt}
\affiliation{IT-Universitetet i København}
\homepage{https://tseppelt.github.io}
\thanks{German Research Foundation (DFG) within Research Training Group 2236/2 (UnRAVeL) and European Union (ERC, SymSim, 101054974, CountHom, 101077083)}
\orcid{0000-0002-6447-0568}
\author{Peter Zeman}
\affiliation{Department of Algebra, Faculty of Mathematics and Physics, Charles University}
\homepage{https://zemanpeter.matfyz.cz}
\thanks{Funded by the European Union (ERC, POCOCOP, 101071674).}
\orcid{0000-0003-0071-9149}
\thanks{\texorpdfstring{\newline}{ }\texorpdfstring{\newline}{ }* Extended abstract was accepted to the Proceedings of 52nd EATCS International Colloquium on Automata, Languages, and Programming (ICALP) 2025. Views and opinions expressed are however those of the author(s) only and do not necessarily reflect those of the European Union or the European Research Council Executive Agency. Neither the European Union nor the granting authority can be held responsible for them.}

\maketitle

\begin{abstract}

    Mančinska and Roberson~[FOCS'20] showed that two graphs are quantum isomorphic if and only if they admit the same number of homomorphisms from any planar graph. Atserias et al.~[JCTB'19] proved that quantum isomorphism is undecidable in general, which motivates the study of its relaxations. In the classical setting, Roberson and Seppelt~[ICALP'23] characterized the feasibility of each level of the Lasserre hierarchy of semidefinite programming relaxations of graph isomorphism in terms of equality of homomorphism counts from an appropriate graph class. The NPA hierarchy, a noncommutative generalization of the Lasserre hierarchy, provides a sequence of semidefinite programming relaxations for quantum isomorphism. In the quantum setting, we show that the feasibility of each level of the NPA hierarchy for quantum isomorphism is equivalent to equality of homomorphism counts from an appropriate class of planar graphs. Combining this characterization with the convergence of the NPA hierarchy, and noting that the union of these classes is the set of all planar graphs, we obtain a new proof of the result of Mančinska and Roberson~[FOCS'20] that avoids the use of quantum groups. Moreover, this homomorphism indistinguishability characterization also yields a randomized polynomial-time algorithm deciding exact feasibility of each fixed level of the NPA hierarchy of SDP relaxations for quantum isomorphism.
    
\end{abstract}

\newpage
\setcounter{tocdepth}{2}
\tableofcontents

\section{Introduction}

Two graphs $G$ and $H$ are said to be \emph{homomorphism indistinguishable} over a class of graphs $\calF$, written $G \cong_{\calF} H$, if for every graph $F \in \calF$, the number of homomorphisms from $F$ to $G$ is the same as the number of homomorphisms from $F$ to $H$. A classic result from \cite{lovasz_operations_1967} states that two graphs $G$ and $H$ are isomorphic if and only if they are homomorphism indistinguishable over all graphs. 
Since then, several relaxations of graph isomorphism from fields as diverse as finite model theory \cite{sherali-adams-3,grohe_counting_2020,fluck_going_2024}, algebraic graph theory \cite{dell_lovasz_2018}, optimisation \cite{grohe_homomorphism_2022,roberson-seppelt-arxiv}, machine learning \cite{xu_how_2019,morris_weisfeiler_2019,zhang_beyond_2024}, and category theory \cite{dawar_lovasz-type_2021,abramsky_discrete_2022,montacute_pebble-relation_2022} were found to be homomorphism indistinguishability relations over suitable graph classes. 
Recently, a coherent theory which addresses the descriptive and computational complexity of homomorphism indistinguishability relations has begun to emerge \cite{roberson_oddomorphisms_2022,seppelt_logical_2023,neuen_homomorphism-distinguishing_2023,seppelt_algorithmic_2024}, cf.\ the monograph \cite{seppelt_homomorphism_2024}.

A ground-breaking connection between quantum information and homomorphism indistinguishability was found by Man\v{c}inska and Roberson \cite{david-laura}: 
They showed that two graphs are quantum isomorphic if and only if they are homomorphism indistinguishable over all planar graphs.
Quantum isomorphism, as introduced by \cite{ATSERIAS2019289}, is a natural relaxation of graph isomorphism in terms of the graph isomorphism game, where two cooperating players called Alice and Bob try to convince a referee that two graphs are isomorphic. A perfect deterministic strategy exists for the $(G, H)$-isomorphism game if and only if the graphs $G$ and $H$ are isomorphic. 
Two graphs $G$ and $H$ are said to be \emph{quantum isomorphic}, written $G \cong_q H$, if there is a perfect quantum strategy $(G, H)$-isomorphism game, i.e., a perfect strategy making use of local quantum measurements on a shared entangled state. 

The proof of Man\v{c}inska's and Roberson's result \cite{david-laura} heavily relies on certain esoteric mathematical objects called quantum groups. 
In more detail, the main idea of the proof is to interpret homomorphism tensors of bilabelled graphs as intertwiners of the quantum automorphism groups of the respective graphs. 

Another recent result from \cite{roberson-seppelt-arxiv} also obtained a homomorphism indistinguishability characterisation for each level of the Lasserre hierarchy of semidefinite programming (SDP) relaxations for the integer program for isomorphism between two graphs. 
More precisely, for each $k \in \mathbb{N}$, the authors of \cite{roberson-seppelt-arxiv} constructed a class of graphs $\calL_k$ such that the $k^{\text{th}}$-level of the Lasserre hierarchy of SDP relaxations of the integer program for deciding whether $G$ and $H$ are isomorphic is feasible if and only if $G$ and $H$ are homomorphism indistinguishable over the graph class $\calL_k$. 

It was also shown in \cite{ATSERIAS2019289} that deciding quantum isomorphism of graphs is undecidable---contrary to deciding isomorphism of graphs, which is clearly decidable and currently known to be solvable in quasipolynomial time \cite{babai_graph_2016}. 
This motivates the study of relaxations of quantum isomorphism. 
The NPA hierarchy \cite{Navascues_2008}, which can be thought of as a noncommutative analogue of the Lasserre hierarchy, is a sequence of SDP relaxations of the problem of determining if a given joint conditional probability distribution arises from quantum mechanics. In particular, the NPA hierarchy can be used to formulate a hierarchy of SDP relaxations for the problem of deciding whether two graphs are quantum isomorphic. 

In light of results from \cite{david-laura,roberson-seppelt-arxiv}, 
it is natural to ask if the feasibility of each level of these SDP relaxations of quantum isomorphism can be characterised as a homomorphism indistinguishability relation over some family of graphs. 
Such a characterisation would make the NPA hierarchy subject to the emerging theory of homomorphism indistinguishability.
For example, a recent result \cite{seppelt_algorithmic_2024} asserts that, over every minor-closed graph class of bounded treewidth, homomorphism indistinguishability can be decided in randomized polynomial time.
The NPA relaxation, being a semidefinite program, can be solved using standard techniques such as the ellipsoid method. However, such techniques can, in polynomial time, only decide the approximate feasibility of a system.
A homomorphism indistinguishability characterisation of the NPA hierarchy would imply, for each of its levels, the existence of a randomized polynomial-time algorithm for deciding exact feasibility \cite{neuen_homomorphism-distinguishing_2023,seppelt_logical_2023,seppelt_algorithmic_2024}.

\subsection{Main Results}

Our main contribution is a homomorphism indistinguishability characterization for each level of the NPA hierarchy, as formalized by the following theorem.
\begin{thm}[\textbf{Main Theorem}]\label{thm:main-theorem}
    For graphs $G$ and $H$ and $k \in \mathbb{N}$, the following are equivalent:
    \begin{enumerate}[label = (\roman*)]
        \item there is a solution for the $k^{\text{th}}$-level of the NPA hierarchy for the $(G,H)$-isomorphism game;
        \item there is a level-$k$ quantum isomorphism map from $G$ to $H$;\label{ref1}
        \item $G$ and $H$ are algebraically $k$-equivalent;\label{ref2}
        \item $G$ and $H$ are homomorphism indistinguishable over the family of graphs $\calP_k$. 
    \end{enumerate}
\end{thm}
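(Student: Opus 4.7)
The plan is to establish the four-way equivalence cyclically, namely $(i)\Leftrightarrow(ii)\Leftrightarrow(iii)\Leftrightarrow(iv)$, with the first two links being largely algebraic packaging and the last link carrying the combinatorial content. The implication $(i)\Leftrightarrow(ii)$ I would handle by unpacking the NPA semidefinite program. A feasible level-$k$ solution is a positive semidefinite moment matrix indexed by words of length at most $k$ in the projection operators $e_{gh}$ of the $(G,H)$-isomorphism game, satisfying the row/column stochasticity and edge-preservation constraints. By a standard GNS-type construction on this truncated operator algebra, the moment matrix is equivalent data to a positive tracial linear functional on the span of length-$\le 2k$ products respecting the game relations, which is precisely what a level-$k$ quantum isomorphism map should encode. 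One direction assembles the matrix from the map; the other extracts the map as a restriction of the matrix.

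For $(ii)\Leftrightarrow(iii)$, I would read algebraic $k$-equivalence as the assertion that the length-$k$ tracial data on the corresponding truncated $\ast$-algebras of $G$ and $H$ are identifiable via a common linear functional compatible with the relations. The level-$k$ quantum isomorphism map is exactly such a compatibility certificate, and conversely the algebraic identification produces the map. This step should amount to algebraic bookkeeping once the correct truncation is pinned down.

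The substantive part is $(iii)\Leftrightarrow(iv)$, which I would attack using the theory of bilabelled graphs and homomorphism tensors. The key identity is that for a bilabelled graph $F$, $\hom(F,G)$ appears as an entry (or a trace) of the homomorphism matrix of $F$ into $G$, and that composition of bilabelled graphs corresponds to matrix multiplication of these homomorphism matrices. Defining $\calP_k$ as the class of planar graphs obtained by gluings, tensor products, and twists of bilabelled atoms whose total complexity fits within a level-$k$ budget, one gets the easy direction $(iii)\Rightarrow(iv)$ essentially for free: each $F\in\calP_k$ decomposes into bilabelled pieces whose composed image is a word of length $\le k$, so algebraic $k$-equivalence transports trace equalities into $\hom(F,G)=\hom(F,H)$. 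For the converse, every length-$k$ word in the generators of the game algebra must be expressed as a $\mathbb{Z}$-linear combination of homomorphism counts of graphs in $\calP_k$, via a planar-diagrammatic reading of the word.

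The main obstacle will be the direction $(iv)\Rightarrow(iii)$: one needs a spanning argument showing that the homomorphism counts drawn from $\calP_k$ are rich enough to certify every trace identity that algebraic $k$-equivalence demands. Concretely this requires (a) identifying the correct closure properties of $\calP_k$ under the bilabelled graph operations so that the class is both combinatorially natural and matches the algebraic level, and (b) verifying that the diagrams witnessing such identities can be realised as planar graphs within the level-$k$ budget. This is exactly the place where the original Mančinska–Roberson argument invoked quantum groups and intertwiners; the point of the theorem is to replace that machinery with a direct planar-diagrammatic accounting, which refines the result level-by-level and, upon taking $k\to\infty$, recovers quantum isomorphism as homomorphism indistinguishability over all planar graphs.
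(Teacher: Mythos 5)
Your outline reproduces the paper's architecture (a chain of equivalences (i)$\Leftrightarrow$(ii)$\Leftrightarrow$(iii)$\Leftrightarrow$(iv)), but at each link the step that actually carries the weight is left unproved, and in one place the proposed mechanism would not work. For (i)$\Leftrightarrow$(ii), a ``GNS-type construction'' from the truncated moment matrix is not available at a fixed finite level $k$ --- that is exactly why the NPA hierarchy is a relaxation rather than exact --- and the level-$k$ quantum isomorphism map is not a tracial functional but a completely positive map $\Phi\colon M_{V(G)^k}(\complex)\to M_{V(H)^k}(\complex)$ whose Choi matrix is the principal submatrix of the certificate indexed by $\Sigma^k$. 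Extracting $\Phi$ from a certificate is indeed the easy direction; the direction you describe as ``assembles the matrix from the map'' is the hard one: one must extend the Choi matrix to entries indexed by shorter words, and the well-definedness of this extension (independence of the padding coordinates, invariance under cyclic shifts and under absorbing repeated letters) is precisely the content of the paper's Lemma on partial sums of the Choi matrix (\cref{lem:Msums}), which in turn rests on the multiplicativity property $\Phi(\boldsymbol{F}_G X)=\boldsymbol{F}_H\Phi(X)$ for atomic $\boldsymbol{F}$ (\cref{lem:multiplicative}, itself needing a cospectrality argument). None of this is supplied or even flagged in your plan. Similarly, (iii)$\Rightarrow$(ii) is not ``algebraic bookkeeping'': the paper must realise the algebraic $k$-equivalence by a unitary conjugation (\cref{lem:linalg_unitary}) and then compose with the orthogonal projection onto $\widehat{\mathcal{Q}}_G^k$, checking that this projection is completely positive and commutes appropriately with the Schur-multiplication and label-permutation projections.

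The most serious gap is in (iii)$\Leftrightarrow$(iv), which you correctly identify as the crux but do not resolve. The paper does not argue by expressing words as $\mathbb{Z}$-linear combinations of homomorphism counts; instead it defines $\mathcal{P}_k$ as the class generated from the atomic graphs $\mathcal{Q}_k$ by series composition, parallel composition with $\mathcal{Q}_k^P$, and cyclic label permutations, and then invokes the Roberson--Seppelt criterion: for an \emph{inner-product compatible} class $\mathcal{S}$, homomorphism indistinguishability over $\mathcal{S}$ is equivalent to the existence of a sum-preserving vector space isomorphism $\mathbb{C}\mathcal{S}_G\to\mathbb{C}\mathcal{S}_H$ sending $\boldsymbol{F}_G\mapsto\boldsymbol{F}_H$ (\cref{thm:inn-prod-comp}). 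The genuinely new ingredient is the proof that $\mathcal{P}_k$ is inner-product compatible (\cref{lem:inn-pro-comp}), which requires a nontrivial ``rotation'' construction expressing $\tr(\boldsymbol{R}^*\boldsymbol{S})$ as $\soe$ of another member of $\mathcal{P}_k$, handled separately for even and odd $k$; without this (or an equivalent device) your spanning step $(iv)\Rightarrow(iii)$ has no engine, and the further properties of an algebraic $k$-equivalence (multiplicativity, Schur compatibility, trace preservation) must then be recovered from the combinatorial operations and from sum-preservation plus inner-product compatibility. So the proposal is a reasonable roadmap of the same route the paper takes, but the missing lemmas are exactly the mathematical content of the theorem.
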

In particular, the $k^{\text{th}}$-level of the NPA hierarchy is feasible for the $(G,H)$-isomorphism game if and only $G$ and $H$ are homomorphism indistinguishable over the graph class $\calP_k$. 
Here, $\mathcal{P}_k$ is a bounded-treewidth minor-closed class of planar graphs, which we construct by interpreting the NPA systems of equations in light of a correspondence between combinatorics (bilabelled graphs) and algebra (homomorphism tensors) which underpins many recent results regarding homomorphism indistinguishability \cite{david-laura,grohe_homomorphism_2022,rattan_weisfeiler_2023,roberson-seppelt-arxiv}.

As a corollary of \cref{thm:main-theorem}, we devise
a randomized polynomial-time algorithm for deciding the exact feasibility of each level of the NPA hierarchy.
To that end, we first show that the graph classes $\calP_k$ are minor-closed and of bounded treewidth, which is a graph parameter roughly measuring how far is a graph from a tree.
Hence, a recent result from \cite{seppelt_algorithmic_2024} implies that, for each $k \in \mathbb{N}$, there exists a randomized polynomial-time algorithm for deciding homomorphism indistinguishability over~$\mathcal{P}_k$.
We strengthen this result by making the dependence on the parameter~$k$ effective.
%

\begin{thm}[restate=thmAlgorithm, label=thm:algorithmic-aspects]
    There exists a randomized algorithm which decides, given graphs $G$ and $H$ and an integer $k\geq 1$,
    whether the $k^{\text{th}}$-level of the NPA hierarchy for the $(G,H)$-isomorphism game is feasible.
    The algorithm always runs in time $n^{O(k)} k^{O(1)}$ for $n \coloneqq \max \{\lvert V(G) \rvert , \lvert V(H) \rvert  \}$, accepts all YES-instances, and accepts NO-instances with probability less than one half.
\end{thm}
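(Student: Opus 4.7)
The plan is to combine the Main Theorem (Theorem 1.1) with an effective version of the algorithmic framework of Seppelt~\cite{seppelt_algorithmic_2024} for deciding homomorphism indistinguishability over minor-closed classes of bounded treewidth. By the equivalence of items~(i) and~(iv) in Theorem~\ref{thm:main-theorem}, it suffices to design a randomized algorithm which, given $G$, $H$, and $k$, decides whether $G \cong_{\calP_k} H$ in time $n^{O(k)} k^{O(1)}$.

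The first step is a structural analysis of $\calP_k$. From its construction (graphs $\calP_k$ arise from interpreting level-$k$ NPA monomials as bilabelled planar graphs with at most $2k$ labels, glued along their label sets), one should extract the explicit bounds that $\calP_k$ is minor-closed and that every $F \in \calP_k$ admits a tree decomposition of width $O(k)$ which can, moreover, be \emph{read off} directly from the bilabelled-graph representation of $F$. This sidesteps having to invoke Bodlaender's algorithm with its enormous treewidth-dependent constants, and is the crucial ingredient that allows us to replace the opaque $k$-dependence of~\cite{seppelt_algorithmic_2024} by the explicit $n^{O(k)} k^{O(1)}$ bound.

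The second step follows the general strategy of~\cite{seppelt_algorithmic_2024}: homomorphism indistinguishability over a bounded-treewidth minor-closed class reduces to checking that two finite-dimensional vectors of homomorphism counts, indexed by a polynomial-sized set of test graphs from $\calP_k$, coincide up to a linear relation that can be certified by a randomized polynomial identity test (Schwartz--Zippel). Each individual count $\hom(F,G)$ for $F \in \calP_k$ can be computed in $n^{O(k)}$ time by the standard dynamic program on tree decompositions, since the width of $F$ is $O(k)$ and we already possess an explicit decomposition. Summing the costs over the polynomially many test graphs, and over the polynomially many random trials needed to drive the error below $1/2$, gives the claimed runtime. Completeness (YES-instances always accepted) and the one-sided error bound are inherited from the polynomial identity testing subroutine.

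The main obstacle, and the new content beyond~\cite{seppelt_algorithmic_2024}, will be to carry out every step of this scheme with running-time dependence on $k$ that is explicitly polynomial rather than hidden in the class parameter. Concretely, one has to (a)~enumerate a spanning set of test graphs from $\calP_k$ whose size and individual vertex-count are polynomial in $k$, (b)~exhibit their tree decompositions of width $O(k)$ directly from the combinatorial recipe defining $\calP_k$, and (c)~verify that the linear-algebraic check in the algorithm of~\cite{seppelt_algorithmic_2024} can itself be executed in time polynomial in~$k$ and $n^{O(k)}$. Because $\calP_k$ is \emph{constructively} described by planar bilabelled graphs of label-complexity~$k$, each of (a)--(c) should be a bookkeeping argument rather than a fundamentally new idea; the payoff is an algorithm whose runtime is uniform in the parameter, as required by the statement.
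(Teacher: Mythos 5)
Your first step --- invoking \cref{thm:main-theorem} to replace feasibility of the $k^{\text{th}}$ NPA level by $G \cong_{\mathcal{P}_k} H$ --- agrees with the paper. The gap is in your second step. You assume that indistinguishability over $\mathcal{P}_k$ reduces to comparing homomorphism counts on a \emph{polynomial-sized} set of test graphs from $\mathcal{P}_k$, each of which can then be counted by an $n^{O(k)}$ treewidth dynamic program. No such polynomial bound is available, and the paper's own effective witness bound (\cref{thm:witness-function}) is $f_k(n) = 2k \cdot 4^{n^{2k}}$: a distinguishing graph may need exponentially many (in $n^{2k}$) vertices, so enumerating witnesses and running the standard DP on each is hopeless within $n^{O(k)}k^{O(1)}$, and the counts themselves can have exponentially many digits. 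Your appeal to Schwartz--Zippel also does not match the situation --- there is no polynomial identity being tested; the only role of randomness in the paper is to pick random primes so that a nonzero integer difference of (possibly astronomically large) homomorphism counts survives reduction modulo $p$, which is a fingerprinting argument calibrated precisely by the witness bound of \cref{thm:witness-function}.

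The paper's algorithm never enumerates test graphs at all. It works entirely on the algebraic side: it computes, over $\mathbb{F}_p$, a basis of the space $S = \spn\{\boldsymbol{P}_G \oplus \boldsymbol{P}_H \mid \boldsymbol{P}\in\mathcal{P}_k\}$ by a fixed-point closure procedure (\cref{alg-modular}), seeded with the $O(k)$-size generating sets $\mathcal{B}_k^P, \mathcal{B}_k^S$ of \cref{lem:basal-graphs} and closed under the algebraic counterparts of the operations defining $\mathcal{P}_k$ (Schur product with generators, matrix product, cyclic label permutation); since $\dim S \le 2n^{2k}$, this terminates after polynomially many iterations, and acceptance is the check $\boldsymbol{1}_G^T v \boldsymbol{1}_G = \boldsymbol{1}_H^T v \boldsymbol{1}_H$ on the basis. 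Making this correct over $\mathbb{Z}$ requires the new ingredients you would still need to supply: the complexity measure $\nu$, the chain-collapse argument (\cref{lem:chain-collapse}) bounding $\nu$ of the needed graphs by $2n^{2k}$, the resulting size bound of \cref{lem:nu-size}, and the prime-sampling analysis. Your proposed shortcut of reading tree decompositions off the bilabelled structure is fine but orthogonal; treewidth and minor-closedness only serve to invoke the fixed-$k$ result of \cite{seppelt_algorithmic_2024}, which the paper explicitly cannot use because $k$ is part of the input.
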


\subsection{Proof Techniques}

The main algebraic-combinatorial tools we use are bilabelled graphs and their homomorphism tensors. \emph{Bilabelled graphs} are graphs with distinguished vertices which are said to carry labels. To a bilabelled graph $\boldsymbol{F} = (F, u, v)$ and an unlabelled graph $G$, one can associated the \emph{homomorphism tensor} $\boldsymbol{F}_G \in \mathbb{N}^{V(G) \times V(G)}$ such that $\boldsymbol{F}_G(x, y)$ for $x,y \in V(G)$ is the number of homomorphisms $h \colon F \to G$ such that $h(u) = x$ and $h(v) = y$.
For example, the bilabelled graph $\boldsymbol{A} = (A, u, v)$ with $V(A) = \{u,v\}$ and $E(A) = \{uv\}$ denotes the complete $2$-vertex graph each of whose vertices $u$ and $v$ carry one label.
In the case of $\boldsymbol{A}$, the matrix $\boldsymbol{A}_G$ is just the adjacency matrix of $G$.
The fruitfulness of this construction stems from a correspondence between combinatorial operations on bilabelled graphs and algebraic operations on homomorphism tensors.
For example, the \emph{matrix product} $(\boldsymbol{F}_1)_G \cdot (\boldsymbol{F}_2)_G$ yields the homomorphism tensor of the bilabelled graph obtained by taking the \emph{series composition} of $\boldsymbol{F}_1$ and $\boldsymbol{F}_2$.

We prove \cref{thm:main-theorem} by interpreting the NPA relaxation as a system of equations whose constraints involve homomorphism tensors and algebraic operations.
Applying the aforementioned algebro-combinatorial correspondence,
the graph class $\mathcal{P}_k$ is then obtained by reading the constraints as a description of a graph class via bilabelled graphs and combinatorial operations.
To that end, we first give various reformulations of the NPA systems of equations as listed in \cref{ref1,ref2} of \cref{thm:main-theorem}.
The proofs follow the outline below:

\begin{itemize}
    \item In \cref{thm:main-theorem-1}, we first show that a principal submatrix of a certificate for the $k^{\text{th}}$-level of the NPA hierarchy for the $(G,H)$-isomorphism game can be interpreted as a the Choi matrix of a completely positive map from $M_{V(G)^k}(\complex)$ to $M_{V(H)^k}(\complex)$ with certain properties. Such a completely positive map is known as a level-$k$ quantum isomorphism map. We also show that the Choi matrix of such a level-$k$ quantum isomorphism map (uniquely) extends to a certificate for the $k^{\text{th}}$-level of the NPA hierarchy for quantum isomorphism, thus showing that the existence of such a map is equivalent to the feasibility of the $k^{\text{th}}$-lvel of the NPA hierarchy.  

    \item In \cref{thm:main-theorem-2}, restrictions of the aforementioned completely positive maps are then shown to be algebra homomorphisms mapping homomorphism tensors of $\calQ_k$ for $G$ to homomorphism tensors of $\calQ_k$ for $H$, where $\calQ_k$ is the set of \emph{atomic graphs} which form the building blocks for the graph class $\calP_k$. Such an algebra homomorphism is called an \emph{algebraic $k$-equivalence.}

    \item Lastly, in \cref{thm:main-theorem-3}, we use the correspondence between combinatorial operations on graphs and algebraic operations on their homomorphism tensors to show that the existence of an algebraic $k$-equivalence is equivalent to homomorphism indistinguishability over $\calP_k$.  
\end{itemize}

The overall structure of the proof of \cref{thm:main-theorem} is inspired by the proof of the main result of \cite{roberson-seppelt-arxiv}, however, due to the ``noncommutativity" of the NPA hierarchy, additional difficulties arise. For example, the proof of inner-product compatibility of the graph classes $\calL_k$ from~\cite{roberson-seppelt-arxiv} is trivial, while proving the same for our graph classes $\calP_k$ requires a creative construction in \cref{lem:inn-pro-comp}.    

We take a more thorough look at the graph classes $\calP_k$ in \cref{sub-sec:graph-class}. We show that the set of underlying graphs of the union of the bilabelled graph classes $\calP_k$ is the set of all planar graphs. By combining this result with the convergence of the NPA hierarchy, 
we derive a substantially simpler proof of the homomorphism indistinguishability characterization of quantum isomorphism given in \cite{david-laura}. 
In particular, we are able to avoid the use of heavy machinery for dealing with compact quantum groups, which formed one of the main ingredients of the proof in \cite{david-laura}.   

\begin{cor}\label{thm:quant-iso}
	For graphs $G$ and $H$, the following are equivalent:
	\begin{enumerate}
		\item for every $k$, there is a solution for the $k^{\text{th}}$-level of the NPA hierarchy for the $(G,H)$-isomor\-phism game,
		\item $G$ and $H$ are homomorphism indistinguishable over $\bigcup_{k \in \mathbb{N}} \mathcal{P}_k$, the class of all planar graphs,
		\item $G$ and $H$ are quantum isomorphic.
	\end{enumerate}
\end{cor}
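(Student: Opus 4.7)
The plan is to combine \cref{thm:main-theorem} with two further ingredients: the convergence of the NPA hierarchy from~\cite{Navascues_2008} and the structural fact, to be established in \cref{sub-sec:graph-class}, that the underlying graphs of $\bigcup_{k \in \mathbb{N}} \mathcal{P}_k$ form exactly the class of planar graphs. With these in hand, the three-way equivalence reduces to a short assembly of ingredients that we will already have at our disposal.

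First, for the equivalence of (1) and (2), I would apply \cref{thm:main-theorem} at each level $k$, yielding that the $k$-th level of the NPA hierarchy is feasible if and only if $G \cong_{\mathcal{P}_k} H$. Conjoining these equivalences over all $k$, statement (1) is equivalent to $G$ and $H$ being homomorphism indistinguishable over $\bigcup_{k \in \mathbb{N}} \mathcal{P}_k$, which by the planarity identification is exactly (2).

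The implication (3) $\Rightarrow$ (1) is immediate: any perfect tensor-product quantum strategy furnishes, by taking the appropriate moments $\langle \psi | M_w | \psi \rangle$ of its measurement operators against the shared state (for words $w$ of length at most $2k$), a feasible certificate for the $k$-th NPA level. This is the standard observation underlying the NPA relaxation.

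The direction (1) $\Rightarrow$ (3) is where the NPA convergence theorem enters. Simultaneous feasibility of all levels yields, via a weak-$*$ compactness argument on the moment data, a state on the universal $*$-algebra of the $(G,H)$-isomorphism game whose GNS representation produces a perfect commuting-operator strategy; for the graph isomorphism game, this is known to coincide with a tensor-product quantum strategy, so $G \cong_q H$. I expect the only subtle point here to be the passage from the commuting-operator to the tensor-product model, which is nevertheless substantially lighter than the compact-quantum-group machinery deployed in~\cite{david-laura}: it can be packaged entirely inside standard NPA theory together with the known properties of the isomorphism $*$-algebra, rather than reconstructing quantum automorphism groups.
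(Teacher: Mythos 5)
Most of your outline coincides with the paper's proof: the equivalence of (1) and (2) is obtained exactly as you describe, by applying \cref{thm:main-theorem} at every level and then invoking the identification of $\bigcup_{k}\soe(\calP_k)$ with the class of all planar graphs (this identification is the substantive remaining work, carried out in \cref{sub-sec:graph-class} via planarity of the generators, minor-closedness, the $k\times k$ grids, and the grid-minor theorem, culminating in \cref{cor:alt-proof}); and (3)$\Rightarrow$(1) is the standard Gram-matrix-of-moments construction, as in the paper. The equivalence of (1) and (3) is the content of \cref{prop:conv-npa-graph}, whose proof in \cref{sec:npa} is precisely the weak-$*$ compactness argument you sketch, combined with the synchronous NPA convergence result of Russell.

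The gap is in your final step. The GNS/compactness argument yields a perfect strategy in the \emph{commuting-operator} model, and that is already all that is needed: the paper works throughout in the commuting-operator model, so ``quantum isomorphic'' in item (3) means exactly the existence of such a strategy, and the Man\v{c}inska--Roberson planar characterization being reproved here is likewise the commuting-operator statement. Your additional claim that, for the graph isomorphism game, a perfect commuting-operator strategy ``is known to coincide with a tensor-product quantum strategy'' is not only unnecessary but false: Slofstra exhibited a linear binary constraint system game with a perfect commuting-operator strategy and no perfect tensor-product strategy, and the BCS-to-isomorphism-game reduction of \cite{ATSERIAS2019289} transfers perfect strategies model by model, so there exist graphs that are quantum-commuting isomorphic but not tensor-product quantum isomorphic. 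Since the NPA hierarchy converges to the commuting-operator model, feasibility of all levels characterizes the commuting-operator relation and no passage to the tensor-product model is available (nor required). Deleting that sentence and citing \cref{prop:conv-npa-graph} (equivalently, Russell's synchronous NPA convergence plus Banach--Alaoglu) makes your argument coincide with the paper's.
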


The proof of \cref{thm:algorithmic-aspects} relies on the characterisation of the NPA hierarchy as homomorphism indistinguishability relations from \cref{thm:main-theorem}.
That is, instead of attempting to solve the NPA systems of equations, the algorithm decides whether the input graphs are homomorphism indistinguishable over the graph class $\mathcal{P}_k$.
This is done by computing a basis for the finite-dimensional vector space spanned by the homomorphism tensors 
of the bilabelled graphs in $\mathcal{P}_k$. To that end, the algorithm utilises the inductive definition of the graph class $\mathcal{P}_k$ in terms of generators and combinatorial operations. 
Being linear or bilinear, their algebraic counterparts can be used to efficiently compute this basis via a fixed-point procedure, which terminates after polynomially many steps. 
Randomization is only necessary to deal with integers which would otherwise grow to exponential size in the course of the computation.
To overcome this issue, the algorithm relies on linear algebra over finite fields of prime characteristics which are chosen at random.

\subsection{Outline}
The paper begins by covering some of the preliminaries in \cref{sec:preliminaries}. We introduce bilabelled graphs and homomorphism tensors in \cref{sec:hom-ten}, which are our main tools to relate the algebraic question of feasibility of the NPA hierarchy to a combinatorial problem of homomorphism indistinguishability. We then introduce the graph isomorphism game and a suitable version of the NPA hierarchy for the graph isomorphism game in \cref{sec:quant-iso-npa}. The proof of \cref{thm:main-theorem} will be broken down into a series of simpler theorems, namely \cref{thm:main-theorem-1,thm:main-theorem-2,thm:main-theorem-3}. 
This is done in \cref{sec:npa-hom-tensors} and the beginning of \cref{sec:hom-ind}. In \cref{sub-sec:graph-class} we study the graph classes $\calP_k$ in more detail and finish the proof of \cref{thm:quant-iso}, thus providing the promised alternative proof of the main result of~\cite{david-laura}. In \cref{sec:alg-aspects}, we show that there is a polynomial time randomized algorithm for each fixed level of the NPA hierarchy for quantum isomorphism.

\section{Preliminaries}\label{sec:preliminaries}

All graphs in this article are undirected, finite, and without multiple edges, unless stated otherwise. A graph is said to be \emph{simple} if it does not contain any loops. 
A \emph{homomorphism} $h \colon F \to G$ from a graph $F$ to a graph $G$ is a map $V(F) \to V(G)$ such that for all $uv \in E(F)$ it holds that $h(u)h(v) \in E(G)$. Note that this implies that any vertex in $F$ carrying a loop must be mapped to a vertex carrying a loop in $G$.

Write $\hom(F, G)$ for the number of homomorphisms from $F$ to $G$. For a family of graphs $\mathcal{F}$ and graphs $G$ and $H$ we write $G \cong_{\mathcal{F}} H$ if $G$ and $H$ are \emph{homomorphism indistinguishable over $\mathcal{F}$}, i.e., if $\hom(F, G) = \hom(F, H)$ for all $F \in \mathcal{F}$.
Since the graphs $G$ and $H$ into which homomorphisms are counted are throughout assumed to be simple, looped graphs in $\mathcal{F}$ can generally be disregarded as they do not admit any homomorphisms into simple graphs.
For background on homomorphism indistinguishability, see \cite{seppelt_homomorphism_2024}.

Given a finite set $\Omega$, we denote the symmetric group containing all permutations of $\Omega$ by $\mathfrak{S}_{\Omega}$. Given a natural number $k \in \mathbb{N}$, we will denote We use $\mathfrak{S}_{k}$ to denote $\mathfrak{S}_{[k]}$. We shall use $\cyclicpermutations$ to denote the group of cyclic permutations of the set $\Omega \coloneq \{1,\dots,k, 2k, \dots, k+1\}$, i.e., the cyclic subgroup of $\mathfrak{S}_{\Omega}$ generated by the transposition $(12)$. 

\subsection{Bilabelled Graphs and Homomorphism Tensors}\label{sec:hom-ten}

We recall the following definitions from \cite{david-laura,grohe_homomorphism_2022}.
    
For $k, l \geq 1$, a \emph{$(k, l)$-bilabelled graph} is a tuple $\boldsymbol{F} = (F, \boldsymbol{u}, \boldsymbol{v})$ where $F$ is a graph and $\boldsymbol{u} = (u_1, \dots , u_k) \in V(F)^k$, $\boldsymbol{v} = (v_1, \dots, v_l) \in V(F)^l$.
The $\boldsymbol{u}$ are the \emph{in-labelled vertices} of $\boldsymbol{F}$ while the $\boldsymbol{v}$ are the \emph{out-labelled vertices} of $\boldsymbol{F}$. Given a graph $G$, the \emph{homomorphism tensor} of $\boldsymbol{F}$ for $G$ is $\boldsymbol{F}_G \in \mathbb{C}^{V(G)^k \times V(G)^l}$ whose $(\boldsymbol{x}, \boldsymbol{y})$-entry 
is the number of homomorphisms $h \colon F \to G$ such that $h(\boldsymbol{u}_i) = \boldsymbol{x}_i$ and $h(\boldsymbol{v}_j) = \boldsymbol{y}_j$ for all $i \in [k]$ and $j \in [l]$. 

For a $(k, l)$-bilabelled graph $\boldsymbol{F} = (F, \boldsymbol{u}, \boldsymbol{v})$, write $\soe(\boldsymbol{F}) \coloneqq F$ for the underlying unlabelled graph of $\boldsymbol{F}$. 
If $k = l$, write $\tr(\boldsymbol{F})$ for the unlabelled graph underlying the graph obtained from $\boldsymbol{F}$ by identifying $\boldsymbol{u}_i$ with $\boldsymbol{v}_i$ for all $i\in [l]$.
For $\sigma \in \mathfrak{S}_{k+l}$, write $\boldsymbol{F}^\sigma \coloneqq (F, \boldsymbol{x}, \boldsymbol{y})$ where $\boldsymbol{x}_i \coloneqq (\boldsymbol{uv})_{\sigma(i)}$ and  $\boldsymbol{y}_{j-k} \coloneqq (\boldsymbol{uv})_{\sigma(j)}$ for all $1 \leq i \leq k < j \leq k+l$, i.e.\@ $\boldsymbol{F}^\sigma$ is obtained from $\boldsymbol{F}$ by permuting the labels according to $\sigma$.
We also define $\boldsymbol{F}^* \coloneqq (F, \boldsymbol{v}, \boldsymbol{u})$ the graph obtained by swapping in- and out-labels. 
    
Let $\boldsymbol{F} = (F, \boldsymbol{u}, \boldsymbol{v})$ and $\boldsymbol{F}' = (F', \boldsymbol{u}', \boldsymbol{v}')$ be $(k,l)$-bilabelled and $(m,n)$-bilabelled, respectively.
If $l = m$, write $\boldsymbol{F} \cdot \boldsymbol{F}'$ for the $(k, n)$-bilabelled graph obtained from them by \emph{series composition}, whose underlying unlabelled graph obtained from the disjoint union of $F$ and $F'$ by identifying $\boldsymbol{v}_i$ and $\boldsymbol{u}'_i$ for all $i \in [l]$. 
Multiple edges arising in this process are removed. 
The in-labels of $\boldsymbol{F} \cdot \boldsymbol{F}'$ lie on $\boldsymbol{u}$, the out-labels on $\boldsymbol{v}'$.

If $k = m$ and $l = n$ 
write $\boldsymbol{F} \odot \boldsymbol{F}'$ for the \emph{parallel composition} of $\boldsymbol{F}$ and $\boldsymbol{F}'$. 
The underlying unlabelled graph of the $(k, l)$-bilabelled graph $\boldsymbol{F} \odot \boldsymbol{F}'$ is the graph obtained from the disjoint union of $F$ and $F'$ by identifying $\boldsymbol{u}_i$ with $\boldsymbol{u}'_i$ and $\boldsymbol{v}_j$ with $\boldsymbol{v}'_j$  for all $i \in [k]$ and $j \in [l]$. 
Again, multiple edges are dropped. The in-labels of $\boldsymbol{F} \odot \boldsymbol{F}'$ lie on $\boldsymbol{u}$, the out-labels on $\boldsymbol{v}$.
    
As observed in \cite{david-laura,grohe_homomorphism_2022}, the benefit of these combinatorial operations is that they have an algebraic counterpart. Formally, for all graphs $G$ and all $(l, l)$-bilabelled graphs $\boldsymbol{F}, \boldsymbol{F}'$, it holds that $\soe(\boldsymbol{F}_G) = \hom(\soe \boldsymbol{F}, G) $, $\tr(\boldsymbol{F}_G) = \hom(\tr \boldsymbol{F}, G)$, $(\boldsymbol{F}_G)^\sigma = (\boldsymbol{F}^\sigma)_G$, $(\boldsymbol{F} \cdot \boldsymbol{F}')_G = \boldsymbol{F}_G \cdot \boldsymbol{F}'_G$, and $(\boldsymbol{F} \odot \boldsymbol{F}')_G = \boldsymbol{F}_G \odot \boldsymbol{F}'_G$, where $\soe(X)$ denotes the sum of elements, $\tr$ denotes the trace, $\cdot$ denotes matrix multiplication and $\odot$ denotes Schur product.  

Slightly abusing notation, we say that two graphs $G$ and $H$ are homomorphism indistinguishable over a family of bilabelled graphs $\mathcal{S}$, in symbols $G \cong_{\mathcal{S}} H$ if $G$ and $H$ are homomorphism indistinguishable over the family $\{\soe \boldsymbol{S} \mid \boldsymbol{S} \in \mathcal{S}\}$ of the underlying unlabelled graphs of the $\boldsymbol{S} \in \mathcal{S}$.

We conclude this subsection by defining the notion of a minor of bilabelled graphs.

\begin{definition}[\cite{roberson-seppelt-arxiv}]
  Let $\boldsymbol{M}$ and $\boldsymbol{F}$ be $(k,l)$-bilabelled graph. Then, $M$ is said to be a \emph{bilabelled minor} of $\boldsymbol{F}$, written $\boldsymbol{M} \leq \boldsymbol{F}$, if it can be obtained from $\boldsymbol{F}$ by applying a sequence of the following bilabelled minor operations: 
    \begin{enumerate}
        \item edge contraction
        \item edge deletion
        \item deletion of unlabelled vertices.
    \end{enumerate}
\end{definition}

We note that the if a bilabelled graph $\boldsymbol{M}$ is a bilabelled minor of $\boldsymbol{F}$, then $M$ is a minor of $F$ \cite[Lemma 4.12]{roberson-seppelt-arxiv}. Similarly, if $\boldsymbol{F}$ is a bilabelled graph and $M$ is a graph such that $M$ is a minor of $F$, there exists a bilablled graph $\boldsymbol{M'}$ such that $M'$ is the union of $M$ with some unlabelled vertices~\cite[Lemma 4.13]{roberson-seppelt-arxiv}.


\begin{note}[drawing bilabelled graphs]
  Throughout the paper we will depict bilabelled graphs as follows.
  To draw a bilabelled graph $\boldsymbol{F} = (F, \boldsymbol{u}, \boldsymbol{v})$, we draw the graph $F$ and attach $i^{\text{th}}$ input ``wire'', depicted in grey, to $u_i$ and $j^{\text{th}}$ output wire to $v_i$.
  Vertices can have multiple input/output wires attached to them.
  The input and output wires extend to the far right and far left of the picture respectively. 
  Finally, in order to indicate which input/output wire is which, we draw them so that they occur in numerical order (first at the top) at the edges of the picture.
  See Fig.~\ref{fig:bilabelled}(a) for an example of a bilabelled graph and Fig.~\ref{fig:bilabelled}(b) for an illustration of series and parallel composition, defined above.
\end{note}

\begin{figure}
  \centering
  \includegraphics{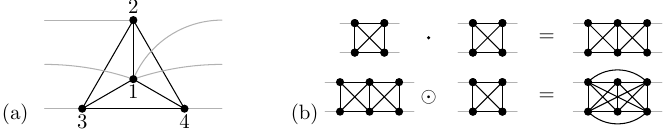} 
  \caption{(a) $\boldsymbol{F} = (K_4, (2,1,3), (1,1,4))$. (b) And example of series and parallel composition.}
  \label{fig:bilabelled}
\end{figure}

\subsection{The Graph Isomorphism Game}\label{sec:quant-iso-npa}

We begin this section by defining the graph isomorphism game. We refer the reader to \cite{ATSERIAS2019289} for more details. 

\begin{definition}
    Let $G$, $H$ be two graphs with $\card{V(G)} = \card{V(H)}$. The $(G,H)$-isomorphism game is cooperative game involving two players Alice and Bob, and the verifier. It is played as follows: 
\begin{itemize}
    \item In each round of the game, the verifier chooses two vertices $g, g' \in V(G)$ (sampled uniformly and independently) and sends them to Alice and Bob respectively. 
    \item Alice and Bob are not allowed to communicate during a round of the game, i.e. after receiving their question from the verifier. However, they are free to strategize before the game starts. 
    \item Alice and Bob respond with vertices $h, h' \in V(H)$, respectively. 
    \item The verifier decides whether Alice and Bob win or lose this round of the game based on the rule function or predicate $V(h,h'\mid g,g')$ which is given by
    \begin{align*}
        V(h,h'\mid g,g') = \begin{cases}
            1 &\quad \text{if } \rel_G(g,g') = \rel_H(h,h') \\
            0 &\quad \text{otherwise}
        \end{cases}
    \end{align*}
\end{itemize}
  Here $\rel_G(g,g') = \rel_H(h, h')$ if and only if both pairs of vertices are adjacent, non-adjacent, or identical.
\end{definition}

Alice and Bob can employ a wide array of strategies to play this game. A \emph{deterministic strategy} is one that involves two functions $f_A, f_B\colon  V(G) \to V(H)$, where Alice and Bob respond with $f_A(g), f_B(g')$ when presented with the questions $g,g'$ respectively. A \emph{perfect strategy} is one that always wins the game for Alice and Bob. 

The predicate necessitates that $f_A = f_B$ for any perfect deterministic strategy $(f_A, f_B)$. Indeed, if $g = g'$, one sees that $f_A(g) = f_B(g)$ for all $g \in V(G)$. Similarly, it is not too difficult to show that $f_A = f_B = f$ is a graph isomorphism, if it is a perfect deterministic strategy. It is also clear that if Alice and Bob answer according to some isomorphism $f\colon V(G) \to V(H)$, then this is a perfect strategy. Hence, the perfect deterministic strategies of the $(G,H)$-isomorphism game are in bijective correspondence with the isomorphisms of $G$ and $H$.

Alice and Bob can also make use of \emph{probabilistic strategies}. A probabilistic strategy is given by joint conditional probability distributions $(p(h,h'\mid g,g'))_{g,g'\in V(G), h,h'\in V(H)}$. Probabilistic strategies are often called \emph{correlations} in the literature. We shall denote the set of correlations indexed by the input sets $X, Y$ and the output sets $A, B$ by $P(X, Y, A, B)$. In  the case where $X = Y$ and $A = B$, we shall use the notation $P(X, A)$ instead. 

It is easy to see that a probabilistic strategy is a perfect strategy for the $(G,H)$-isomorphism game if and only if $V(h,h'\mid g,g') = 0$ implies that $p(h,h'\mid g,g') = 0$ for all $g,g'\in V(G)$ and $h,h' \in V(H)$. We also note that any perfect probabilistic strategy for the $(G,H)$-isomorphism game satisfies that $p(h,h'\mid g,g) = 0$ for all $h \neq h' \in V(H)$ and $g \in V(H)$. Such correlations are known as \emph{synchronous correlations}.

\subsubsection{Quantum Isomorphism of Graphs}

Throughout this paper, we shall be working with what is known as the commuting operator model of quantum mechanics. As discussed earlier, Strategies making use of a shared state are known as \emph{quantum strategies}.
We refer the reader to \cite{Nielsen_Chuang_2010} for a more thorough overview of fundamentals of quantum information.
\begin{definition}
A \emph{quantum strategy} for the $(G,H)$-isomorphism game consists of a shared \emph{state}, i.e. a unit vector $\psi \in \mathcal{H}$ in some Hilbert space $\calH$ and self-adjoint projections $\{E_{g,h}\}_{g\in V(G), h\in V(H)} \subseteq \mathcal{B(H)}$ and $\{F_{g',h'}\}_{g'\in V(G), h'\in V(H)} \subseteq \mathcal{B(H)}$ such that: 
\begin{itemize}
    \item $\sum_{h} E_{gh} = I_{\mathcal{H}}$ and $\sum_{h} F_{g,h} =I_{\mathcal{H}}$ 
    \item $E_{g,h}F_{g',h'} = F_{g',h'}E_{g,h}$ for all $g,g'\in V(G)$ and $h,h'\in V(H)$. 
\end{itemize}

When Alice and Bob receive the questions $g,g'$ from the verifier, they use the PVMs $\{E_{g,h}\}_{h\in V(H)}$ and $\{F_{g',h'}\}_{h'\in V(H)}$ to perform a measurement on their part of the shared state $\psi$. In this case, the conditional probability of outputting $h,h'$ when Alice and Bob receive the questions $g,g'$ is given by $p(h,h'\mid g,g') = \braket{\psi, E_{g,h}F_{g',h'} \psi}$. 
\end{definition}

Two graphs $G$, $H$ are said to be \emph{quantum isomorphic}, written $G \cong_q H$ if there is a perfect quantum strategy for the $(G,H)$-isomorphism game. Two graphs that are isomorphic are also quantum isomorphic as all deterministic 
can be realised as quantum strategies. However, the converse is not true, i.e. there exist non-isomorphic graphs that are quantum isomorphic. Once again, we refer the reader to \cite{ATSERIAS2019289} for further details. 

The existence of a perfect quantum strategy for the $(G,H)$-isomorphism game is characterized by the following proposition:

\begin{prop}[\cite{ATSERIAS2019289}]\label{prop:perf-strat-graph}
    Let $G$, $H$ be two graphs with $\card{V(G)} = \card{V(H)}$. Then, the $G \cong_q H$ if and only if there exist a Hilbert space $\mathcal{H}$ and self-adjoint projections $\{E_{g,h}\}_{g \in V(G), h \in V(H)}$ such that: 
    \begin{enumerate}[label = \roman*.]
        \item $\sum_{h \in V(H)} E_{g,h} = I_{\mathcal{H}}$ for all $g \in V(G)$, 
        \item $\sum_{g \in V(G)} E_{g,h} = I_{\mathcal{H}}$ for all $h \in V(H)$, 
        \item $E_{g,h} E_{g',h'} = 0$ if $V_{G,H}(h,h'\mid g,g') = 0$.
    \end{enumerate}
    Given the first two conditions, the last condition is equivalent to: 
    $$(A(G)\otimes I_{\mathcal{H}})[E_{g,h}]_{g \in V(G), h \in V(H)} = [E_{g,h}]_{g \in V(G), h \in V(H)}(A(H) \otimes I_\mathcal{H}).$$ Furthermore, $G$ and $H$ are isomorphic if and only if there exist mutually commuting self-adjoint projections $\{E_{g,h}\}_{g \in V(G), h \in V(H)}$ satisfying the above conditions.   
\end{prop}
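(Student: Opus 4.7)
My plan is to reduce a perfect quantum-commuting strategy to a tracial representation and then to read off conditions (i)--(iii). The predicate $V_{G,H}$ enforces two symmetric vanishing patterns on any perfect correlation: \emph{synchronicity}, since $V(h,h'\mid g,g)=0$ for $h\neq h'$, and \emph{bisynchronicity}, since $V(h,h\mid g,g')=0$ for $g\neq g'$. Invoking the standard representation theorem for synchronous quantum-commuting correlations (cf.~\cite{ATSERIAS2019289}), I may assume that the strategy is given by a single family $\{E_{g,h}\}$ of projections in a tracial von Neumann algebra $(\mathcal{M},\tau)$ with $\sum_h E_{g,h}=I$ and $p(h,h'\mid g,g')=\tau(E_{g,h}E_{g',h'})$. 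This gives (i) for free.

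Conditions (ii) and (iii) then drop out of the positivity identity
\[
\tau(E_{g,h}E_{g',h'}) \;=\; \tau(E_{g',h'}E_{g,h}E_{g',h'}) \;\geq\; 0
\]
combined with the faithfulness of $\tau$. Whenever $V(h,h'\mid g,g')=0$ the trace vanishes, so the positive self-adjoint operator $E_{g',h'}E_{g,h}E_{g',h'}$ is zero and hence $E_{g,h}E_{g',h'}=0$; this is (iii). Applying (iii) to bisynchronous tuples yields $E_{g,h}E_{g',h}=0$ for $g\neq g'$, so $P_h\coloneqq\sum_g E_{g,h}$ is self-adjoint and idempotent, hence a projection. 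Summing (i) over $g$ gives $\sum_h P_h=nI$ with $n=\lvert V(G)\rvert$, and since $P_h\leq I$ for each $h$, the identity $\sum_h(I-P_h)=0$ (a sum of positive operators) forces every $P_h=I$, yielding (ii). I expect this derivation of (ii) to be the main obstacle, since it is the step where bisynchronicity---and hence the $\lvert V(G)\rvert=\lvert V(H)\rvert$ assumption---genuinely enters.

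For the converse, given $\{E_{g,h}\}$ in a unital $C^*$-algebra satisfying (i)--(iii), I would equip it with a faithful tracial state (via standard GNS arguments) and build the commuting-operator strategy with correlation $p(h,h'\mid g,g')=\tau(E_{g,h}E_{g',h'})$; condition (iii) forces every predicate-forbidden probability to vanish. The intertwiner reformulation follows by unpacking $(A(G)\otimes I)[E_{g,h}] = [E_{g,h}](A(H)\otimes I)$ into the block identity $\sum_{g'} A(G)_{g,g'}\,E_{g',h} = \sum_{h'} A(H)_{h,h'}\,E_{g,h'}$; the equivalence with (iii) under (i) and (ii) is a routine case analysis on $\rel_G(g,g')$ versus $\rel_H(h,h')$, with (i) and (ii) used to isolate individual products $E_{g,h}E_{g',h'}$. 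Finally, if the $E_{g,h}$ pairwise commute, the $C^*$-algebra they generate is commutative and hence isomorphic to $C(X)$ for some compact Hausdorff $X$; evaluation at any $x\in X$ produces a $\{0,1\}$-valued solution to (i)--(iii), which by inspection is the permutation matrix of a genuine graph isomorphism $G\to H$.
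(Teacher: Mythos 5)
The paper itself does not prove this proposition; it is imported verbatim from \cite{ATSERIAS2019289}, so there is no internal proof to compare against, and your attempt has to stand on its own. Your forward direction is essentially the standard route and is correct: perfect strategies for the isomorphism game are synchronous, the representation theorem for synchronous quantum-commuting correlations gives one family of projections in a von Neumann algebra with a faithful trace, faithfulness converts $\tau(E_{g,h}E_{g',h'})=0$ into $E_{g,h}E_{g',h'}=0$ via $E_{g',h'}E_{g,h}E_{g',h'}=(E_{g,h}E_{g',h'})^{*}(E_{g,h}E_{g',h'})$, and your counting argument for (ii) (bisynchronous orthogonality makes $P_h:=\sum_g E_{g,h}$ a projection, $\sum_h P_h=nI$ by (i), hence each $P_h=I$) is exactly right and correctly identifies where $\lvert V(G)\rvert=\lvert V(H)\rvert$ enters. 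The adjacency-intertwining equivalence and the commutative Gelfand argument for the classical statement are only sketched (and you omit the trivial direction that an isomorphism yields commuting $0/1$ projections), but those are acceptable at proposal level.

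The genuine gap is the converse of the main equivalence. You propose to take the $C^{*}$-algebra generated by projections satisfying (i)--(iii) on an arbitrary Hilbert space and ``equip it with a faithful tracial state (via standard GNS arguments)''. GNS goes from states to representations, not the other way, and a unital $C^{*}$-algebra need not admit any tracial state at all (e.g.\ $B(\mathcal{H})$ for infinite-dimensional $\mathcal{H}$, or Cuntz algebras). Since this paper works in the commuting-operator model, you cannot fall back on the finite-dimensional trick of using the normalised trace together with a maximally entangled state. Moreover, a perfect strategy for the isomorphism game is necessarily synchronous, and synchronous commuting-operator correlations are exactly the tracial ones, so the existence of a trace on the algebra generated by the $E_{g,h}$ is not a technical convenience but the entire content of this implication. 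The characterization in the cited source carries a (faithful) tracial state as part of the data; the trace-free statement ``projections on some Hilbert space satisfying (i)--(iii) imply $G\cong_q H$'' is true, but the known proofs go through the quantum-group/bi-Galois machinery of Brannan et al.\ --- precisely the kind of heavy machinery this paper is designed to avoid --- and certainly not through ``standard GNS arguments''. To repair the proposal you must either assume a tracial state in the hypothesis (and then your left/right-multiplication GNS construction of a commuting strategy is fine), or give an actual argument that the isomorphism game algebra admits a trace whenever it has a Hilbert-space representation; as written, this step fails.
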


\subsubsection{A Synchronous NPA Hierarchy for Quantum Isomorphism of Graphs}

We shall focus on the NPA hierarchy for the graph isomorphism game in this subsection. We give a more detailed exposition of the NPA hierarchy in \cref{sec:npa}. First, we introduce some notation that will be used throughout the paper and in~\cref{sec:npa}. 

Let $\Sigma = V(G) \times V(H)$. The set of all finite strings in the alphabet $\Sigma$ will be denoted by $\Sigma^*$. Similarly if $k \in \mathbb{N}$, $\Sigma^k$ and $\Sigma^{\leq k}$ denote the set of all strings of length $k$ and the set of all strings of length at most $k$ respectively. $\epsilon$ is used to denote the empty string in $\Sigma^*$, and we use the following operations on strings: 
\begin{itemize}
    \item Let $s \in \Sigma^*$ be a string given by $s_1\cdots s_k$, we denote the \emph{reversed string} $s_k\dots s_1$ by $s^R$. 
    \item Let $s,t \in \Sigma^*$ be strings given by $s_1\cdots s_k$ and $t_1\cdots t_l$ respectively. We denote their \emph{concatenation} $s_1\cdots s_k t_1 \cdots t_l$ by $st$.  
\end{itemize}

With this notation in our hand, we now give the definition of a certificate for the $k^{\text{th}}$-level of the NPA hierarchy for the $(G, H)$-isomorphism game.


\begin{definition}\label{def:cert-npa}
    For two graphs $G$, $H$ with $\card{V(G)} = \card{V(H)}$, a \emph{certificate} for the $k^{\text{th}}$ level of the NPA hierarchy of the $(G,H)$-isomorphism game is a positive semidefinite matrix $\mathcal{R} \in M_{\Sigma^{\leq k}}(\mathbb{C})$ such that: 
    \begin{enumerate}[label = (\roman*)]\label{eq:npa-for-graph-iso}
        \item \label{qiso-item-1}$\mathcal{R}_{\epsilon, \epsilon} = 1$
        \item \label{qiso-item-2} $\mathcal{R}_{s,t} = \mathcal{R}_{s',t'}$ for all $r,s,r',s' \in \Sigma^{\leq k}$, such that $s^Rt \sim (s')^R(t')$, where  we define $\sim$ to be the coarsest equivalence relation satisfying the following two properties:
\begin{itemize}
    \item For each $x,a \in X \times A$, $s(x,a)(x,a)t \sim s(x,a)t$ for all $s,t \in \Sigma^*$.
    \item $st \sim ts$ for all words $s,t \in \Sigma^*$.
\end{itemize} 
        \item \label{qiso-item-3} For all words $s,s' \in \Sigma^{\leq k}$, $g\in V(G)$, $h \in V(H)$ such that $s(g,h)s' \in \Sigma^{\leq k}$, one has 
        \begin{align}
            \sum_{h' \in V(H)} \mathcal{R}_{s(g,h')s', t} = \mathcal{R}_{ss', t} \text{ for all } t\in \Sigma^{\leq k}\label{eq:qiso31}\\
            \sum_{g' \in V(G)} \mathcal{R}_{s(g',h)s', t} = \mathcal{R}_{ss', t} \text{ for all } t\in \Sigma^{\leq k}\\
            \sum_{h' \in V(H)} \mathcal{R}_{t, s(g,h')s'} = \mathcal{R}_{t, ss'} \text{ for all } t\in \Sigma^{\leq k}\\
            \sum_{g' \in V(G)} \mathcal{R}_{t, s(g',h)s'} = \mathcal{R}_{t, ss'} \text{ for all } t\in \Sigma^{\leq k}
        \end{align}
        \item \label{qiso-item-4} For all $s, t \in\Sigma^{\leq k}$, if there exist consecutive $gh, g'h' \in \Sigma$ in $s^Rt$ such that $\rel(g,g') \neq \rel(h,h')$ then $\mathcal{R}_{s, t} = 0$. 
        \end{enumerate}
\end{definition}

We shall say that the $k^{\text{th}}$-level of the NPA hierarchy for the $(G,H)$-isomorphism game is \emph{feasible} if there exists a certificate for the $k^{\text{th}}$-level of the NPA hierarchy for the $(G,H)$-isomorphism game.  

Given a perfect strategy for the $(G,H)$-isomorphism game, where $\{E_{g,h}\}_{g \in V(G), h \in V(H)}$ are Alice's measurement operators and the shared state is $\psi$, we can construct a certificate for any level of the NPA hierarchy in the following way: for level-$k$, we consider the Gram matrix of the vectors $\{E_{g_1,h_1}\cdots E_{g_l, h_l}\psi\}_{g_1h_1\cdots g_lh_l \in \Sigma^{\leq k}}$.

The following proposition shows that the NPA hierarchy for the graph isomorphism game converges, i.e there is a solution for each level of the NPA hierarchy for the $(G,H)$-isomorphism if an only if $G \cong_{q} H$. A proof of the statement can be found in \cref{sec:npa}. 
\Cref{prop:conv-npa-graph} gives two of the implications in \cref{thm:quant-iso}, specifically it shows that items (1) and (3) are equivalent.

\begin{prop}\label{prop:conv-npa-graph}
    Let $G$, $H$ be two graphs with $\card{V(G)} = \card{V(H)}$, and $\Sigma = V(G)\times V(H)$. Then, the $(G,H)$-isomorphism game has a perfect quantum strategy if and only if for each $k \in \mathbb{N}$, there exists a certificate for the $k^{\text{th}}$-level of the NPA hierarchy. 
\end{prop}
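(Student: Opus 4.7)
The forward direction is a direct Gram-matrix construction. Given a perfect quantum strategy with PVMs $\{E_{g,h}\}$ acting on a Hilbert space $\mathcal{H}$ and shared state $\psi$, for each $s = (g_1,h_1)\cdots(g_l,h_l) \in \Sigma^{\leq k}$ set $\xi_s \coloneqq E_{g_1,h_1}\cdots E_{g_l,h_l}\psi$ and define $\mathcal{R}_{s,t} \coloneqq \braket{\xi_s, \xi_t}$. Positive semidefiniteness is immediate because $\mathcal{R}$ is a Gram matrix, and $\mathcal{R}_{\epsilon,\epsilon} = \braket{\psi,\psi} = 1$ gives \ref{qiso-item-1}. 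PVM completeness $\sum_{h'}E_{g,h'}=I$ yields the row-sum rules of \ref{qiso-item-3} by inserting a resolution of the identity, and the column-sum rules follow analogously from Bob's PVMs (which in the synchronous setting may be taken to be the transposes of Alice's). Condition \ref{qiso-item-4} is a consequence of the orthogonality $E_{g,h}E_{g',h'}=0$ ensured by \cref{prop:perf-strat-graph}. For \ref{qiso-item-2}, the idempotency $E_{g,h}^2 = E_{g,h}$ handles the first part of $\sim$, and the cyclic relation $st \sim ts$ reflects the tracial property: since perfect strategies for a synchronous game are synchronous, after the standard reduction $\psi$ becomes a trace vector for the algebra generated by the $E_{g,h}$.

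For the backward direction, assume a certificate $\mathcal{R}^{(k)}$ exists for every $k$. Positive semidefiniteness together with $\mathcal{R}^{(k)}_{\epsilon,\epsilon}=1$ forces $|\mathcal{R}^{(k)}_{s,t}| \leq 1$ by Cauchy--Schwarz, so a standard diagonal argument on the countable set $\Sigma^* \times \Sigma^*$ extracts a subsequence converging pointwise to a matrix $\mathcal{R}^\infty \colon \Sigma^* \times \Sigma^* \to \mathbb{C}$ such that every finite principal submatrix remains PSD and all four conditions of \cref{def:cert-npa} hold on the whole of $\Sigma^*$.

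Next, apply a GNS-type construction. Let $\mathcal{A}$ be the universal unital $*$-algebra generated by self-adjoint idempotents $\{e_{g,h}\}_{(g,h)\in\Sigma}$ subject to the PVM relations $\sum_h e_{g,h} = \sum_g e_{g,h} = 1$ and the adjacency constraints of \cref{prop:perf-strat-graph}. Define a linear functional $\tau\colon \mathcal{A} \to \mathbb{C}$ by
\[
\tau(e_{s_1}\cdots e_{s_l}) \coloneqq \mathcal{R}^\infty_{\epsilon,\; s_1\cdots s_l}.
\]
Conditions \ref{qiso-item-2}--\ref{qiso-item-4} are precisely what is needed for $\tau$ to descend to a well-defined tracial state: the cyclic part $st \sim ts$ of $\sim$ forces traciality, the idempotency part makes $\tau$ vanish on relations involving $e_{g,h}^2 - e_{g,h}$, the sum rules \ref{qiso-item-3} annihilate the PVM relations, and \ref{qiso-item-4} annihilates the adjacency relations. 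Positive semidefiniteness of $\mathcal{R}^\infty$ then makes $\tau$ positive. The GNS representation of $(\mathcal{A},\tau)$ produces a Hilbert space $\mathcal{H}_\tau$ with a cyclic trace vector $\psi_\tau$ and a $*$-representation $\pi$, so that $E_{g,h} \coloneqq \pi(e_{g,h})$ are self-adjoint projections satisfying the hypotheses of \cref{prop:perf-strat-graph}. The commutation required for a commuting-operator strategy is obtained by letting Bob's projections act by right multiplication in the standard tracial realization, yielding $G \cong_q H$.

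The main technical obstacle is the bookkeeping in the GNS step: one must check that the defining relations of the universal algebra $\mathcal{A}$ are matched one-to-one with the quotienting forced by \ref{qiso-item-2}--\ref{qiso-item-4}, so that $\tau$ is simultaneously well-defined, positive, and tracial on the correct algebra, and that the resulting left/right actions genuinely furnish a commuting-operator strategy satisfying the adjacency predicate.
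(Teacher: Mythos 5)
Your proposal is correct in outline, and for the backward direction it takes a genuinely different route from the paper. The paper also pads each level-$k$ certificate to an element of the unit ball of $l^{\infty}(\Sigma^*)$ and extracts a weak-$*$ limit via Banach--Alaoglu (your pointwise diagonal argument is the same compactness step in different clothing), but it then stops: the truncations of the limit are certificates that all agree on their $\Sigma^{\leq 1}$ entries, hence determine a single synchronous correlation, and the conversion of this consistent family into a perfect commuting-operator strategy is delegated to the synchronous NPA convergence theorem quoted as \cref{lem:conv-npa-corr}. You instead re-derive that theorem: the tracial functional $\tau$ on the universal $*$-algebra, its positivity from the PSD kernel, and the commuting left/right GNS actions are precisely the content of the cited result, so your argument is more self-contained at the cost of re-proving known material; the bookkeeping you flag (that $\tau$ annihilates the two-sided ideal generated by the idempotency, PVM-sum and orthogonality relations, and that left multiplication by a self-adjoint idempotent is a contraction on the GNS space, so the relations pass to the representation) does go through by standard arguments. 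Two small corrections: in the compactness step, Cauchy--Schwarz together with $\mathcal{R}^{(k)}_{\epsilon,\epsilon}=1$ does not by itself bound $\lvert\mathcal{R}^{(k)}_{s,t}\rvert$; you first need the diagonal bound $\mathcal{R}^{(k)}_{s,s}\leq 1$, which follows inductively by combining the idempotency part of \ref{qiso-item-2} with Cauchy--Schwarz. And in the forward direction the column-sum rules of \ref{qiso-item-3} do not require Bob's operators: \cref{prop:perf-strat-graph} already provides $\sum_{g} E_{g,h}=I$ alongside $\sum_{h} E_{g,h}=I$, so both sum rules come from Alice's projections alone, while your identification of the traciality of $\psi$ on Alice's algebra as the source of the cyclic part of \ref{qiso-item-2} is exactly the point the paper leaves implicit in its Gram-matrix construction.
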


\section{NPA Hierarchy and Homomorphism Tensors}\label{sec:npa-hom-tensors}

In this section, we shall give several algebro-combinatorial reformulations of the existence of a certificate for the $k^{\text{th}}$ level of the NPA hierarchy of the graph isomorphism game. These reformulations allow us to interpret the existence of a certificate for the $k^{\text{th}}$ level of the NPA hierarchy as a homomorphism indistinguishability characterization. 
Most of the proofs here are linear-algebraic in nature. 
We shall discuss the graph-theoretic implications of our results in the next section.   

\subsection{Quantum Isomorphism Relaxation via Completely Positive Maps}

In this subsection, we show that a principal submatrix of a certificate for the $k^{\text{th}}$-level of the NPA hierarchy for quantum isomorphism can be interpreted as the Choi matrix (see Appendix~\ref{app:linalg}) of a completely positive map from $M_{V(G)^k}(\mathbb{C})$ to $M_{V(G)^k}(\mathbb{C})$ satisfying certain properties. We then show that the Choi matrix of such a completely positive map uniquely extends to a certificate for the $k^{\text{th}}$-level of the NPA hierarchy for quantum isomorphism. Thus feasibility of the $k^{\text{th}}$-level of the NPA hierarchy for the graph isomorphism game is equivalent to the existence of a completely positive map satisfying certain properties. In order to make these notions precise, we now introduce the atomic bilabelled graphs, which will be the building blocks of the graph classes that we shall construct in the next section.

\begin{definition}\label{def:atomic-q}
  Let $k\geq 1$.
  A $(k, k)$-bilabelled graph $\boldsymbol{F} = (F, \boldsymbol{u}, \boldsymbol{v})$  is atomic if all its vertices are labelled. 
  We define two classes of atomic graphs (see \cref{fig:atomic_graphs}):
  \begin{itemize}
    \item The class $\mathcal{Q}_k^P$ is the class of $(k, k)$-bilabelled minors of the graph
      $\boldsymbol{C}_k \coloneqq (C_k, (1,\dots, k), (k+1,\dots,2k))$ with $V(C_k) = [2k]$ and $E(C_k) = \{\{i,i+1\} : i\in[2k], i\neq k,2k\} \cup \{\{1, k+1\}, \{k, 2k\}\}.$
    \item The class $\mathcal{Q}_k^S$ is the class of $(k,k)$-bilabelled graphs obtained by taking minors of the graph
      $\boldsymbol{M}_k \coloneqq (M_k, (1,\dots, k), (k+1,\dots,2k))$ with $V(M_k) = [2k]$ and $E(M_k) = \{\{i, i+k\} : i\in[k]\}$
  \end{itemize}
  Finally, we define $\mathcal{Q}_k := \mathcal{Q}_k^P \cup \mathcal{Q}_k^S$.

  We also define two specific atomic graphs $\boldsymbol{J}_k, \boldsymbol{I}_k \in \calQ_k$ for each $k \in \mathbb{N}$. 
  \begin{itemize}
      \item $\boldsymbol{J}_k \coloneq (J_k, (1, \dots, k), (k+1, \dots 2k))$ with $V(J_k) = [2k]$ and $E(J_k) = \emptyset$
      \item $\boldsymbol{I}_k \coloneq (I_k, (1, \dots k), (1, \dots k))$ with $V(I_k) = [k]$ and $E(I_k) = \emptyset$
  \end{itemize}
  In other words, $\boldsymbol{J}_k$ and $\boldsymbol{I}_k$ are obtained from $\boldsymbol{M}_k$ by deleting and contracting all the edges respectively. These atomic graphs are special in the sense that one has $(\boldsymbol{J}_k)_G = J$ and $(\boldsymbol{I}_k)_G = I$ for all graphs $G$, where $I$ and $J$ are the identity and all ones matrix respectively. We also note that $\boldsymbol{J}_k \in \calQ_k^S, \calQ_k^P$ and $\boldsymbol{I}_k \in \calQ_k^S$, but $\boldsymbol{I}_k \notin \calQ_k^P$.  
\end{definition}

We can now define quantum isomorphism maps using the homomorphism tensors of the graphs in $\mathcal{Q}_k$:

\begin{definition}\label{def:kqisomap}
    Let $G$ and $H$ be graphs and $k \in \mathbb{N}$. A linear map $\Phi: \mathbb{C}^{V(G)^k \times V(G)^k} \to \mathbb{C}^{V(H)^k \times V(H)^k}$ is a \emph{level $k$ quantum isomorphism map} from $G$ to $H$ if the following holds:
    \begin{align}
        &\quad\Phi \text{ is completely positive,} \label{eq2:q-iso-map-1} \\ 
        &\quad\Phi(\boldsymbol{F}_G \odot X) = \boldsymbol{F}_H \odot \Phi(X) \text{ for all } \boldsymbol{F} \in \mathcal{Q}_k^P \text{ and }X \in \mathbb{C}^{V(G)^k \times V(G)^k},\label{eq2:q-iso-map-2} \\ 
        &\quad\Phi(I) = I = \Phi^*(I), \label{eq2:q-iso-map-3a} \\ 
        &\quad\Phi(J) = J = \Phi^*(J), \label{eq2:q-iso-map-3b} \\ 
        &\quad \Phi(\boldsymbol{F}_G) = \boldsymbol{F}_H \text{ for all } F \in \mathcal{Q}_k,\label{eq2:q-iso-map-4} \\
        &\quad\Phi(\boldsymbol{X}^{\sigma}) = \Phi(\boldsymbol{X})^{\sigma}\text{ for all }\sigma \in \cyclicpermutations. \label{eq2:q-iso-map-5}
    \end{align}
\end{definition}

\begin{figure}[t]
  \centering
  \includegraphics{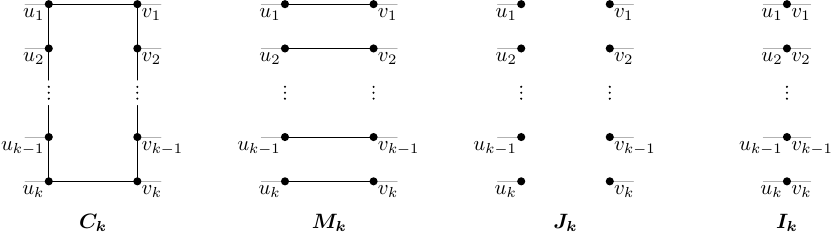}
  \caption{Atomic graphs.}
  \label{fig:atomic_graphs}
\end{figure}
    
\begin{note}\label{rem:cptpunital}
Note that conditions \eqref{eq2:q-iso-map-1} and \eqref{eq2:q-iso-map-3a} state that any level $k$ quantum isomorphism map is a completely positive, trace-preserving, unital map. Also note that condition \eqref{eq2:q-iso-map-4} implies the part of conditions \eqref{eq2:q-iso-map-3a} and \eqref{eq2:q-iso-map-3b} on the map $\Phi$, and thus we are being a bit redundant. However, we do need to explicitly include the conditions on the adjoint $\Phi^*$. Lastly, we are being a bit imprecise since we should really write $\boldsymbol{I}_G, \boldsymbol{I}_H, \boldsymbol{J}_G$, and $\boldsymbol{J}_H$. 
\end{note}

We now prove some lemmas that will be useful for our proof that existence of a level $k$ quantum isomorphism map is equivalent to the existence of a certificate for the $k^\text{th}$ level of the NPA hierarchy.

\begin{lem}\label{lem:Mzeros}
    Let $G$ and $H$ be graphs, $k \in \mathbb{N}$, and suppose that $\Phi$ is a level $k$ quantum isomorphism map from $G$ to $H$. If $M$ is the Choi matrix of $\Phi$, then $M_{s,t} = 0$ if any cyclic permutation of $s^Rt$ contains consecutive terms $gh$ and $g'h'$ such that $\rel(g,g') \ne \rel(h,h')$.
\end{lem}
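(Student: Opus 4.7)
The plan is to exploit equation \eqref{eq2:q-iso-map-2} from the definition of a level-$k$ quantum isomorphism map. Fix $s, t \in \Sigma^k$ and, under the identification $\Sigma^k \cong V(G)^k \times V(H)^k$, write $s \leftrightarrow (x_s, y_s)$ and $t \leftrightarrow (x_t, y_t)$, so that the Choi matrix entry reads $M_{s,t} = \Phi(E_{x_s, x_t})_{y_s, y_t}$. Specializing \eqref{eq2:q-iso-map-2} to the matrix unit $X = E_{x_s, x_t}$ and reading off the $(y_s, y_t)$-entry on both sides yields the identity
\[
\bigl(\boldsymbol{F}_G(x_s, x_t) - \boldsymbol{F}_H(y_s, y_t)\bigr)\, M_{s,t} \;=\; 0
\]
for every $\boldsymbol{F} \in \mathcal{Q}_k^P$. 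Hence it suffices to exhibit, given a cyclic mismatch in $s^R t$, a single $\boldsymbol{F} \in \mathcal{Q}_k^P$ whose tensors disagree at $(x_s, x_t)$ and $(y_s, y_t)$.

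To supply such witnesses I would, for each edge $e \in E(C_k)$, define $\boldsymbol{F}^e$ as the minor of $\boldsymbol{C}_k$ obtained by deleting every other edge, and $\boldsymbol{F}^{e,c}$ as the further minor in which $e$ is contracted. Since $\boldsymbol{C}_k$ has all vertices labelled and these operations preserve that property, both graphs are atomic $(k,k)$-bilabelled and therefore lie in $\mathcal{Q}_k^P$. Viewing $(x_s, x_t) \in V(G)^{2k}$ as a tuple indexed by $[2k]$ so that labels $1,\dots,k$ select coordinates of $x_s$ and labels $k+1,\dots,2k$ select coordinates of $x_t$, a direct unwinding shows that $\boldsymbol{F}^e_G(x_s, x_t)$ is the indicator that the two coordinates indexed by the endpoints of $e$ are adjacent in $G$, while $\boldsymbol{F}^{e,c}_G(x_s, x_t)$ is the indicator that they are equal. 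The crucial combinatorial point is that the $2k$ edges of $C_k$ are in bijection with the $2k$ cyclically consecutive positions of $s^R t$: the edges $\{j,j+1\}$ for $j \in [k-1]$ encode consecutive pairs inside $s$; the edges $\{k+j, k+j+1\}$ for $j \in [k-1]$ encode those inside $t$; the edge $\{1, k+1\}$ encodes the boundary pair $(s_1, t_1)$; and $\{k, 2k\}$ encodes the cyclic wrap-around pair $(t_k, s_k)$.

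The final step is a short case analysis. Suppose a cyclically consecutive pair in $s^R t$ corresponding to edge $e$ has $\rel_G(g, g') \neq \rel_H(h, h')$. Each of $\rel_G, \rel_H$ takes one of three disjoint values—adjacent, equal, or distinct-and-nonadjacent—and the pair of indicators $(\boldsymbol{F}^e, \boldsymbol{F}^{e,c})$ distinguishes all three (distinct-nonadjacent being the simultaneous failure of both). Hence at least one of $\boldsymbol{F}^e, \boldsymbol{F}^{e,c}$ satisfies $\boldsymbol{F}_G(x_s, x_t) \neq \boldsymbol{F}_H(y_s, y_t)$, and the identity from the first paragraph forces $M_{s,t} = 0$. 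The main obstacle is really just bookkeeping: setting up the bijection between cyclic positions of $s^R t$ and edges of $C_k$ cleanly, and checking that the two indicators suffice to detect all six possible relational mismatches. Once that is in place, the algebraic core is a one-line specialization of \eqref{eq2:q-iso-map-2} applied to a matrix unit, and notably equations \eqref{eq2:q-iso-map-3a}--\eqref{eq2:q-iso-map-5} are not needed for this step.
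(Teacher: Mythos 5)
Your proof is correct and follows essentially the same route as the paper: specializing \eqref{eq2:q-iso-map-2} to matrix units is precisely the direction of Lemma~\ref{lem:linalg_parallel} that the paper invokes, and your single-edge and contracted-edge minors of $\boldsymbol{C}_k$ are exactly the witnesses in $\mathcal{Q}_k^P$ that the paper leaves implicit. (The paper's citation of condition \eqref{eq2:q-iso-map-1} there is evidently a slip for \eqref{eq2:q-iso-map-2}, consistent with your observation that only the Schur-compatibility condition is needed.)
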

\begin{proof}
    This follows immediately from condition~\eqref{eq2:q-iso-map-1} and Lemma~\ref{lem:linalg_parallel}.
\end{proof}

\begin{lem}\label{lem:multiplicative}
    Let $G$ and $H$ be graphs and $k \in \mathbb{N}$. If $\Phi$ is a level $k$ quantum isomorphism map from $G$ to $H$, then $\Phi(\boldsymbol{F}_GX) = \boldsymbol{F}_H\Phi(X)$ and $\Phi(X\boldsymbol{F}_G) = \Phi(X)\boldsymbol{F}_H$ for all $\boldsymbol{F} \in \mathcal{Q}_k$ and $X \in \mathbb{C}^{V(G)^k \times V(G)^k}$.
\end{lem}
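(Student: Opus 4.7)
The plan is to show that for every $\boldsymbol{F} \in \mathcal{Q}_k$, the matrix $\boldsymbol{F}_G$ lies in the multiplicative domain of $\Phi$. Recall that for a unital completely positive map $\Phi$, the multiplicative domain is
\[ \{A : \Phi(A^*A) = \Phi(A)^*\Phi(A) \text{ and } \Phi(AA^*) = \Phi(A)\Phi(A)^*\}, \]
and any element $A$ of this set satisfies $\Phi(AX) = \Phi(A)\Phi(X)$ and $\Phi(XA) = \Phi(X)\Phi(A)$ for all $X$. Since $\Phi$ is unital and completely positive by \cref{rem:cptpunital}, the Kadison-Schwarz inequality $\Phi(A^*A) \geq \Phi(A)^*\Phi(A)$ holds, and once $\boldsymbol{F}_G$ is shown to lie in the multiplicative domain, the lemma follows immediately from $\Phi(\boldsymbol{F}_G) = \boldsymbol{F}_H$ (condition \eqref{eq2:q-iso-map-4}).

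The key observation is that because $\boldsymbol{F}$ is atomic, every homomorphism $F \to G$ is determined by the images of its labelled vertices, so the entries of $\boldsymbol{F}_G$ all lie in $\{0,1\}$. Hence
\[ \tr(\boldsymbol{F}_G^*\boldsymbol{F}_G) = \sum_{\boldsymbol{x},\boldsymbol{y}}(\boldsymbol{F}_G)_{\boldsymbol{x},\boldsymbol{y}}^2 = \sum_{\boldsymbol{x},\boldsymbol{y}}(\boldsymbol{F}_G)_{\boldsymbol{x},\boldsymbol{y}} = \soe(\boldsymbol{F}_G), \]
and likewise $\tr(\boldsymbol{F}_H^*\boldsymbol{F}_H) = \soe(\boldsymbol{F}_H)$. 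The condition $\Phi^*(J) = J$ in \eqref{eq2:q-iso-map-3b} yields $\soe(\Phi(X)) = \langle J, \Phi(X)\rangle = \langle \Phi^*(J), X\rangle = \soe(X)$ for every $X$, so applying this to $X = \boldsymbol{F}_G$ together with $\Phi(\boldsymbol{F}_G) = \boldsymbol{F}_H$ gives $\tr(\boldsymbol{F}_H^*\boldsymbol{F}_H) = \soe(\boldsymbol{F}_H) = \soe(\boldsymbol{F}_G) = \tr(\boldsymbol{F}_G^*\boldsymbol{F}_G)$.

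By Kadison-Schwarz, $\Phi(\boldsymbol{F}_G^*\boldsymbol{F}_G) - \boldsymbol{F}_H^*\boldsymbol{F}_H$ is positive semidefinite, and trace-preservation \eqref{eq2:q-iso-map-3a} combined with the trace identity from the previous step shows this difference has trace zero, forcing it to vanish. The same argument applied to $\boldsymbol{F}^*$---which also lies in $\mathcal{Q}_k$ because $\mathcal{Q}_k$ is closed under the $^*$ operation (both $\boldsymbol{C}_k$ and $\boldsymbol{M}_k$ are preserved up to relabelling of in-/out-tuples), so that $\Phi(\boldsymbol{F}_G^*) = (\boldsymbol{F}^*)_H = \boldsymbol{F}_H^*$---yields $\Phi(\boldsymbol{F}_G\boldsymbol{F}_G^*) = \boldsymbol{F}_H\boldsymbol{F}_H^*$. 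Thus $\boldsymbol{F}_G$ is in the multiplicative domain of $\Phi$. I do not anticipate any serious obstacle; the only subtle verification is the $^*$-closedness of $\mathcal{Q}_k$, which is immediate from the symmetry of the defining graphs. The main conceptual point is that atomicity collapses $\|\boldsymbol{F}_G\|_2^2$ to $\soe(\boldsymbol{F}_G)$, which is exactly the quantity preserved by $\Phi^*(J) = J$, allowing the Kadison-Schwarz plus trace-zero argument to run cleanly without needing any decomposition of $\boldsymbol{F}^* \cdot \boldsymbol{F}$ into elements of $\mathcal{Q}_k$.
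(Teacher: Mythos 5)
Your proof is correct, but it takes a genuinely different route from the paper's. The paper reduces the lemma to the general fact in \cref{lem:lingalg_series}: for completely positive, trace-preserving, unital maps, $\Phi_1(X)=Y$ together with the existence of a second such map $\Phi_2$ with $\Phi_2(Y)=X$ forces the intertwining identities. This obliges the paper to exhibit, for every $\boldsymbol{F}\in\mathcal{Q}_k$, a CPTP unital map sending $\boldsymbol{F}_H$ back to $\boldsymbol{F}_G$: for $\boldsymbol{F}\in\mathcal{Q}_k^P$ this is $\Phi^*$ (via \eqref{eq2:q-iso-map-2} and \cref{lem:linalg_parallel}), while for $\boldsymbol{F}\in\mathcal{Q}_k^S$ the paper first deduces cospectrality of $A_G$ and $A_H$ from the graph $\hat{\boldsymbol{F}}\in\mathcal{Q}_k^P\cap\mathcal{Q}_k^S$ and then invokes a unitary conjugation. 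You bypass the reverse map entirely by verifying the multiplicative-domain conditions directly: Kadison--Schwarz gives $\Phi(\boldsymbol{F}_G^*\boldsymbol{F}_G)\succeq \boldsymbol{F}_H^*\boldsymbol{F}_H$, and the matching trace identity comes from the observation that atomicity makes $\boldsymbol{F}_G$ and $\boldsymbol{F}_H$ zero-one matrices, so $\tr(\boldsymbol{F}_G^*\boldsymbol{F}_G)=\soe(\boldsymbol{F}_G)$, together with sum-preservation from $\Phi^*(J)=J$ and trace-preservation from $\Phi^*(I)=I$. At bottom both arguments are ``Schwarz inequality plus a trace computation'' --- the proof of \cref{lem:lingalg_series} is the Kraus-operator version of the same trick, with $\Phi_2$ supplying the trace inequality that your $0/1$ computation supplies --- but your version is shorter and sidesteps the cospectrality detour, at the price of exploiting atomicity and thus being less general than the paper's auxiliary lemma (which is fine, since only atomic graphs occur here). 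One simplification: the $*$-closedness of $\mathcal{Q}_k$ that you flag as the subtle point is indeed true (the label-swapping map is an automorphism of both $\boldsymbol{C}_k$ and $\boldsymbol{M}_k$), but it is not needed: every completely positive map satisfies $\Phi(A^*)=\Phi(A)^*$, so $\Phi(\boldsymbol{F}_G^*)=\boldsymbol{F}_H^*$ already follows from \eqref{eq2:q-iso-map-4}, and the second multiplicative-domain condition then runs exactly as the first.
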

\begin{proof}
    Since by definition we have that $\Phi(\boldsymbol{F}_G) = \boldsymbol{F}_H$ for all $\boldsymbol{F} \in \mathcal{Q}_k$, by Lemma~\ref{lem:lingalg_series} it suffices to show that for each $\boldsymbol{F} \in \mathcal{Q}_k$, there is some completely positive, trace-preserving, unital map $\Phi'$ such that $\Phi'(\boldsymbol{F}_H) = \boldsymbol{F}_G$. By condition \eqref{eq2:q-iso-map-2} and Lemma~\ref{lem:linalg_parallel}, it follows that $\Phi^*(\boldsymbol{F}_H) = \boldsymbol{F}_G$ for all $F \in \mathcal{Q}_k^P$, so only the $\boldsymbol{F} \in \mathcal{Q}_k^S$ remain.
    
    Let $\hat{\boldsymbol{F}}$ denote the atomic graph which is the minor of the perfect matching bilabeled graph $\mathbf{M}$ obtained by deleting all edges except the one between the first input and first output. Note that this means that $\hat{\boldsymbol{F}}_G = A_G \otimes J^{\otimes k-1}$, where $A_G$ is the adjacency matrix of $G$, and similarly $\hat{\boldsymbol{F}}_H = A_H \otimes J^{\otimes k-1}$. Also note that $\hat{\boldsymbol{F}} \in \mathcal{Q}_k^P \cap \mathcal{Q}_k^S$, and therefore we have that $\Phi^*(\hat{\boldsymbol{F}}_H) = \hat{\boldsymbol{F}}_G$ by the above. Thus by Lemma~\ref{lem:lingalg_series} we have that $\hat{\boldsymbol{F}}_G$ and $\hat{\boldsymbol{F}}_H$ are cospectral, and therefore $A_G$ and $A_H$ are cospectral.

    Now for any $\boldsymbol{F} \in \mathcal{Q}_k^S$, the matrix $\boldsymbol{F}_G$ is a tensor product where each tensor factor is $I$, $J$, or $A_G$, and $\boldsymbol{F}_H$ is the same tensor product except with $A_G$ replaced with $A_H$ everywhere. Since $A_G$ and $A_H$ are cospectral, it follows that all such $\boldsymbol{F}_G$ and $\boldsymbol{F}_H$ are cospectral. Therefore, for any $\boldsymbol{F} \in \mathcal{Q}_k^S$, there is some unitary matrix $U$ such that $U^* \boldsymbol{F}_H U = \boldsymbol{F}_G$. Of course the map $X \mapsto U^* X U$ is completely positive, trace-preserving, and unital and therefore we have completed the proof.
\end{proof}


For our last lemma we need to introduce some notation. For disjoint subsets $R_G, R_H \subseteq [k]$, and $s = g_1h_1\ldots g_kh_k \in (V(G) \times V(H))^k$, we denote by $s(R_G,R_H)$ the set of all strings $s' = g'_1h'_1\ldots g'_kh'_k$ such that $g'_i = g_i$ for $i \notin R_G$ and $h'_i = h_i$ for $i \notin R_H$. For example, if $k = 3$ and $s = g_1h_1g_2h_2g_3h_3$, then 
\[s(\{1\},\{3\}) = \left\{g'_1h_1g_2h_2g_3h'_3 \in (V(G) \times V(H))^3 : g'_1 \in V(G), \ h'_3 \in V(H).\right\}\]
Additionally, for any subset $R \in [k]$ we denote by $s\setminus R$ the substring of $s$ obtained by removing its $i^\text{th}$ entry for each $i \in R$.

\begin{lem}\label{lem:Msums}
    Let $G$ and $H$ be graphs and suppose that $\Phi$ is a level $k$ quantum isomorphism map from $G$ to $H$. If $M$ is the Choi matrix of $\Phi$, then for any $s,t \in (V(G) \times V(H))^k$, disjoint subsets $S_G,S_H \subseteq [k]$, and disjoint subsets $T_G,T_H \subseteq [k]$, we have that
    \begin{equation}\label{eq:Msumlem}
        \sum_{s' \in s(S_G,S_H), \ t' \in t(T_G,T_H)} M_{s',t'}
    \end{equation}
    depends only on the equivalence class of the relation $\sim$ that $(s\setminus (S_G \cup S_H))^R(t\setminus (T_G \cup T_H))$ lies in.
\end{lem}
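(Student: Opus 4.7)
The plan is to rewrite the sum as a single entry of $\Phi$ applied to a tensor-structured matrix, and then use the algebraic properties of $\Phi$ to establish invariance under $\sim$.

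For $R \subseteq [k]$, let $\boldsymbol{J}^R \in \calQ_k^S$ denote the atomic graph obtained from $\boldsymbol{M}_k$ by deleting the edges at positions $i \in R$ and contracting the remaining edges, so that $(\boldsymbol{J}^R)_G = \bigotimes_{j=1}^{k} Z_j$ with $Z_j = J$ if $j \in R$ and $Z_j = I$ otherwise. Using linearity of $\Phi$ and the tensor decomposition $\ket{\boldsymbol{g}}\bra{\boldsymbol{g}'} = \bigotimes_j \ket{g_j}\bra{g'_j}$, the summation over the free $g$-variables indexed by $S_G, T_G$ is realised by sandwiching an elementary matrix $\ket{\tilde{\boldsymbol{g}}}\bra{\tilde{\boldsymbol{g}}'}$ between $(\boldsymbol{J}^{S_G})_G$ and $(\boldsymbol{J}^{T_G})_G$, where $\tilde{g}_j = g_j$ for $j \notin S_G$ (arbitrary for $j \in S_G$) and similarly for $\tilde{g}'$. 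The $h$-summations are realised analogously by multiplication by $(\boldsymbol{J}^{S_H})_H$ and $(\boldsymbol{J}^{T_H})_H$ on the $H$-side and evaluation at $(\boldsymbol{h},\boldsymbol{h}')$. Applying \cref{lem:multiplicative} to pull $J$-factors through $\Phi$ and combining them using disjointness of $S_G, S_H$ and of $T_G, T_H$, we obtain
\[
    \sum_{\substack{s' \in s(S_G, S_H)\\ t' \in t(T_G, T_H)}} M_{s', t'}
    = \left[(\boldsymbol{J}^{S})_H \, \Phi\!\left(\ket{\tilde{\boldsymbol{g}}}\bra{\tilde{\boldsymbol{g}}'}\right) (\boldsymbol{J}^{T})_H\right]_{\boldsymbol{h},\boldsymbol{h}'}
    = \Phi\!\left((\boldsymbol{J}^{S})_G \, \ket{\tilde{\boldsymbol{g}}}\bra{\tilde{\boldsymbol{g}}'} \, (\boldsymbol{J}^{T})_G\right)_{\boldsymbol{h},\boldsymbol{h}'}
\]
where $S = S_G \cup S_H$ and $T = T_G \cup T_H$. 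The first form shows independence from $h_j$ (for $j \in S$) and $h'_j$ (for $j \in T$); the second form shows independence from $g_j$ (for $j \in S$) and $g'_j$ (for $j \in T$). Hence the sum depends only on $s \setminus S$ and $t \setminus T$.

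To upgrade this to invariance under $\sim$, we verify the two generators. For the cyclic generator $uv \sim vu$, a one-letter rotation of the reduced string $(s \setminus S)^R(t \setminus T)$ corresponds, after an appropriate relabeling of $S, T$ that shifts the kept positions, to applying the cyclic permutation $\sigma \in \cyclicpermutations$ to the $2k$ label-indices of $(\boldsymbol{J}^S)_G \, \ket{\tilde{\boldsymbol{g}}}\bra{\tilde{\boldsymbol{g}}'} \, (\boldsymbol{J}^T)_G$; condition~\eqref{eq2:q-iso-map-5} ($\Phi(X^\sigma) = \Phi(X)^\sigma$) then yields the invariance. For the idempotency generator $(x,a)(x,a) \sim (x,a)$, we compare to a second configuration with one fewer kept position whose reduced string matches the original up to duplication at the affected position; the equality of the two sums follows from the Schur-product condition~\eqref{eq2:q-iso-map-2} for an appropriate atomic graph $\boldsymbol{F} \in \calQ_k^P$ (a bilabelled minor of $\boldsymbol{C}_k$) that detects coincidence between two adjacent positions, together with a further application of \cref{lem:multiplicative}.

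The main obstacle is this last step, namely reconciling the variable-length combinatorial structure of the reduced string (of length at most $2k$) with the fixed $k$-fold tensor structure of the argument of $\Phi$. For cyclic shifts, the argument hinges on applying~\eqref{eq2:q-iso-map-5} precisely at the ``corner'' where $\sigma$ exchanges an in-label with an out-label, and requires care when the rotation crosses the boundary between the $s$-part and $t$-part of the reduced string. For idempotency, identifying the correct atomic graph in $\calQ_k^P$ whose Schur product with the argument implements the absorption of a duplicated pair is a delicate combinatorial argument that uses the full expressiveness of minors of $\boldsymbol{C}_k$.
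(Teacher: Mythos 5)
Your first step (packaging the sum as $\Phi\bigl((\boldsymbol{J}^{S})_G \,\ket{\tilde{\boldsymbol{g}}}\bra{\tilde{\boldsymbol{g}}'}\, (\boldsymbol{J}^{T})_G\bigr)_{\boldsymbol{h},\boldsymbol{h}'}$ via \cref{lem:multiplicative}) is sound and is essentially a batched version of the paper's use of $\boldsymbol{M}^{(l)}$; it shows that, \emph{for a fixed choice of the summed-out positions} $S_G,S_H,T_G,T_H$, the sum is independent of the values of $s,t$ at those positions. But this is not yet what the lemma asserts. The reduced word $(s\setminus(S_G\cup S_H))^R(t\setminus(T_G\cup T_H))$ forgets \emph{where} the deleted slots sat among the $2k$ tensor positions, so you must also prove that two configurations with different interleavings of deleted and kept slots, but the same reduced word, give the same sum. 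Your cyclic step cannot do this: condition~\eqref{eq2:q-iso-map-5} only provides invariance under a rigid rotation of all $2k$ wire positions, which moves the deleted slots together with the kept ones and hence never changes the interleaving pattern (e.g.\ for $k=3$, ``delete row slot $3$'' versus ``delete row slot $2$ with the kept letters shifted'' are not related by any $\sigma \in \cyclicpermutations$). This missing step is precisely the core of the paper's proof: the ``summed coordinate can be moved freely'' argument, which replaces the irrelevant $g$-value at a summed slot by the neighbouring kept $g$-value, uses \cref{lem:Mzeros} to collapse the inner sum to a single term (forcing $\hat{h}=h$ when the two adjacent $g$'s coincide), re-expands the sum at the neighbouring slot, and then restores the $g$-value — thereby commuting a summed pair past a kept pair while preserving the cyclic order of everything else.

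The gap propagates into your idempotency step. In the paper, once free movement is available, absorption of a duplicated letter is a one-line observation (replace the constant $h$ in one of the two equal occurrences by a sum over all $h'$); without it, the two equal letters of the reduced word need not occupy adjacent tensor slots, so your proposed Schur-product argument with a minor of $\boldsymbol{C}_k$ ``detecting coincidence of two adjacent positions'' does not even apply as stated, and you yourself leave its construction unspecified. So the proposal is not a complete proof: you need to add (and prove) the interleaving-invariance/free-movement lemma — e.g.\ by the paper's Mzeros-collapse trick, which your ingredients \eqref{eq2:q-iso-map-2} and \cref{lem:multiplicative} do support — and then the idempotency case becomes easy, rather than the ``delicate combinatorial argument'' you defer.
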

\begin{proof}
    We first show that the sum only depends $(s\setminus (S_G \cup S_H))^R(t\setminus (T_G \cup T_H))$. We begin by showing that for any fixed $s = g_1h_1\ldots g_kh_k$ and $t = g'_1h'_1\ldots g'_kh'_k$,
    \begin{equation}\label{eq:Msums}
        \sum_{s' \in s(\{l\}, \varnothing)} M_{s',t} = \sum_{s' \in s(\varnothing,\{l\})} M_{s',t}.
    \end{equation}
    To prove this, let $\boldsymbol{M}^{(l)}$ denote the bilabeled graph obtained from the perfect matching bilabeled graph by contracting the edge between the $i^\text{th}$ input and output for $i \ne l$ and deleting the edge between the $l^\text{th}$ input and output. Note that $\boldsymbol{M}^{(l)} \in \mathcal{Q}_k^S$ and both $\boldsymbol{M}^{(l)}_G$ and $\boldsymbol{M}^{(l)}_H$ are equal to $I^{\otimes l-1} \otimes J \otimes I^{\otimes k-l}$. By Lemma~\ref{lem:multiplicative} we have that $\Phi(\boldsymbol{M}^{(l)}_G E^{g_1\ldots g_k, g'_{1}\ldots g'_{k}}) = \boldsymbol{M}^{(l)}_H\Phi(E^{g_1\ldots g_k, g'_{1}\ldots g'_{k}})$, where $E^{g_1\ldots g_k, g'_{1}\ldots g'_{k}}$ is the standard basis matrix with a 1 in the $g_1\ldots g_k, g'_{1}\ldots g'_{k}$-entry and a 0 everywhere else. Equation~\eqref{eq:Msums} is precisely the statement that
\[\left(\Phi(\boldsymbol{M}^{(l)}_GE^{g_1\ldots g_k, g'_{1}\ldots g'_{k}})\right)_{h_1 \ldots h_k,h'_1 \ldots h'_{k}} = \left(\boldsymbol{M}^{(l)}_H\Phi(E^{g_1\ldots g_k, g'_{1}\ldots g'_{k}})\right)_{h_1 \ldots h_k,h'_1 \ldots h'_{k}},\]
and thus it must hold. Similarly,
\begin{equation}\label{eq:Msums2}
    \sum_{t' \in t(\{l\}, \varnothing)} M_{s,t'} = \sum_{t' \in t(\varnothing,\{l\})} M_{s,t'}
\end{equation}
is precisely 
\[\left(\Phi(E^{g_1\ldots g_k, g'_{1}\ldots g'_{k}}\boldsymbol{M}^{(l)}_G)\right)_{h_1 \ldots h_k,h'_1 \ldots h'_{k}} = \left(\Phi(E^{g_1\ldots g_k, g'_{1}\ldots g'_{k}})\boldsymbol{M}^{(l)}_H\right)_{h_1 \ldots h_k,h'_1 \ldots h'_{k}}.\]
Note that this also implies that the lefthand side of Equation~\eqref{eq:Msums} does not depend on the choice of $h_l$, and the righthand side does not depend on the choice of $g_l$. The analogous statement also holds for Equation~\eqref{eq:Msums2}.

Next we will show that the coordinate we are summing over can be ``moved freely". For notational convenience, we will consider the case where we are summing over the first coordinate. We will make use of Lemma~\ref{lem:Mzeros} and the statements proven in the above paragraph. We have that
\begin{align*}
    \sum_{\hat{h}_1 \in V(H)} M_{g_1\hat{h}_1g_2h_2\ldots g_kh_k,g'_1h'_1\ldots g'_kh'_k} &= \sum_{\hat{h}_1 \in V(H)} M_{g_2\hat{h}_1g_2h_2\ldots g_kh_k,g'_1h'_1\ldots g'_kh'_k} \\
    &= M_{g_2h_2g_2h_2\ldots g_kh_k,g'_1h'_1\ldots g'_kh'_k} \\
    &= \sum_{\hat{h}_1 \in V(H)} M_{g_2h_2g_2\hat{h}_1\ldots g_kh_k,g'_1h'_1\ldots g'_kh'_k} \\
    &= \sum_{\hat{h}_1 \in V(H)} M_{g_2h_2g_1\hat{h}_1\ldots g_kh_k,g'_1h'_1\ldots g'_kh'_k}
\end{align*}
Applying the same technique repeatedly allows one to move the $g_1\hat{h}_1$ term to any position in the either of the row or column indices, while keeping the relative cyclic order of the other coordinates fixed. Applying our results so far to several coordinates already shows that the sum~\eqref{eq:Msumlem} from the lemma statement only depends on $(s\setminus (S_G \cup S_H))^R(t\setminus (T_G \cup T_H))$. Moreover, condition~\eqref{eq2:q-iso-map-5} of quantum isomorphism maps implies that $M_{g_{\sigma(1)}h_{\sigma(1)}\dots g_{\sigma(k)}h_{\sigma(k)}, g_{\sigma(k+1)}h_{\sigma(k+1)}\dots g_{\sigma(2k)}h_{\sigma(2k)}}$
is constant for all $\sigma \in \cyclicpermutations$, and from this it follows that for two values of $(s\setminus (S_G \cup S_H))^R(t\setminus (T_G \cup T_H))$ that differ only by a cyclic permutation, the sum~\eqref{eq:Msumlem} is the same.

To complete the proof of the lemma, we only need to show that~\eqref{eq:Msumlem} is the same for two different values of $(s\setminus (S_G \cup S_H))^R(t\setminus (T_G \cup T_H))$ that only differ by replacing two consecutive occurrences of $gh$ by a single occurrence. However, this follows from the fact that in the case where there are two consecutive occurrences, we can replace the constant value of $h$ in one of them by the sum over all possible $h' \in V(H)$.
\end{proof}

\begin{thm}\label{thm:main-theorem-1}
    Let $G$ and $H$ be graphs and $k \in \mathbb{N}$. Then the following are equivalent:
    \begin{enumerate}
        \item The $k^\text{th}$ level of the NPA hierarchy is feasible for the $(G,H)$-isomorphism game.
        \item There exists a level $k$ quantum isomorphism map from $G$ to $H$.
        \item There exists a level $k$ quantum isomorphism map from $H$ to $G$.
    \end{enumerate}
\end{thm}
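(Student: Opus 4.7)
The plan is to identify the principal submatrix of an NPA certificate $\mathcal{R}$ on length-$k$ strings with the Choi matrix of a completely positive map, and then to read off the defining conditions of a level-$k$ quantum isomorphism map as translations of the NPA conditions via the Choi correspondence and \cref{lem:Mzeros,lem:multiplicative,lem:Msums}. Since $\Sigma^k$ has the same cardinality as $V(G)^k \times V(H)^k$, a matrix $M \in M_{\Sigma^k}(\mathbb{C})$ is naturally a bipartite tensor on $V(G)^k$ and $V(H)^k$, and Choi's theorem gives a bijection between PSD such $M$ and CP maps $M_{V(G)^k}(\mathbb{C}) \to M_{V(H)^k}(\mathbb{C})$.

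For (1) $\Rightarrow$ (2), given an NPA certificate $\mathcal{R}$, set $M$ to be the principal submatrix $\mathcal{R}[\Sigma^k,\Sigma^k]$ and let $\Phi$ be the CP map with Choi matrix $M$. Condition \eqref{eq2:q-iso-map-1} is immediate from $M \succeq 0$. Each atomic graph in $\mathcal{Q}_k$ arises from $\boldsymbol{C}_k$ or $\boldsymbol{M}_k$ by contracting and deleting edges or labelled vertices, so its homomorphism tensor on $G$ or $H$ is a Kronecker arrangement of identity and all-ones matrices along certain coordinates; consequently, conditions \eqref{eq2:q-iso-map-3a}, \eqref{eq2:q-iso-map-3b}, and \eqref{eq2:q-iso-map-4} become entry-sum identities for $M$ that are precise specialisations of the NPA sum rules in \cref{qiso-item-3}. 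Condition \eqref{eq2:q-iso-map-2} is equivalent, via \cref{lem:linalg_parallel}, to the vanishing of those entries of $M$ described in NPA rule \cref{qiso-item-4}. Condition \eqref{eq2:q-iso-map-5} is exactly the cyclic invariance in the NPA equivalence relation of \cref{qiso-item-2}. For (2) $\Rightarrow$ (1), given a level-$k$ quantum isomorphism map $\Phi$ with Choi matrix $M$, I extend $M$ to a matrix $\mathcal{R}$ on $\Sigma^{\leq k}$ by setting $\mathcal{R}_{s,t}$ equal to the sum of $M_{\tilde s,\tilde t}$ over a canonical family of length-$k$ extensions of $s$ and $t$; \cref{lem:Msums} certifies that this definition is independent of the chosen extensions and compatible with the equivalence relation $\sim$, so rules \cref{qiso-item-1,qiso-item-2,qiso-item-3} hold by construction, and \cref{lem:Mzeros} gives rule \cref{qiso-item-4}. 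To verify that $\mathcal{R}$ is PSD, factor $M = V^* V$, define $v_s := \sum_{\tilde s} v_{\tilde s}$ for every short $s$ using the columns of $V$, and observe that $\mathcal{R}$ is then the Gram matrix of $\{v_s\}_{s \in \Sigma^{\leq k}}$.

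For (2) $\Leftrightarrow$ (3), I would argue by symmetry: the bijection $\tau \colon \Sigma \to V(H) \times V(G)$ swapping each letter $(g,h)$ to $(h,g)$, extended to strings, sends any NPA certificate for $(G,H)$ to an NPA certificate for $(H,G)$, since all four conditions in \cref{def:cert-npa} are manifestly $\tau$-invariant. Thus (1) for $(G,H)$ is equivalent to (1) for $(H,G)$, and composing with the already-established (1) $\Leftrightarrow$ (2) yields (2) $\Leftrightarrow$ (3). I expect the main obstacle to lie in the (2) $\Rightarrow$ (1) direction: although the well-definedness of the extension is handled cleanly by \cref{lem:Msums}, one must carefully select the canonical extensions so that the Gram-factor vectors $v_s$ associated to short strings sum along each NPA equivalence class in a consistent way, and then track the bookkeeping needed to match each NPA sum rule on short strings with an identity on $M$ that has been proved for $\Phi$.
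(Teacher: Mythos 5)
Your overall architecture is the same as the paper's: interpret the length-$k$ principal submatrix of an NPA certificate as the Choi matrix of a completely positive map, use \cref{lem:linalg_parallel} to convert NPA rule \cref{qiso-item-4} into condition \eqref{eq2:q-iso-map-2} and \cref{qiso-item-2} into \eqref{eq2:q-iso-map-5}, and for the converse extend the Choi matrix to a Gram matrix of vectors obtained by summing over length-$k$ completions, with \cref{lem:Msums} and \cref{lem:Mzeros} supplying well-definedness, the sum rules, and the vanishing rule; item (3) is handled by the $(G,H)\leftrightarrow(H,G)$ symmetry of \cref{def:cert-npa}, exactly as in the paper.

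However, your justification of \eqref{eq2:q-iso-map-3a}, \eqref{eq2:q-iso-map-3b} and \eqref{eq2:q-iso-map-4} is inaccurate as stated. The homomorphism tensors of the atomic graphs are not in general ``Kronecker arrangements of identity and all-ones matrices'': minors of $\boldsymbol{M}_k$ with retained edges have tensor factors $A_G$, and minors of $\boldsymbol{C}_k$ are not coordinatewise tensor products at all---their entries are $0/1$ according to adjacency/equality constraints between cyclically consecutive coordinates. Consequently $\Phi(\boldsymbol{F}_G)=\boldsymbol{F}_H$ is not a ``precise specialisation of the NPA sum rules \cref{qiso-item-3}'': even for $\boldsymbol{F}\in\mathcal{Q}_k^S$ the verification must interleave the sum rules with the collapse relation \cref{qiso-item-2}, the vanishing rule \cref{qiso-item-4} (to insert or delete the constrained index pairs from the sums) and the normalisation \cref{qiso-item-1}, which is exactly the inductive computation the paper carries out; and for $\boldsymbol{F}\in\mathcal{Q}_k^P$ the paper does not argue via sums at all, but deduces \eqref{eq2:q-iso-map-4} from \eqref{eq2:q-iso-map-2} applied with $X=J$ once $\Phi(J)=J$ is known. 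Two further points need attention: \eqref{eq2:q-iso-map-3a} and \eqref{eq2:q-iso-map-3b} also demand $\Phi^*(I)=I$ and $\Phi^*(J)=J$, which do not follow from entry-sum identities for $M$ alone but from the observation (which you invoke only later, for item (3)) that the conjugated, letter-swapped matrix is a certificate for the $(H,G)$-game, so the same argument applies to $\Phi^*$; and in the direction (2) $\Rightarrow$ (1) the normalisation \cref{qiso-item-1} is not ``by construction''---it uses $\Phi(J)=J$ together with \cref{lem:Msums} to trade $H$-sums for $G$-sums. With these pieces spelled out, your proof coincides with the paper's.
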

\begin{proof}
    The idea is to use a principal submatrix of a certificate $\mathcal{R}$ for the $k^\text{th}$ level of the NPA hierarchy as the Choi matrix of a linear map and show that this map is a level $k$ quantum isomorphism map. Conversely, one takes the Choi matrix of a level $k$ quantum isomorphism map and shows that the remaining entries needed in a certificate for the NPA hierarchy can be determined by the entries from the Choi matrix alone. The equivalence of the final item follows from the fact that the first item is clearly symmetric in $G$ and $H$.

    Let $\Sigma = V(G) \times V(H)$ and let $\mathcal{R}$ be a certificate for the $k^\text{th}$ level of the NPA hierarchy for the $(G,H)$-isomorphism game. Define $M \in M_{\Sigma^k}(\mathbb{C})$ entrywise as
    \[M_{s,t} = \mathcal{R}_{s,t} \text{ for all } s,t \in \Sigma_k,\]
    and let $\Phi : \mathbb{C}^{V(G)^k \times V(G)^k} \to \mathbb{C}^{V(H)^k \times V(H)^k}$ be the linear map whose Choi matrix is $M$. Since $\mathcal{R}$ is positive semidefinite so is $M$ and thus $\Phi$ is completely positive, i.e., \eqref{eq2:q-iso-map-1} holds.
    
    Now let $\boldsymbol{F} \in \mathcal{Q}_k^P$ and note that $(\boldsymbol{F}_G)_{g_k\ldots g_1, g_{k+1}\ldots g_{2k}} \in \{0,1\}$ and the value this takes depends only on the values of $\rel(g_i,g_{i+1})$ for $i \in [2k]$ (indices taken modulo $2k$). Similarly, $(\boldsymbol{F}_H)_{h_k\ldots h_1, h_{k+1}\ldots h_{2k}} \in \{0,1\}$ depends (in the same way) only on the values of $\rel(h_i,h_{i+1})$ for $i \in [2k]$. Thus $(\boldsymbol{F}_G)_{g_k\ldots g_1, g_{k+1}\ldots g_{2k}} \ne (\boldsymbol{F}_H)_{h_k\ldots h_1, h_{k+1}\ldots h_{2k}}$ for some $g_1, \ldots, g_{2k} \in V(G)$ and $h_1, \ldots, h_{2k} \in V(H)$, only if there is some $i$ such that $\rel(g_i,g_{i+1}) \ne \rel(h_i,h_{i+1})$. In this case it then follows by ~\cref{def:cert-npa}~\ref{qiso-item-4} that $M_{g_kh_k\ldots g_1h_1,g_{k+1}h_{k+1}\ldots g_{2k}h_{2k}} = 0$. Therefore, by Lemma~\ref{lem:linalg_parallel} we have that \eqref{eq2:q-iso-map-2} holds.

    Next we show that \eqref{eq2:q-iso-map-4} holds. Note that we have already shown that it holds for all $\boldsymbol{F} \in \mathcal{Q}_k^P$. First, consider $\boldsymbol{F} \in \mathcal{Q}_k^S$, i.e., $\boldsymbol{F}$ is a minor of the perfect matching bilabeled graph $\boldsymbol{M}_k$. For each $i \in [k]$ the vertices of $\boldsymbol{F}$ labeled $i$ and $k+i$ are either equal, adjacent, or distinct and non-adjacent. Define $\sim^G_i$ to be the equality relation on $V(G)$ in the first case, the relation of adjacency in the second case, and the trivial relation (i.e., containing all pairs) in the third case. Similarly define $\sim_i^H$, though we will usually drop the superscript. Note that $(\boldsymbol{F}_G)_{g_1\ldots g_k,g_{k+1}\ldots g_{2k}} = 1$ precisely when $g_i \sim_i g_{k+i}$ for all $i \in [k]$, and similarly for $\boldsymbol{F}_H$. We aim to show that $\Phi(F_G) = F_H$. We have that
    \begin{align}
        \Phi(F_G)_{h_1\ldots h_k,h_{k+1}\ldots h_{2k}} &= \sum_{g_1, \ldots, g_{2k}} M_{g_1h_1\ldots g_kh_k,g_{k+1}h_{k+1}\ldots g_{2k}h_{2k}} (F_G)_{g_1\ldots g_k, g_{k+1}\ldots g_{2k}} \\
        &= \sum_{g_1, \ldots, g_{2k}} \mathcal{R}_{g_1h_1\ldots g_kh_k,g_{k+1}h_{k+1}\ldots g_{2k}h_{2k}} (F_G)_{g_1\ldots g_k, g_{k+1}\ldots g_{2k}}\\
        &= \sum_{\substack{g_1, \ldots, g_{2k} \in V(G) \\ \text{s.t. } g_i \sim_i g_{k+i} \ \forall i \in [k]}} \mathcal{R}_{g_1h_1\ldots g_kh_k,g_{k+1}h_{k+1}\ldots g_{2k}h_{2k}}
    \end{align}
Note that every term in the final sum above is 0 unless $h_1 \sim_1 h_{k+1}$, by Conditions \ref{qiso-item-2} and \ref{qiso-item-4} of the certificate $\mathcal{R}$. Assuming, $h_1 \sim_1 h_{k+1}$, we can sum over all $g_1, g_{k+1} \in V(G)$ instead of only those satisfying $g_1 \sim_1 g_{k+1}$. This is because the additional terms we add are all zero. So in this case the above sum is equal to
\begin{equation}
    \sum_{\substack{g_1, \ldots, g_{2k} \in V(G) \\ \text{s.t. } g_i \sim_i g_{k+i} \ \forall i \in [k]\setminus \{1\}}} \mathcal{R}_{g_1h_1\ldots g_kh_k,g_{k+1}h_{k+1}\ldots g_{2k}h_{2k}} = \sum_{\substack{g_2, \ldots, g_{k}, g_{k+2}, \ldots, g_{2k} \in V(G) \\ \text{s.t. } g_i \sim_i g_{k+i} \ \forall i \in [k]\setminus \{1\}}} \mathcal{R}_{g_2h_2\ldots g_kh_k,g_{k+2}h_{k+2}\ldots g_{2k}h_{2k}}
\end{equation}
Now this sum is zero unless $h_2 \sim_2 h_{k+2}$, and in this case we can apply the same argument as above to remove $g_2h_2$ and $g_{k+2}h_{k+2}$ from the sum and index on $\mathcal{R}$. Continuing, we see that $\Phi(\boldsymbol{F}_G)_{h_1 \ldots h_k,h_{k+1}\ldots h_{2k}}$ is zero unless $h_i \sim_i h_{k+i}$ for all $i \in [k]$, and in this case it is equal to $R_{\epsilon,\epsilon} = 1$. In other words, $\Phi(\boldsymbol{F}_G)_{h_1 \ldots h_k,h_{k+1}\ldots h_{2k}} = (\boldsymbol{F}_H)_{h_1 \ldots h_k,h_{k+1}\ldots h_{2k}}$ and so we have shown that $\Phi(\boldsymbol{F}_G) = \boldsymbol{F}_H$ for all $\boldsymbol{F} \in \mathcal{Q}_k^S$. This additionally shows that $\Phi(I) = I$ and $\Phi(J) = J$, since $I$ and $J$ are the homomorphism tensors of $\boldsymbol{I}_k$ and $\boldsymbol{J}_k$ respectively. We can now set $X=J$ in \eqref{eq2:q-iso-map-2} (which we have already shown holds) to obtain that $\Phi(F_G) = F_H$ for all $F \in \mathcal{Q}_k^P$. This completes the proof that \eqref{eq2:q-iso-map-4} holds.

To see that $\Phi^*(I) = I$ and $\Phi^*(J) = J$, note that if $\hat{M}$ is the Choi matrix of the adjoint, then
\[\hat{M}_{h_1g_1\ldots h_kg_k, h_{k+1}g_{k+1}\ldots h_{2k}g_{2k}} = \overline{M_{g_1h_1\ldots g_kh_k, g_{k+1}h_{k+1}\ldots g_{2k}h_{2k}}} = \overline{\mathcal{R}_{g_1h_1\ldots g_kh_k, g_{k+1}h_{k+1}\ldots g_{2k}h_{2k}}}.\]
It is clear from Definition~\ref{def:cert-npa} that the matrix $\hat{\mathcal{R}}$ defined as
\[\hat{\mathcal{R}}_{h_1g_1\ldots h_lg_l, h_{l+1}g_{l+1}\ldots h_{l+r}g_{l+r}}:= \overline{\mathcal{R}_{g_1h_1\ldots g_lh_l, g_{l+1}h_{l+1}\ldots g_{l+r}h_{l+r}}}\]
is a certificate of for the $k^\text{th}$ level of the NPA hierarchy for the $(H,G)$-game, and therefore by the same argument as for $\Phi$, we have that $\Phi^*(I) = I$ and $\Phi^*(J) = J$. Thus \eqref{eq2:q-iso-map-3a} and \eqref{eq2:q-iso-map-3b} hold. The fact that \eqref{eq2:q-iso-map-5} holds is a straightforward consequence of Definition~\ref{def:cert-npa}~\ref{qiso-item-2}. So we have shown that item (1) implies item (2).

Now suppose that (2) holds, i.e., that there is a level $k$ quantum isomorphism map $\Phi$ from $G$ to $H$. Let $M$ be the Choi matrix of $\Phi$. Since $M$ is positive semidefinite, there are vectors $v_{g_1h_1\ldots g_kh_k}$ such that
\begin{equation}
    M_{g_1h_1\ldots g_kh_k,g'_{1}h'_{1}\ldots g'_{k}h'_{k}} = \langle v_{g_1h_1\ldots g_kh_k}, v_{g'_{1}h'_{1}\ldots g'_{k}h'_{k}}\rangle
\end{equation}
Now, for any $0 \le l \le k$ and $g_1, \ldots, g_l \in V(G)$, $h_1, \ldots, h_l \in V(H)$, define
\begin{equation}
    v_{g_1h_1\ldots g_lh_l} = \sum_{h_{l+1},\ldots, h_{k} \in V(H)} v_{g_1h_1\ldots g_kh_k}.
\end{equation}
Of course, we must show that the above definition does not depend on the choice of $g_{l+1}, \ldots, g_{k} \in V(G)$. So consider $g^1_{l+1}, \ldots, g^1_{k}, g^2_{l+1}, \ldots, g^2_{k} \in V(G)$, and let $w$ denote $g_1h_1\ldots g_lh_l$. We have that
\begin{align*}
        & \left \| \sum_{h_{l+1} \dots h_k \in V(H) } v_{g_1h_1 \dots g_l h_l g^1_{l+1}h_{l+1} \dots g^1_k h_k} - \sum_{h_{l+1} \dots h_k \in V(H)} v_{g_1h_1 \dots g_l h_l g^2_{l+1}h_{l+1} \dots g^2_k h_k} \right\| ^2 \\
        & = \sum_{\substack{h_{l+1} \dots h_k \in V(H) \\ h'_{l+1} \dots h'_k \in V(H)}} \left( \begin{aligned} & M_{w g^1_{l+1}h_{l+1} \dots g^1_k h_k, w g^1_{l+1}h'_{l+1} \dots g^1_k h'_k} - M_{w g^1_{l+1}h_{l+1} \dots g^1_k h_k, w g^2_{l+1}h'_{l+1} \dots g^2_k h'_k} \\
        & - M_{w g^2_{l+1}h'_{l+1} \dots g^2_k h'_k, w g^1_{l+1}h_{l+1} \dots g^1_k h_k} + M_{w g^2_{l+1}h_{l+1} \dots g^2_k h_k, w g^2_{l+1}h'_{l+1} \dots g^2_k h'_k} 
        \end{aligned} \right ) \\
        & = \qquad 0
    \end{align*}
by Lemma~\ref{lem:Msums}.

Now define $\mathcal{R} \in M_{\Sigma^{\le k}}(\mathbb{C})$ to be the gram matrix of the vectors $v_{g_1h_1\ldots g_lh_l}$. We will show that $\mathcal{R}$ is a certificate for the $k^\text{th}$ level of the NPA hierarchy for the $(G,H)$-isomorphism game. Clearly $\mathcal{R}$ is positive semidefinite, so it remains to show that it satisfies Conditions~\ref{qiso-item-1}-\ref{qiso-item-4}.

We have that $v_\epsilon = \sum_{h_1,\ldots, h_k \in V(H)} v_{g_1h_1\ldots g_kh_k}$ for any $g_1, \ldots, g_k \in V(G)$ and thus
\begin{align*}
     \mathcal{R}_{\epsilon,\epsilon} &= \langle v_\epsilon, v_\epsilon \rangle \\
     &= \sum_{h_1, \ldots, h_{2k} \in V(H)} M_{g_1h_1\ldots g_kh_k,g_{k+1}h_{k+1}\ldots g_{2k}h_{2k}} \\
     &= \sum_{g_1, \ldots, g_{2k} \in V(H)} M_{g_1h_1\ldots g_kh_k,g_{k+1}h_{k+1}\ldots g_{2k}h_{2k}} \\
     &= \Phi(J)_{h_1\ldots h_k,h_{k+1}\ldots h_{2k}} \\
     &= J_{h_1\ldots h_k,h_{k+1}\ldots h_{2k}} \\
     &= 1.
\end{align*}
Therefore $\mathcal{R}$ satisfies \ref{qiso-item-1}.

We will now show that $\mathcal{R}$ satisfies~\ref{qiso-item-2}, i.e., that $\mathcal{R}_{s,t}$ depends only on the equivalence class of the relation $\sim$ that $s^Rt$ lies in. Let $s = g_1h_1\ldots g_lh_l$ and $t = g'_1h'_1\ldots g'_rh'_r$. We have that

\begin{align}
    \mathcal{R}_{s,t} &= \langle v_s, v_t \rangle \\
    &= \sum_{h_{l+1}, \ldots, h_k, h'_{r+1} \ldots, h'_k \in V(H)} \langle v_{g_1h_1\ldots g_kh_k}, v_{g'_1h'_1\ldots g'_kh'_k} \rangle \\
    &= \sum_{h_{l+1}, \ldots, h_k, h'_{r+1} \ldots, h'_k \in V(H)} M_{g_1h_1\ldots g_kh_k, g'_1h'_1\ldots g'_kh'_k} \label{eq:RfromM}
\end{align}
which only depends on the equivalence class of $s^Rt$ by Lemma~\ref{lem:Msums}.

Next we will show that $\mathcal{R}$ satisfies \ref{qiso-item-3}. Recall that Equation~\eqref{eq:qiso31} from \ref{qiso-item-3} states that
\[\sum_{h' \in V(H)} \mathcal{R}_{s(g,h')s',t} = \mathcal{R}_{ss',t}\]
for $s,s',t \in \Sigma^{\le k}$ such that $s(g,h')s' \in \Sigma^{\le k}$. But this of course follows from Lemma~\ref{lem:Msums} and the expression for $\mathcal{R}_{s,t}$ given in \eqref{eq:RfromM}. 

Similarly, \ref{qiso-item-4} holds by Lemma~\ref{lem:Msums}, Lemma~\ref{lem:Mzeros}, and Equation~\eqref{eq:RfromM}. Therefore $\mathcal{R}$ is a certificate for the $k^\text{th}$ level of the NPA hierarchy for the $(G,H)$-isomorphism game and we are done.
\end{proof}

We remark that it follows from Definition~\ref{def:cert-npa}~\ref{qiso-item-2} that the extension of the Choi matrix $M$ to the certificate $\mathcal{R}$ is in fact unique.

The observant reader may have noticed that not all elements of $\mathcal{Q}_k^S$ were needed to prove that item (2) implies item (1) above. This is related to the fact that it is in fact possible to generate this unused bilabelled graphs from those that were used in the proof. However, redefining the set $\mathcal{Q}_k^S$ to only contain those that were used does not make it any easier to prove that (1) implies (2), and we will need these extra graphs later.

\subsection{Isomorphisms Between Matrix Algebras}

 In this subsection, we shall see how a quantum isomorphism map restricts to a homomorphism between algebras containing homomorphism tensors for $G$ to homomorphism tensors for $H$ of graphs in $\calQ_k$. This brings us a step closer to interpreting a solution for the $k^{\text{th}}$-level of the NPA hierarchy as a homomorphism indistinguishability result. 

A matrix algebra $\calA \subseteq M_{n^k}(\complex)$ is \emph{$S$-partially coherent} if it is unital, self-adjoint, contains $J$, and is closed under Schur product with any matrix in $S$. 
Further, $\calA$ is \emph{cyclically-symmetric} if $A^\sigma \in \calA$, for every $A \in \calA$ and $\sigma \in \cyclicpermutations$.

\begin{definition}
  Let $S_k$ be the set of homomorphism tensors of $(k, k)$-bilabelled atomic graphs for $G$ in $\calQ_k^P$.
  For a graph $G$, we define the algebra $\widehat{\mathcal{Q}}_G^k$ as the minimal cyclically-symmetrical $S_k$-partially coherent algebra containing homomorphism tensors of all $(k,k)$-bilabelled graphs in $\calQ_k$ for $G$.
\end{definition}

\begin{definition}\label{def:algebraic-k-quiv}
  Two $n$-vertex graphs $G$ and $H$ are \emph{algebraically $k$-equivalent} if there is \emph{algebraic $k$-equivalence}, i.e., a vector space isomorphism $\varphi\colon \widehat{\mathcal{Q}}_G^k \to \widehat{\mathcal{Q}}_H^k$ such that
  \begin{enumerate}
    \item $\varphi(M^*) = \varphi(M)^*$ for all $M\in\widehat{\mathcal{Q}}_G^k$,
    \item $\varphi(MN) = \varphi(M)\varphi(N)$ for all $M,N\in\widehat{\mathcal{Q}}_G^k$,
n    \item $\varphi(\boldsymbol{F}_G \odot M) = \boldsymbol{F}_H\odot\varphi(M)$ for all 
    $\boldsymbol{F}\in\mathcal{Q}_k^P$ and any $M\in\widehat{\mathcal{Q}}_G^k$,
    \item $\varphi(I) = I$, $\varphi(J) = J$  and $\varphi(\boldsymbol{F}_G) = \boldsymbol{F}_H$ for all $\boldsymbol{F}\in\mathcal{Q}_k$,
    \item $\varphi(M^\sigma) = \varphi(M)^\sigma$ for all $M\in\widehat{\mathcal{Q}}_G^k$ and $\sigma \in \cyclicpermutations$.
    \item $\varphi$ is trace preserving. 
  \end{enumerate}
\end{definition}

Note that every algebraic $k$-equivalence is sum-preserving, i.e., $\soe(\varphi(X)) = \soe(X)$ for all $X \in \hat{\calQ}_k$. Indeed, $\soe(\varphi(X)) = \tr(J\varphi(X)) = \tr(\varphi(JX)) = \tr(JX) = \soe(X)$. 

\begin{thm}\label{thm:main-theorem-2}
  Let $k\geq 1$. 
  Two graphs $G$ and $H$ are algebraically $k$-equivalent if and only if there is a level-$k$ quantum isomorphism map from $G$ to $H$.
\end{thm}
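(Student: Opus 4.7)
The plan is to prove each direction via a natural construction tailored to the algebraic structure.

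For the backward direction ($\Leftarrow$), given an algebraic $k$-equivalence $\varphi$, I would extend it to a full quantum isomorphism map by composing with the trace-preserving conditional expectation: set $\Phi := \varphi \circ E_G$, where $E_G\colon M_{V(G)^k}(\mathbb{C}) \to \widehat{\mathcal{Q}}_G^k$ is the orthogonal projection with respect to the trace inner product onto the $*$-subalgebra. Both factors are UCP and trace-preserving ($E_G$ as a conditional expectation onto a finite-dimensional $C^*$-subalgebra, $\varphi$ as a unital $*$-homomorphism between finite-dimensional $C^*$-algebras), so $\Phi$ is UCP and TP. Axiom \eqref{eq2:q-iso-map-4} is immediate because $E_G$ fixes each $\boldsymbol{F}_G$ with $\boldsymbol{F}\in\calQ_k$. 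The Schur axiom \eqref{eq2:q-iso-map-2} follows because $E_G$ commutes with Schur product by $\boldsymbol{F}_G$ for $\boldsymbol{F}\in\calQ_k^P$---this holds since $\widehat{\mathcal{Q}}_G^k$ is stable under this operation and the operation is self-adjoint (as $\boldsymbol{F}_G$ is a real $0/1$ matrix)---combined with the Schur compatibility of $\varphi$. Cyclic compatibility \eqref{eq2:q-iso-map-5} is analogous, using cyclic symmetry of $\widehat{\mathcal{Q}}_G^k$. Finally, $\Phi^* = \varphi^{-1}\circ E_H$ fixes $I$ and $J$, giving \eqref{eq2:q-iso-map-3a}--\eqref{eq2:q-iso-map-3b}.

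For the forward direction ($\Rightarrow$), I would set $\varphi := \Phi|_{\widehat{\mathcal{Q}}_G^k}$. Let $\calP_k$ denote the class of bilabelled graphs generated from $\calQ_k$ by finitely many applications of series composition, parallel composition with a member of $\calQ_k^P$, label-swap $(\cdot)^*$, and cyclic permutation. The algebro-combinatorial correspondence yields $\widehat{\mathcal{Q}}_G^k = \mathrm{span}\{\boldsymbol{P}_G : \boldsymbol{P}\in\calP_k\}$. The crux of the argument is to establish that $\Phi(\boldsymbol{P}_G) = \boldsymbol{P}_H$ for every $\boldsymbol{P}\in\calP_k$; once this is in hand, the algebraic $k$-equivalence properties follow formally by expanding elements of $\widehat{\mathcal{Q}}_G^k$ in this spanning set. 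For instance, multiplicativity reduces to $\varphi(\boldsymbol{P}_{i,G}\boldsymbol{P}_{j,G}) = \varphi((\boldsymbol{P}_i\cdot\boldsymbol{P}_j)_G) = (\boldsymbol{P}_i\cdot\boldsymbol{P}_j)_H = \boldsymbol{P}_{i,H}\boldsymbol{P}_{j,H}$, and Schur-product and cyclic compatibility are analogous. Trace-preservation comes from $\Phi$ being TP. Bijectivity of $\varphi$ follows by applying the same construction to a level-$k$ quantum iso map in the opposite direction (whose existence is guaranteed by the symmetry of \cref{thm:main-theorem-1}), obtaining a two-sided inverse that agrees with $\varphi^{-1}$ on generators and hence everywhere by the algebraic structure.

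To prove the key claim $\Phi(\boldsymbol{P}_G) = \boldsymbol{P}_H$ for all $\boldsymbol{P}\in\calP_k$, I would proceed by induction on the construction of $\boldsymbol{P}$, strengthening the hypothesis to also assert that $\boldsymbol{P}_G$ lies in the multiplicative domain $M_\Phi$ of the UCP map $\Phi$. The base case $\boldsymbol{P}\in\calQ_k$ is immediate from \cref{lem:multiplicative} and axiom \eqref{eq2:q-iso-map-4}. Since $M_\Phi$ is a $*$-subalgebra on which $\Phi$ acts as a $*$-algebra homomorphism, the series composition and adjoint steps are automatic. The main obstacle is the parallel composition and cyclic permutation steps, since neither operation interacts cleanly with matrix multiplication. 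For the Schur step, showing that $\boldsymbol{P}_G\odot\boldsymbol{F}_G \in M_\Phi$ when $\boldsymbol{P}_G\in M_\Phi$ and $\boldsymbol{F}\in\calQ_k^P$ reduces, via the Kadison-Schwarz characterization $X\in M_\Phi \iff \Phi(X^*X) = \Phi(X)^*\Phi(X)$, to an identity between $\Phi((\boldsymbol{P}_G^*\odot\boldsymbol{F}_G^T)(\boldsymbol{P}_G\odot\boldsymbol{F}_G))$ and $(\boldsymbol{P}_H^*\odot\boldsymbol{F}_H^T)(\boldsymbol{P}_H\odot\boldsymbol{F}_H)$; this should follow from axiom \eqref{eq2:q-iso-map-2}, the induction hypothesis, and a careful entry-wise expansion exploiting that $\boldsymbol{F}_G$ is $0/1$-valued so that Schur multiplication by it simply restricts to a subset of entries, reducing the product of Schur products to a combination of bilabelled-graph tensors already covered by the inductive hypothesis. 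The cyclic permutation step is handled analogously using axiom \eqref{eq2:q-iso-map-5}.
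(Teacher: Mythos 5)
Your backward direction is essentially the paper's own construction (compose the isomorphism with the trace-preserving conditional expectation onto $\widehat{\mathcal{Q}}_G^k$; the paper routes through \cref{lem:linalg_unitary} to write $\varphi$ as a unitary conjugation, but that is the same argument), and the overall shape of your forward direction---restrict $\Phi$, prove $\Phi(\boldsymbol{P}_G)=\boldsymbol{P}_H$ for all $\boldsymbol{P}\in\calP_k$ by structural induction, then get multiplicativity, Schur- and cyclic-compatibility formally from the spanning set---is a legitimate (indeed more explicit) version of what the paper does. The problem is the crux you yourself flag: the parallel-composition (and cyclic-permutation) step of the induction. Your plan is to certify $\boldsymbol{P}_G\odot\boldsymbol{F}_G\in M_\Phi$ via the Choi/Kadison--Schwarz criterion, which requires verifying $\Phi\bigl((\boldsymbol{P}_G\odot\boldsymbol{F}_G)^*(\boldsymbol{P}_G\odot\boldsymbol{F}_G)\bigr)=(\boldsymbol{P}_H\odot\boldsymbol{F}_H)^*(\boldsymbol{P}_H\odot\boldsymbol{F}_H)$. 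But $(\boldsymbol{P}_G\odot\boldsymbol{F}_G)^*(\boldsymbol{P}_G\odot\boldsymbol{F}_G)$ is the homomorphism tensor of $(\boldsymbol{P}\odot\boldsymbol{F})^*\cdot(\boldsymbol{P}\odot\boldsymbol{F})$, a graph built \emph{from} $\boldsymbol{P}\odot\boldsymbol{F}$ itself, so it is not covered by your induction hypothesis; knowing the value of $\Phi$ on it is exactly an instance of the statement being proved. The conditions \eqref{eq2:q-iso-map-2} and \eqref{eq2:q-iso-map-5} only tell you how $\Phi$ interacts with Schur products and label rotations, not how to evaluate $\Phi$ on this new matrix, and the ``entry-wise expansion'' does not produce matrices on which $\Phi$ is already known (the $0/1$ pattern of $\boldsymbol{F}_G$ restricts which entries survive, but the resulting sums are not Schur multiples of previously handled tensors). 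So as written, this step is circular rather than merely technical.

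The repair---and the mechanism the paper actually uses---is \cref{lem:lingalg_series}: if $\Phi(X)=Y$ and \emph{some} unital, trace-preserving, completely positive map sends $Y$ back to $X$, then the bimodule identities $\Phi(XW)=Y\Phi(W)$, $\Phi(WX)=\Phi(W)Y$ hold automatically, i.e.\ $X$ lies in the multiplicative domain without any Kadison--Schwarz computation; this is precisely how \cref{lem:multiplicative} handles the atomic case. To run your induction, strengthen the hypothesis to the pair of statements $\Phi(\boldsymbol{P}_G)=\boldsymbol{P}_H$ \emph{and} $\Phi^*(\boldsymbol{P}_H)=\boldsymbol{P}_G$. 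The base case is \cref{lem:multiplicative} (which also yields $\Phi^*(\boldsymbol{F}_H)=\boldsymbol{F}_G$ via conclusion (ii) of \cref{lem:lingalg_series}); the Schur step for both $\Phi$ and $\Phi^*$ follows from \eqref{eq2:q-iso-map-2} together with its adjoint form given by \cref{lem:linalg_parallel}; the cyclic step follows from \eqref{eq2:q-iso-map-5} and the fact that the adjoint of a map commuting with $X\mapsto X^\sigma$ also commutes with this action; and the series step is then legitimate because, by \cref{lem:lingalg_series} applied with $\Phi$ and $\Phi^*$ (both UCP, trace-preserving, unital), each $\boldsymbol{P}_G$ already constructed lies in the multiplicative domain of $\Phi$ (and $\boldsymbol{P}_H$ in that of $\Phi^*$). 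Carrying $\Phi^*$ along also gives you bijectivity directly ($\Phi^*\circ\Phi$ and $\Phi\circ\Phi^*$ fix the spanning tensors), so you do not need to invoke \cref{thm:main-theorem-1} for a reverse map. With that substitution your forward direction matches the intended proof; without it, the parallel-composition step does not go through.
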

\begin{proof}
  Let $\Phi\colon M_{V(G)^k}(\complex) \to M_{V(H)^k}(\complex)$ be a level-$k$ quantum isomorphism map from $G$ to $H$.
  By~\eqref{eq2:q-iso-map-4}, we have $\Phi(\boldsymbol{F}_G) = \boldsymbol{F}_H$ for all $\boldsymbol{F}\in \calQ_k$. 
  Similarly $\Phi^*(\boldsymbol{F}_H) = \boldsymbol{F}_G$, for all $\boldsymbol{F}\in \calQ_k$.
  In particular, $\Phi$ and $\Phi^*$ are unital and also trace-preserving. 
  By \cref{lem:multiplicative}, for any $W \in M_{V(G)^k}(\mathbb{C})$ we have that $\Phi(\boldsymbol{F}_GW) = \boldsymbol{F}_H\Phi(W)$ and $\Phi(W\boldsymbol{F}_G) = \Phi(W)\boldsymbol{F}_H$ for all atomic $\boldsymbol{F} \in Q_k$.
  By~\eqref{eq2:q-iso-map-2} we have that $\Phi(\boldsymbol{F}_G\odot X)$ = $\boldsymbol{F}_H\odot\Phi(X)$ for all $\boldsymbol{F} \in \calQ_k^P$ and $X \in \complex^{V(G)^k\times V(G)^k}$. Hence the restriction of $\Phi$ to $\widehat{\mathcal{Q}}_G^k$ gives us an algebraic $k$-equivalence from $G$ to $H$. 
  
      Conversely, suppose that $\varphi \colon \widehat{\calQ}^k_G \to\widehat{\calQ}^k_H$ is an algebraic $k$-equivalence. Now, by \cref{lem:linalg_unitary}, there exists a unitary matrix $U \in M_{{V(G)}^k}(\mathbb{C})$ such that $\varphi(X) = UXU^*$ for all $X \in \widehat{\calQ}^k_G$.
		Let $\widehat{\varphi} \colon M_{V(G)^k}(\complex) \to M_{V(H)^k}(\complex)$ be the map given by $\widehat{\varphi}(X) = UXU^*$.
		Let $\Pi \colon M_{V(G)^k}(\complex) \to \widehat{\mathcal{Q}}^k_G$ be the orthogonal projection onto $\widehat{\mathcal{Q}}^k_G$.
		Define $\Phi \colon M_{V(G)^k}(\complex) \to M_{V(H)^k}(\complex)$ by $\Phi \coloneqq \widehat{\varphi} \circ \Pi$.
		We know that $\widehat{\varphi}$ is completely positive, trace-preserving, and unital; by \cite[Lemma~5.4]{david_mathprog}, so is $\Pi$ and hence their composition $\Phi$ is also a unital completely positive trace-preserving map. 
		
		Furthermore, $\Pi(J) = J$, which implies $\Phi(J) =J = \Phi^*(J)$. Consider the linear map $\Lambda_\sigma \colon X \mapsto X^\sigma$ for $\sigma \in \cyclicpermutations$. Since $\widehat{\mathcal{Q}}^k_G$ is closed under the action of $\cyclicpermutations$, it holds that $\Lambda_\sigma \circ \Pi = \Pi \circ \Lambda_\sigma \circ \Pi $. Furthermore, $(\Lambda_{\sigma})^{*} = \Lambda_{\sigma^{-1}}$ and $\Pi$ is self-adjoint, i.e.\@ $\Pi^* = \Pi$. Hence,
		\[
		\Pi \circ \Lambda_{\sigma}
		= \Pi^* \circ \Lambda_\sigma
		= (\Lambda_{\sigma^{-1}} \circ \Pi)^*
		= (\Pi \circ \Lambda_{\sigma^{-1}} \circ \Pi)^*
		= \Pi \circ \Lambda_{\sigma} \circ \Pi
		= \Lambda_\sigma \circ \Pi.
		\]
		So $\Pi$ and $\Lambda_\sigma$ commute. Hence,
		\begin{align*}
		\Phi(X^\sigma) & = (\widehat{\varphi} \circ \Pi \circ \Lambda_\sigma)(X) \\
			& = (\widehat{\varphi} \circ \Lambda_\sigma \circ \Pi )(X) \\
            & = \widehat{\varphi} ((\Pi(X))^{\sigma}) \\
            & =  \widehat{\varphi}(\Pi(X))^{\sigma} \\
			& = \Phi(X)^\sigma
		\end{align*}
    For $\boldsymbol{F} \in \calQ_k^P$, the map $\Lambda_{\boldsymbol{F}_G}(X) = \boldsymbol{F}_G \odot X$ is an orthogonal projection satisfying $\Lambda_{\boldsymbol{F}_G} \circ \Pi = \Pi \circ \Lambda_{\boldsymbol{F}_G} \circ \Pi$, and so we can similarly show that $\Phi(\boldsymbol{F}_G \odot X) = \boldsymbol{F}_H \odot \Phi(X)$ for all $X \in M_{V(G)^k}(\complex)$. We direct the reader to the proof of a similar statement \cite[Theorem 6.2]{david_mathprog} for more details. 
\end{proof}

\section{Homomorphism Indistinguishability}\label{sec:hom-ind}

In this section, we shall finish the proof of the main theorem by constructing the graph classes $\calP_k$ such that homomorphism indistinguishability over $\mathcal{P}_k$ is equivalent to the feasibility of the $k^{\text{th}}$-level of the NPA hierarchy. As stated earlier, we shall start with the atomic graphs $\calQ_k$ as our building blocks and construct the graph class $\calP_k$ by series composition, cyclic permutation of labels, and parallel composition with appropriate graphs. 

\begin{definition}\label{def:pk}
  Let $\mathcal{P}_k$ be the class of $(k, k)$-bilabelled graphs generated by the set of atomic graphs $\mathcal{Q}_k$ under parallel composition with graphs from $\mathcal{Q}_k^P$, series composition, and the action of the group $\cyclicpermutations$ on the labels.
\end{definition}

We remark that the action of $\cyclicpermutations$ on a bilabelled graph $\boldsymbol{F} \in \calP_k$ corresponds to ``rotating" the drawing of $\mathcal{F}$ (see e.g.~Figure~\ref{fig:inprod_compatibilty}).

\subsection{Inner-product Compatibility of $\calP_k$}\label{sub-sec:inprodcomp}

A class of $(k,k)$-bilabelled graphs $\calT$ is said to be \emph{inner-product compatible} if for all $\boldsymbol{R}, \boldsymbol{S} \in \calT$, there is a $\boldsymbol{Q} \in \calT$ such that $\tr(\boldsymbol{R}^*\cdot \boldsymbol{S}) = \soe(\boldsymbol{Q})$. 

\begin{lem}\label{lem:inn-pro-comp}
    The graph classes $\calP_k$ are inner-product compatible for each $k \in \mathbb{N}$. 
\end{lem}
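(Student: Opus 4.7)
The first step is to translate the trace into a sum of entries. For any graph $G$, since homomorphism tensor entries are real (indeed, non-negative integers),
\[
\tr(\boldsymbol{R}_G^*\cdot\boldsymbol{S}_G) = \sum_{\boldsymbol{x},\boldsymbol{y}\in V(G)^k}\boldsymbol{R}_G(\boldsymbol{y},\boldsymbol{x})\,\boldsymbol{S}_G(\boldsymbol{y},\boldsymbol{x}) = \soe\bigl((\boldsymbol{R}\odot\boldsymbol{S})_G\bigr) = \hom(\soe(\boldsymbol{R}\odot\boldsymbol{S}),G).
\]
Equivalently, as unlabelled graphs, $\tr(\boldsymbol{R}^*\cdot\boldsymbol{S})\cong\soe(\boldsymbol{R}\odot\boldsymbol{S})$: the graph obtained from $R\sqcup S$ by identifying $u_i^R$ with $u_i^S$ and $v_i^R$ with $v_i^S$ for every $i\in[k]$ and then forgetting all labels. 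Since $\soe(\boldsymbol{Q}_G)=\hom(\soe(\boldsymbol{Q}),G)$ for any bilabelled graph $\boldsymbol{Q}$, it suffices to exhibit $\boldsymbol{Q}\in\calP_k$ whose underlying unlabelled graph $\soe(\boldsymbol{Q})$ is isomorphic to $\soe(\boldsymbol{R}\odot\boldsymbol{S})$.

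The obvious candidates are unavailable: $\boldsymbol{Q}=\boldsymbol{R}\odot\boldsymbol{S}$ fails because $\calP_k$ is closed under parallel composition only against elements of $\calQ_k^P$, and taking $\boldsymbol{Q}$ to be $\boldsymbol{R}^*\cdot\boldsymbol{S}$ fails because for $k\geq 2$ the swap of input and output labels is not realized by any element of $\cyclicpermutations$. The plan is to simulate the full-boundary gluing of $\boldsymbol{R}\odot\boldsymbol{S}$ using the three operations that \emph{are} available in $\calP_k$: series composition (which glues along one half of the cyclic boundary), cyclic rotations in $\cyclicpermutations$ (which choose which half to glue), and parallel composition with atomic graphs in $\calQ_k^P$ (which can contribute additional boundary edges). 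I will assemble $\boldsymbol{Q}$ in two stages. In the first stage, cyclically rotate $\boldsymbol{R}$ and $\boldsymbol{S}$ so that the outputs of the rotated $\boldsymbol{R}$ and the inputs of the rotated $\boldsymbol{S}$ both fall on the $u$-labels, and series-compose to identify $u_i^R$ with $u_i^S$ for every $i\in[k]$. In the second stage, apply a further cyclic rotation and series-compose with an appropriately chosen atomic graph from $\calQ_k$ whose internal label structure effects the remaining identifications $v_i^R=v_i^S$ while keeping the result a $(k,k)$-bilabelled graph in $\calP_k$. Tracking the vertex identifications through the composition then confirms $\soe(\boldsymbol{Q})\cong\soe(\boldsymbol{R}\odot\boldsymbol{S})$; the construction is depicted in Figure~\ref{fig:inprod_compatibilty}.

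The main obstacle will be the second stage. After the first series composition, the labels $v^R$ appear as inputs and the labels $v^S$ as outputs of a bilabelled graph in $\calP_k$, but their index orderings depend on the initial cyclic rotations in a way that no single further rotation in $\cyclicpermutations$ can realign so that a plain series composition matches $v_i^R$ with $v_i^S$ pointwise. The creative ingredient is the choice of the mediating atomic graph, whose internal edge and label structure performs the needed realignment under composition and rotation. This is precisely the step with no counterpart in the commutative Lasserre setting of \cite{roberson-seppelt-arxiv}, where the analogous graph classes are closed under arbitrary parallel composition and inner-product compatibility is immediate; here, the noncommutative (planar, boundary-ordered) nature of $\calP_k$ forces the more delicate gluing argument sketched above.
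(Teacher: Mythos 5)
Your opening reduction is fine: it suffices to exhibit $\boldsymbol{Q}\in\calP_k$ whose underlying unlabelled graph is the one obtained from $R\sqcup S$ by identifying $u_i^R$ with $u_i^S$ and $v_i^R$ with $v_i^S$. The problems are in the two stages of your construction. Stage~1 is impossible as stated for $k\geq 2$: the only permutations available are rotations of the circular arrangement $u_1,\dots,u_k,v_k,\dots,v_1$, and under any rotation the in-labels are read \emph{forwards} along this circle while the out-labels are read \emph{backwards}. Hence the unique rotation of $\boldsymbol{R}$ whose out-labels are the $u^R$'s lists them as $(u_k^R,\dots,u_1^R)$, whereas the unique rotation of $\boldsymbol{S}$ whose in-labels are the $u^S$'s lists them as $(u_1^S,\dots,u_k^S)$; series composition therefore identifies $u_{k+1-i}^R$ with $u_i^S$, not $u_i^R$ with $u_i^S$. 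This reversal cannot be shrugged off: for $k=2$ with $R=S$ consisting of the single edge $u_1v_1$, the pointwise gluing yields one edge plus two isolated vertices while the reversed gluing yields two disjoint edges. The pointwise identification is exactly what the adjoint provides, and, contrary to your claim, $\boldsymbol{R}^*\in\calP_k$: the class is closed under $*$, since $\boldsymbol{M}_k$ and $\boldsymbol{C}_k$ (hence their minors) are self-adjoint up to isomorphism, $(\boldsymbol{A}\cdot\boldsymbol{B})^*=\boldsymbol{B}^*\cdot\boldsymbol{A}^*$, $(\boldsymbol{A}\odot\boldsymbol{Q})^*=\boldsymbol{A}^*\odot\boldsymbol{Q}^*$ with $\boldsymbol{Q}^*\in\calQ_k^P$, and the adjoint of a rotation of $\boldsymbol{A}$ is a rotation of $\boldsymbol{A}^*$. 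So the right reason to discard the candidate $\boldsymbol{Q}=\boldsymbol{R}^*\cdot\boldsymbol{S}$ is not membership but the fact that $\soe$ does not perform the trace identification of the $v$-labels; the paper in fact takes $\boldsymbol{R}^*\cdot\boldsymbol{S}\in\calP_k$ as its starting point.

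Stage~2, which you yourself call the crux, is only asserted: no mediating atomic graph is specified, and none can do what you ask of it. Every member of $\calP_k$ is a planar bilabelled graph (\cref{lem:planar-class}), whereas the order-reversing rewiring needed to repair the misalignment left by your stage~1 (in-label $i$ joined to out-label $k+1-i$) has pairwise crossing wires with respect to the boundary order $u_1,\dots,u_k,v_k,\dots,v_1$ and is non-planar for $k\geq 2$, so it is not in $\calP_k$ and cannot be simulated by composing with graphs from $\calQ_k$, whose only available rewiring is a cyclic rotation. The mechanism that actually works, and is what the paper does, is different: starting from $\boldsymbol{R}^*\cdot\boldsymbol{S}$, whose trace glues its $i$-th in-label to its $i$-th out-label, one rotates by a quarter turn so that the pairs to be glued become \emph{nested} pairs lying entirely on the input side and entirely on the output side, and then closes these off by series composition on both sides with suitable rotations $\boldsymbol{T},\boldsymbol{T}^*$ of an atomic graph; moreover the parity of the number of labels matters, and the odd case requires an additional parallel composition with the single-vertex graph $\boldsymbol{O}_k\in\calQ_k^P$ — a case distinction absent from your plan. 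As written, the proposal's first step cannot be realized with the operations of $\calP_k$, and its second, decisive step is missing.
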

\begin{proof}
    We first deal with the even case. Let $\boldsymbol{T}$ and $\boldsymbol{T^*}$ be the two graphs obtained by rotating the atomic graph $\boldsymbol{F} \in \calP_{2k}$ as in \cref{fig:inprod_compatibilty}(a). Let $\boldsymbol{R}, \boldsymbol{S} \in \calP_{2k}$ be two bilabelled graphs. Let $(\boldsymbol{R}^*\boldsymbol{S})^{90^\circ}$ be the graph obtained by rotating $\boldsymbol{R}^*\boldsymbol{S}$ as in \cref{fig:inprod_compatibilty}(b). We note that $\boldsymbol{R}^*\boldsymbol{S}, (\boldsymbol{R}^*\boldsymbol{S})^{90^\circ} \in \calP_{2k}$. It is now apparent from \cref{fig:inprod_compatibilty}(c) that $\tr(\boldsymbol{R}^*\boldsymbol{S}) = \soe(\boldsymbol{T} \cdot (\boldsymbol{R}^*\boldsymbol{S})^{90^\circ} \cdot \boldsymbol{T^*})$, and evidently we have that $(\boldsymbol{T} \cdot (\boldsymbol{R}^*\boldsymbol{S})^{90^\circ} \cdot \boldsymbol{T^*}) \in \calP_{2k}$. Since $\boldsymbol{R}$ and $\boldsymbol{S}$ were arbitrary, this shows that $\calP_{2k}$ is inner-product compatible for all $k \in \mathbb{N}$. 

    Similarly, for the odd case consider graphs $\boldsymbol{R}, \boldsymbol{S} \in \calP_{2k+1}$. Let $\boldsymbol{T}$ and $\boldsymbol{T^*}$ be the graphs obtained by rotating the atomic graph $\boldsymbol{F}$ as in \cref{fig:inprod_compatibilty_odd}(a), and $(\boldsymbol{R}^*\boldsymbol{S})^{90^\circ}$ be the graph obtained by rotating $\boldsymbol{R}^*\boldsymbol{S}$ as in \cref{fig:inprod_compatibilty_odd}(b). We can now see from \cref{fig:inprod_compatibilty_odd}(c) that $\tr(\boldsymbol{R}^*\boldsymbol{S}) = \soe((\boldsymbol{T} \cdot (\boldsymbol{R}^*\boldsymbol{S})^{90^\circ} \cdot \boldsymbol{T^*}) \odot \boldsymbol{O}_k)$, and it is obvious that $(\boldsymbol{T} \cdot (\boldsymbol{R}^*\boldsymbol{S})^{90^\circ} \cdot \boldsymbol{T^*}) \odot \boldsymbol{O}_k \in \calP_{2k+1}$, where $\boldsymbol{O}_k$ is the $(k,k)$-bilabelled graph with a single vertex and no edges. This establishes that $\calP_{2k+1}$ is inner-product compatible for all $k \in \mathbb{N}$, which finishes the proof. 
\end{proof}

\begin{figure}[t]
  \centering
  \includegraphics{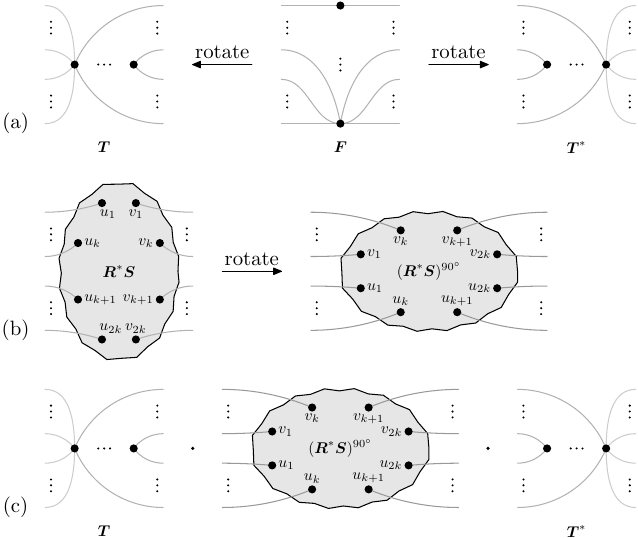}
  \caption{Inner-product compatibility in the even case.}
  \label{fig:inprod_compatibilty}
\end{figure}

\begin{thm}[{\cite[Theorem 4.6]{roberson-seppelt-arxiv}}]\label{thm:inn-prod-comp}
    Let $\calS$ be an inner-product compatible class of $(k,k)$-bilabelled graphs containing $J$. Write $\calS_G$ for the homomorphism tensors $\{\boldsymbol{F}_G \mid \boldsymbol{F} \in \mathcal{S}\}$, and let $\mathbb{C}S_G \subseteq M_{V(G)^k}(\mathbb{C})$ denote the vector space spanned by $\calS_G$. Then, the following are equivalent:
    \begin{enumerate}[label = (\roman*)]
        \item $G$ and $H$ are homomorphism indistinguishable over $\calS$. 
        \item there exists a sum-preserving vector space isomorphism $\varphi: \mathbb{C}\calS_G \to \mathbb{C}\calS_H$ such that $\varphi(\boldsymbol{F}_G) = \boldsymbol{F}_H$ for all $\boldsymbol{F} \in S$. 
    \end{enumerate}
\end{thm}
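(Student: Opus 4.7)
The direction (ii)$\Rightarrow$(i) is immediate: for any $\boldsymbol{F}\in\calS$, sum-preservation and $\varphi(\boldsymbol{F}_G)=\boldsymbol{F}_H$ give
\[
\hom(\soe \boldsymbol{F},G) = \soe(\boldsymbol{F}_G) = \soe(\varphi(\boldsymbol{F}_G)) = \soe(\boldsymbol{F}_H) = \hom(\soe \boldsymbol{F},H).
\]

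For (i)$\Rightarrow$(ii), the plan is to define $\varphi\colon \mathbb{C}\calS_G \to \mathbb{C}\calS_H$ on the spanning set by $\boldsymbol{F}_G \mapsto \boldsymbol{F}_H$ and extend by linearity. The main step is well-definedness: if $\sum_i c_i (\boldsymbol{F}_i)_G = 0$ in $\mathbb{C}\calS_G$, I want to conclude $\sum_i c_i (\boldsymbol{F}_i)_H = 0$. Because the Frobenius inner product $\langle A,B\rangle = \tr(A^*B)$ is positive-definite on $\mathbb{C}\calS_H$, it suffices to verify that $\sum_i c_i \tr((\boldsymbol{F}_i)_H^* \cdot \boldsymbol{S}_H) = 0$ for every $\boldsymbol{S} \in \calS$. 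Inner-product compatibility furnishes, for each $i$ and each $\boldsymbol{S}\in\calS$, a bilabelled graph $\boldsymbol{Q}_{i,\boldsymbol{S}} \in \calS$ with $\tr(\boldsymbol{F}_i^* \cdot \boldsymbol{S}) = \soe(\boldsymbol{Q}_{i,\boldsymbol{S}})$ at the level of bilabelled graphs, so that for any graph $X$,
\[
\tr\bigl((\boldsymbol{F}_i)_X^{*}\,\boldsymbol{S}_X\bigr) = \soe\bigl((\boldsymbol{Q}_{i,\boldsymbol{S}})_X\bigr) = \hom(\soe \boldsymbol{Q}_{i,\boldsymbol{S}},\,X).
\]
Applying hypothesis (i) to $X=G$ and $X=H$ yields $\tr((\boldsymbol{F}_i)_G^{*}\boldsymbol{S}_G) = \tr((\boldsymbol{F}_i)_H^{*}\boldsymbol{S}_H)$, and therefore
\[
\sum_i c_i \tr\bigl((\boldsymbol{F}_i)_H^{*}\boldsymbol{S}_H\bigr) = \Bigl\langle \sum_i c_i (\boldsymbol{F}_i)_G,\,\boldsymbol{S}_G\Bigr\rangle = 0,
\]
as desired. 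The symmetric argument (exchanging $G$ and $H$) shows that the inverse assignment $\boldsymbol{F}_H \mapsto \boldsymbol{F}_G$ is also well-defined, so $\varphi$ is a linear bijection.

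Sum-preservation is then immediate on generators from (i), since $\soe(\varphi(\boldsymbol{F}_G)) = \soe(\boldsymbol{F}_H) = \hom(\soe \boldsymbol{F}, H) = \hom(\soe \boldsymbol{F}, G) = \soe(\boldsymbol{F}_G)$, and extends to all of $\mathbb{C}\calS_G$ by linearity. The hypothesis $\boldsymbol{J}_k\in\calS$ enters naturally through inner-product compatibility, since pairing with $\boldsymbol{J}_k$ realises the functional $\soe(\cdot)$ as a Frobenius pairing inside $\calS_G$. The main obstacle is the well-definedness step; the conceptual point is that inner-product compatibility is precisely the hypothesis needed to convert the algebraic dependency $\sum_i c_i (\boldsymbol{F}_i)_G = 0$ into combinatorial hom-count identities accessible to (i), after which positive-definiteness of the Frobenius form closes the argument.
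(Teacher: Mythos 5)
Your proof is correct, and since the paper does not reprove this statement but imports it from \cite{roberson-seppelt-arxiv}, the right comparison is with that source: your argument (equality of Frobenius--Gram matrices of $\{\boldsymbol{F}_G\}$ and $\{\boldsymbol{F}_H\}$ via inner-product compatibility and hypothesis (i), positive-definiteness of the Frobenius form to get well-definedness of $\boldsymbol{F}_G \mapsto \boldsymbol{F}_H$ and of its inverse, and the trivial converse) is essentially the same isometry-style proof. Two harmless remarks: your displayed pairing places the coefficients $c_i$ in the conjugate-linear slot, which is immaterial only because homomorphism tensors are real (or simply pair with $\boldsymbol{S}_H$ in the first slot), and your direct ``agree on a spanning set'' argument for sum-preservation does not actually use the hypothesis $\boldsymbol{J} \in \calS$, which is fine since that hypothesis is not needed for this implication.
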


\begin{figure}[t]
  \centering
  \includegraphics{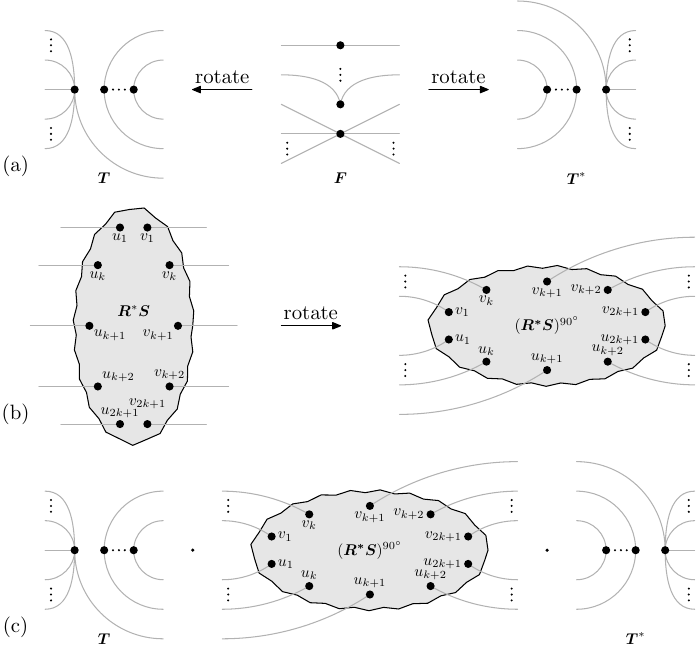}
  \caption{Inner-product compatibility in the odd case.}
  \label{fig:inprod_compatibilty_odd}
\end{figure}

The next theorem completes the proof of \cref{thm:main-theorem}. 

\begin{thm}\label{thm:main-theorem-3}
  Let $k\geq 1$. 
  Two graphs $G$ and $H$ are homomorphism indistinguishable over $\calP_k$ if and only if they are partially $k$-equivalent.
\end{thm}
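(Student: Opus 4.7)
The plan is to deduce the theorem from Theorem~\ref{thm:inn-prod-comp} applied to $\mathcal{S} = \mathcal{P}_k$. By Lemma~\ref{lem:inn-pro-comp} the class $\mathcal{P}_k$ is inner-product compatible, and since $\boldsymbol{J}_k \in \mathcal{Q}_k \subseteq \mathcal{P}_k$ we have $J \in \mathbb{C}(\mathcal{P}_k)_G$. The theorem then reduces $G \cong_{\mathcal{P}_k} H$ to the existence of a sum-preserving linear isomorphism $\varphi \colon \mathbb{C}(\mathcal{P}_k)_G \to \mathbb{C}(\mathcal{P}_k)_H$ satisfying $\varphi(\boldsymbol{F}_G) = \boldsymbol{F}_H$ for every $\boldsymbol{F} \in \mathcal{P}_k$. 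The task is to upgrade such a $\varphi$ to an algebraic $k$-equivalence (and, conversely, to restrict any algebraic $k$-equivalence to a map of the form required by Theorem~\ref{thm:inn-prod-comp}).

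The linchpin of the argument is the identification $\mathbb{C}(\mathcal{P}_k)_G = \widehat{\mathcal{Q}}_G^k$. For the inclusion $\subseteq$, one proceeds by structural induction on the construction of $\mathcal{P}_k$: the atomic generators' tensors lie in $\widehat{\mathcal{Q}}_G^k$ by definition, and the algebro-combinatorial dictionary of Section~\ref{sec:hom-ten} translates each generating combinatorial operation (series composition, the $\cyclicpermutations$-action on labels, parallel composition with $\mathcal{Q}_k^P$) into an algebraic operation ($\cdot$, $X \mapsto X^\sigma$, $\odot$ with $S_k$-elements) under which $\widehat{\mathcal{Q}}_G^k$ is closed. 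For $\supseteq$, one verifies that $\mathbb{C}(\mathcal{P}_k)_G$ is itself a unital, self-adjoint, cyclically-symmetric, $S_k$-partially coherent algebra containing the tensors of $\mathcal{Q}_k$, so minimality of $\widehat{\mathcal{Q}}_G^k$ yields the reverse inclusion. Self-adjointness of the span requires the short observation that the $*$-operation on bilabelled graphs, $\boldsymbol{F} \mapsto \boldsymbol{F}^*$, coincides with $\boldsymbol{F}^\tau$ for a suitable $\tau \in \cyclicpermutations$, so $\mathcal{P}_k$ is closed under $*$.

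Granting this identification, $\varphi$ becomes a linear isomorphism between $\widehat{\mathcal{Q}}_G^k$ and $\widehat{\mathcal{Q}}_H^k$, and each axiom of Definition~\ref{def:algebraic-k-quiv} is checked by reducing via bilinearity to a statement about generators $\boldsymbol{F}_G$, $\boldsymbol{F} \in \mathcal{P}_k$, and then invoking the dictionary. For instance, $\varphi\bigl((\boldsymbol{F}_1)_G (\boldsymbol{F}_2)_G\bigr) = \varphi\bigl((\boldsymbol{F}_1 \cdot \boldsymbol{F}_2)_G\bigr) = (\boldsymbol{F}_1 \cdot \boldsymbol{F}_2)_H = (\boldsymbol{F}_1)_H (\boldsymbol{F}_2)_H$ gives multiplicativity, and analogous one-line calculations cover the $*$-, Schur-, and cyclic-permutation-conditions. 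For the converse direction, every algebraic $k$-equivalence is automatically sum-preserving by the identity $\soe(\varphi(M)) = \tr(J \varphi(M)) = \tr(\varphi(JM)) = \tr(JM) = \soe(M)$ noted after Definition~\ref{def:algebraic-k-quiv}, so its restriction to $\mathbb{C}(\mathcal{P}_k)_G$ verifies the hypothesis of Theorem~\ref{thm:inn-prod-comp}.

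The main obstacle I expect is verifying trace-preservation in the forward direction. Because ``taking the trace'' of a bilabelled graph (identifying each input label with its paired output label) is not itself one of the generating operations of $\mathcal{P}_k$, sum-preservation does not directly yield trace-preservation. The strategy is to exhibit, for each $\boldsymbol{F} \in \mathcal{P}_k$, a bilabelled graph $\boldsymbol{F}' \in \mathcal{P}_k$ whose underlying unlabelled graph equals $\tr \boldsymbol{F}$; constructively, $\boldsymbol{F}'$ is obtained by applying a suitable rotation from $\cyclicpermutations$ to $\boldsymbol{F}$ and then series composing with an atomic graph drawn from $\mathcal{Q}_k^S$ that identifies the requisite pairs of labels. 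Homomorphism indistinguishability over $\mathcal{P}_k$ then gives $\tr(\boldsymbol{F}_G) = \hom(\tr \boldsymbol{F}, G) = \hom(\tr \boldsymbol{F}, H) = \tr(\boldsymbol{F}_H)$, from which $\tr(\varphi(M)) = \tr(M)$ for all $M \in \widehat{\mathcal{Q}}_G^k$ follows by linearity.
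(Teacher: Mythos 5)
Your overall skeleton is the same as the paper's: reduce to \cref{thm:inn-prod-comp} with $\calS = \calP_k$ (using \cref{lem:inn-pro-comp} and $\boldsymbol{J}_k \in \calP_k$), identify $\mathbb{C}(\calP_k)_G$ with $\widehat{\mathcal{Q}}_G^k$, transfer the axioms of \cref{def:algebraic-k-quiv} through the combinatorial--algebraic dictionary, and use sum-preservation of algebraic $k$-equivalences for the converse. The genuine problem is your constructive recipe for the step you yourself flag as the main obstacle, trace-preservation. You claim that for $\boldsymbol{F} \in \calP_k$ a graph $\boldsymbol{F}'$ with $\soe \boldsymbol{F}' = \tr \boldsymbol{F}$ can be obtained by rotating $\boldsymbol{F}$ by some $\sigma \in \cyclicpermutations$ and series composing with one atomic graph from $\calQ_k^S$. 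This cannot work: rotations only permute which vertices carry which labels, and in a series composition $\boldsymbol{F}^\sigma \cdot \boldsymbol{Q}$ two \emph{distinct} vertices of $F$ get merged only if two distinct in-labels of $\boldsymbol{Q}$ sit on the same vertex of $Q$; for every minor of the matching $\boldsymbol{M}_k$ the distinct in-labels lie on distinct vertices (contracting a matching edge only merges in-label $i$ with out-label $i$), so no pair $u_i, v_i$ of $\boldsymbol{F}$ is ever identified. The same objection rules out parallel composition with $\calQ_k^P$ for the ``middle'' indices $1 < i < k$, since those labels are not cyclically adjacent. The correct way to get the fact you need is the one the paper uses: apply inner-product compatibility itself with $\boldsymbol{R} = \boldsymbol{I}_k$ and $\boldsymbol{S} = \boldsymbol{F}$, i.e.\ \cref{lem:inn-pro-comp} already hands you some $\boldsymbol{Q} \in \calP_k$ with $\tr(\boldsymbol{F}) = \tr(\boldsymbol{I}_k^* \cdot \boldsymbol{F}) = \soe(\boldsymbol{Q})$; the actual realization there is the two-sided folding $\boldsymbol{T} \cdot (\cdot)^{90^\circ} \cdot \boldsymbol{T}^*$ (with an extra $\odot\, \boldsymbol{O}_k$ in the odd case), not a one-sided composition. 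With that substitution your concluding computation $\tr(\boldsymbol{F}_G) = \hom(\tr\boldsymbol{F}, G) = \hom(\tr\boldsymbol{F}, H) = \tr(\boldsymbol{F}_H)$ and the linearity argument go through, and this is exactly the paper's closing remark that trace-preservation follows from sum-preservation plus inner-product compatibility.

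A second, smaller inaccuracy: to get self-adjointness of $\mathbb{C}(\calP_k)_G$ you assert that $\boldsymbol{F} \mapsto \boldsymbol{F}^*$ coincides with $\boldsymbol{F} \mapsto \boldsymbol{F}^\tau$ for some $\tau \in \cyclicpermutations$. Swapping in- and out-labels reverses the cyclic order $u_1, \dots, u_k, v_k, \dots, v_1$; a reversal is a reflection, not a rotation, so it is not induced by an element of $\cyclicpermutations$ in general. Closure of $\calP_k$ under $*$ is still true, but it should be proved by structural induction: $\calQ_k^S$ and $\calQ_k^P$ are closed under $*$ up to bilabelled isomorphism (for $\calQ_k^P$ use the automorphism of $C_k$ exchanging the two blocks of labels), and $(\boldsymbol{F}\cdot\boldsymbol{G})^* = \boldsymbol{G}^* \cdot \boldsymbol{F}^*$, $(\boldsymbol{F} \odot \boldsymbol{Q})^* = \boldsymbol{F}^* \odot \boldsymbol{Q}^*$, $(\boldsymbol{F}^\sigma)^* = (\boldsymbol{F}^*)^{\sigma'}$ with $\sigma' \in \cyclicpermutations$. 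With these two repairs your argument matches the paper's proof.
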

\begin{proof}
The main idea of the proof is to note that $\mathbb{C}(\calP_k)_G = \hat{\calQ}_k^G$ for any graph $G$. Now, by \cref{thm:inn-prod-comp}, it is clear that $G$ and $H$ are homomorphism indistinguishable over $\calP_k$ if and only if there is a sum-preserving vector space isomorphism from $ \hat{\calQ}_k^G$ to $ \hat{\calQ}_k^H$. Since every algebraic $k$-equivalence is sum-preserving, it is clear that if $G$ and $H$ are algebraically $k$-equivalent, then $G \cong_{\calP_k} H$. 

On the contrary, if we assume that $G \cong_{\calP_k} H$, then there is a sum preserving vector space isomorphism $\varphi\colon \hat{\calQ}_k^G$ to $ \hat{\calQ}_k^H$ such that $\varphi(\boldsymbol{F}_G) = \boldsymbol{F}_H$ for all $\boldsymbol{F} \in \calP_k$. Hence, if $\boldsymbol{F}_1, \boldsymbol{F}_2 \in \calP_k$, then
\[\varphi((\boldsymbol{F}_1)_G(\boldsymbol{F}_2)_G) = \varphi((\boldsymbol{F}_1 \cdot \boldsymbol{F}_2)_G) = (\boldsymbol{F}_1 \cdot \boldsymbol{F}_2)_H = (\boldsymbol{F}_1)_H(\boldsymbol{F}_2)_H = \varphi((\boldsymbol{F}_1)_G) \varphi((\boldsymbol{F}_2)_G).\]
Similarly, if $\boldsymbol{F}_2 \in \calQ_k^P$, then we can show that $\varphi((\boldsymbol{F}_1)_G \odot (\boldsymbol{F}_2)_G) = \varphi((\boldsymbol{F}_1)_G) \odot \varphi((\boldsymbol{F}_2)_G)$. Other than $\varphi$ being trace-preserving, the remaining conditions from \cref{def:algebraic-k-quiv} can be proven in similar manners. The fact $\varphi$ is trace-preserving follows from it being sum-preserving and $\calP_k$ being inner-product compatible.

\end{proof}

\subsection{Planarity and Minor-closedness of $\calP_k$}\label{sub-sec:graph-class}

In this subsection, we shall look at the graph classes $\calP_k$ in more detail. The first thing we note is that for each $k \in \mathbb{N}$, one has that $\calP_k \subseteq \calL_k$, which were the classes of graphs constructed in \cite{roberson-seppelt-arxiv} to obtain a homomorphism indistinguishability charatcterization for the Lasserre hierarchy of SDP relaxations of graph isomorphism. We include a brief overview of this in \cref{sec:lasserre} and we refer the reader to \cite{roberson-seppelt-arxiv} for more details. 
Indeed, for each $k \in \mathbb{N}$ it is not too difficult to see that $\calQ_k \subseteq \calA_k$ (where $\calA_k$ are the atomic graphs used to construct $\calL_k$ in \cite{roberson-seppelt-arxiv}). Moreover, the set of allowed operations while constructing $\calP_k$ from $\calQ_k$ is also a subset of the set of operations allowed while constructing $\calL_k$ from $\calA_k$. In fact, for $k = 1$, it is not too difficult to show that $\calP_k = \calL_k$ is the set of all outerplanar graphs. We note here that the class $\calP_k$ is not the set of $k$-outerplanar graphs for $k \neq 1$. For any $k \in \mathbb{N}$, one can construct a $k$-outerplanar graph in $\calP_2$. The following lemma is immediate from \cite[Lemma~4.7]{roberson-seppelt-arxiv}:

\begin{lem}\label{lem:bounded-tw}
    Let $F \in \calP_k$ be a $(k,k)$-bilabelled graph. Then, the treewidth of the underlying graph $\soe F$ is at most $3k -1$. 
\end{lem}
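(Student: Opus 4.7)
The plan is to observe that $\calP_k \subseteq \calL_k$ and then invoke \cite[Lemma~4.7]{roberson-seppelt-arxiv}, which already establishes the treewidth bound $3k-1$ for the larger class $\calL_k$. The containment is spelled out in the remarks immediately preceding the lemma statement: the atomic generators $\calQ_k$ of $\calP_k$ sit inside the atomic generators $\calA_k$ of $\calL_k$, and each of the three generating operations of $\calP_k$ (series composition, parallel composition with graphs from $\calQ_k^P$, and the action of $\cyclicpermutations$) is also among the operations used to build $\calL_k$ from $\calA_k$. Hence every $\boldsymbol{F}\in\calP_k$ is a member of $\calL_k$, and the bound applies verbatim.

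Should one prefer a self-contained argument, it proceeds by structural induction on the construction of $\boldsymbol{F}\in\calP_k$, with the strengthened inductive hypothesis that $\soe\boldsymbol{F}$ admits a tree decomposition of width at most $3k-1$ containing a distinguished bag $B^\star$ of size at most $3k$ that holds all $2k$ labelled vertices. The base case is immediate, as every atomic $\boldsymbol{F}\in\calQ_k$ consists of exactly $2k$ vertices, all of them labelled, so a single bag suffices. Cyclic relabelling leaves the underlying graph untouched. For parallel composition with $\boldsymbol{A}\in\calQ_k^P$, every vertex of $\boldsymbol{A}$ is a label and is therefore identified with a vertex already present in $B^\star$; the new edges from $E(A)$ lie entirely inside $B^\star$ and require no modification of the decomposition.

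The main work lies in series composition $\boldsymbol{F}\cdot\boldsymbol{F}'$, where the $k$ out-labels of $\boldsymbol{F}$ are identified with the $k$ in-labels of $\boldsymbol{F}'$. I would glue the two tree decompositions by introducing a fresh bridge bag $B^\star_{\mathrm{new}}$ of size $3k$ comprising the in-labels of $\boldsymbol{F}$, the $k$ identified vertices, and the out-labels of $\boldsymbol{F}'$, and attach $B^\star_{\mathrm{new}}$ to both $B^\star_F$ and $B^\star_{F'}$. This bridge bag both serves as the new distinguished bag for $\boldsymbol{F}\cdot\boldsymbol{F}'$ and enforces the connectedness required across the interface; since every identified vertex already appears in $B^\star_F$, $B^\star_{\mathrm{new}}$, and $B^\star_{F'}$, the tree decomposition axioms are routine to verify. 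This step is also where the bound $3k-1$ becomes tight, as series composition is the only operation that can force the distinguished bag to grow from $2k$ up to $3k$ vertices; tracking this growth cleanly is the only point where the direct argument requires some care.
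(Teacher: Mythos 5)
Your primary argument---the containment $\calP_k \subseteq \calL_k$ (atomic generators $\calQ_k \subseteq \calA_k$ and the generating operations of $\calP_k$ being a subset of those for $\calL_k$) combined with \cite[Lemma~4.7]{roberson-seppelt-arxiv}---is exactly the paper's proof, which notes the containment in the preceding discussion and declares the lemma immediate from that reference. Your supplementary direct induction (distinguished bag holding all labelled vertices, with a size-$3k$ bridge bag at series compositions) is a sound self-contained alternative, but it is not needed and is not what the paper does.
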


We now work towards giving an alternate proof of the main result of \cite{david-laura} that shows that two graphs are quantum isomorphic if and only if they are homomorphism indistinguishable over all planar graphs. 
Let $\mathcal{P} \coloneqq \bigcup_{k=1}^k \calP_k$. Then, it follows from the main theorem (\cref{thm:main-theorem}) and convergence of NPA hierarchy (\cref{prop:conv-npa-graph}) that $G \cong_q H$ is equivalent to $G \cong_{\calP} H$. This proves \cref{thm:quant-iso} with the exception of the claim in item (2) that $\calP$ is the set of all planar graphs. We now work towards proving this final claim, thus completing the proof of \cref{thm:quant-iso}.  

First, we show that for each $k \in \mathbb{N}$, the graph class $\calP_k$ only contains planar graphs. We begin by precisely defining what it means for a bilabelled graph to be planar. We use the definition given in \cite{david-laura}. 

\begin{definition}
    Given a $(k,l)$-bilabelled graph $\boldsymbol{G} = (G, \boldsymbol{a}, \boldsymbol{b})$ we define the graph $G^0$ as the graph obtained from $G$ 
    by adding a cycle $C = \alpha_1, \dots, \alpha_k, \beta_l, \dots , \beta_1$ 
    and the edges $a_i\alpha_i$ and $b_j\beta_j$ for each $i \in [k]$ and $j \in [l]$, and say that $\boldsymbol{G}$ is planar if $G^{0}$ has a planar embedding where the cycle $C$ is the boundary of a face. We shall refer to $C$ as the \emph{enveloping cycle} of $G^0$, and usually consider planar embeddings where $C$ is the boundary of the outer face.
\end{definition}

\begin{lem}\label{lem:planar-class}
    For each $k \in \mathbb{N}$, the class of graphs $\calP_k$ is contained in the set of all planar bilabelled graphs. 
\end{lem}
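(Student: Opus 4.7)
The plan is to prove the lemma by induction on the inductive construction of $\calP_k$ given in \cref{def:pk}: the base case handles the atomic generators in $\calQ_k = \calQ_k^P \cup \calQ_k^S$, and the inductive step shows that each of the three generating operations (series composition, cyclic label permutation, and parallel composition with a graph from $\calQ_k^P$) preserves planarity in the bilabelled sense.

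For the base case, I would first observe that $\boldsymbol{C}_k$ is itself a $2k$-cycle whose labels $1,\dots,k,2k,\dots,k+1$ are arranged exactly in the cyclic order of the enveloping cycle, so the graph $C_k^0$ can be drawn as two concentric cycles joined by radial edges, which is planar with the enveloping cycle bounding the outer face. Similarly, $\boldsymbol{M}_k$ consists of $k$ disjoint edges between $\{a_1,\dots,a_k\}$ and $\{b_k,\dots,b_1\}$, and $M_k^0$ can be drawn by placing the enveloping cycle as a convex polygon with the matching edges as $k$ nested non-crossing chords (since along the enveloping cycle the in- and out-labels are ordered oppositely). Planarity for the remaining members of $\calQ_k^P$ and $\calQ_k^S$ then follows since each is obtained from $\boldsymbol{C}_k$ or $\boldsymbol{M}_k$ by edge deletions, edge contractions, and deletions of unlabelled vertices (of which there are none here), and each of these minor operations applied to a planar embedding with a fixed outer face yields another such embedding.

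For the inductive step, series composition of two planar bilabelled graphs $\boldsymbol{F}_1$ and $\boldsymbol{F}_2$ is handled by drawing $\boldsymbol{F}_1$ on the left and $\boldsymbol{F}_2$ on the right with their respective enveloping cycles sharing the common identification edge, then erasing the shared arcs so that the combined outer boundary becomes the new enveloping cycle; cyclic permutation of labels from $\cyclicpermutations$ corresponds to rigidly rotating the embedding, so it clearly preserves planarity. The subtle case is parallel composition with $\boldsymbol{F}' \in \calQ_k^P$. Here I would use the crucial structural property that $\boldsymbol{F}'$ is a minor of the cycle $\boldsymbol{C}_k$ whose labels appear in the cyclic order $a_1,\dots,a_k,b_k,\dots,b_1$, which is precisely the cyclic order of labels along the enveloping cycle of a planar embedding of $\boldsymbol{F}$. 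Consequently, the edges of $F'$ (after the contractions defining the minor) can be drawn as non-crossing arcs in the outer face of $\boldsymbol{F}$'s embedding, attaching to the appropriate labelled vertices of $\boldsymbol{F}$, and a fresh enveloping cycle can then be drawn around the enlarged figure.

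The main obstacle will be the parallel composition step, since one must argue that the non-crossing arc structure inherited from $\boldsymbol{C}_k$ is compatible with the cyclic arrangement of labels along the existing enveloping cycle of $\boldsymbol{F}$. The key observation making this work is that the labels of $\boldsymbol{C}_k$ and the enveloping cycle follow the same cyclic sequence by construction, so every edge of $\boldsymbol{F}'$ corresponds to a chord of the enveloping cycle and the set of such chords inherited from a cycle-minor is automatically non-crossing. After taking care of this combinatorial alignment, routing the edges of $\boldsymbol{F}'$ through the outer face and extending the enveloping cycle is straightforward.
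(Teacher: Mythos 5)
Your proposal is correct in outline, but it takes a genuinely different route from the paper at the one step that actually requires care, namely parallel composition with a graph from $\calQ_k^P$. The paper handles the base case and cyclic permutations the same way you do, and for series composition it simply cites the corresponding lemma of Man\v{c}inska and Roberson rather than redoing the drawing argument; for parallel composition, however, it avoids any direct embedding argument by showing that $\boldsymbol{F} \odot \boldsymbol{F}'$ can be rebuilt from $\boldsymbol{F}$ via series compositions with small planar bilabelled graphs $\boldsymbol{U}, \boldsymbol{V}$ (encoding the edges of $\boldsymbol{F}'$ on the input and output sides) together with a cyclic permutation to take care of the two ``wrap-around'' adjacencies between $u_1,v_1$ and $u_k,v_k$, and then invokes closure under the operations already treated. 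You instead argue directly: since $\boldsymbol{F}'$ is a bilabelled minor of $\boldsymbol{C}_k$ and the cyclic order of the labels of $\boldsymbol{C}_k$ coincides with the order $\alpha_1,\dots,\alpha_k,\beta_k,\dots,\beta_1$ along the enveloping cycle, all edges and identifications imposed by $\boldsymbol{F}'$ involve only cyclically consecutive labelled vertices and can be realised by non-crossing arcs outside the drawing of $F$, after which a fresh enveloping cycle is drawn. This is sound and has the advantage of being self-contained and of treating the wrap-around edges $\{1,k+1\}$ and $\{k,2k\}$ uniformly (no special cyclic-permutation fix needed), at the cost of a genuinely topological argument that must be carried out carefully; the paper's reduction buys freedom from such drawing arguments by reusing the already established closure properties. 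Two points you should make explicit when writing this up: first, parallel composition with a minor of $\boldsymbol{C}_k$ not only adds edges but also \emph{identifies} cyclically consecutive labelled vertices of $\boldsymbol{F}$ (coming from contracted cycle edges), so your arc argument must cover merging consecutive attachment points, not just adding chords; second, the new arcs cannot literally be routed inside the face bounded by the old enveloping cycle (the labelled vertices are not on its boundary), so the argument should be phrased as discarding the enveloping cycle, routing the arcs in the unbounded region around the disk in which $F$ is drawn with its wires in the prescribed cyclic order, and only then drawing the new enveloping cycle -- exactly as your final sentence suggests.
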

\begin{proof}
    We prove this by induction. It is straightforward to see that all elements of $\calQ_k$ are planar bilabelled graphs. It follows from \cite[Lemma 5.12]{david-laura} that 
    if $\boldsymbol{G}$, $\boldsymbol{H}$ are $(k,k)$-bilabelled planar graphs, then so is $\boldsymbol{G} \cdot \boldsymbol{H}$. It is also clear that if $\boldsymbol{F}$ is planar, then so is $\boldsymbol{F}^\sigma$ for any $\sigma \in \cyclicpermutations$. Thus is only remains to show that if $\boldsymbol{F}$ is a planar bilabeled graph, then so is $\boldsymbol{F} \odot \boldsymbol{F}'$ for any $\boldsymbol{F}' \in \calQ_k^P$. We will show that $\boldsymbol{F} \odot \boldsymbol{F}'$ can be constructed from $\boldsymbol{F}$ and other planar bilabeled graphs through series composition and cyclic permutations. It then follows by the above that $\boldsymbol{F} \odot \boldsymbol{F}'$ is planar.

    Let Let $\boldsymbol{F} = (F,(u_1,\ldots, u_k),(v_1,\ldots,v_k))$ and $\boldsymbol{F}' = (F',(u'_1,\ldots, u'_k),(v'_1,\ldots,v'_k))$. Since $\boldsymbol{F}'$ is a minor of the cycle $\boldsymbol{C}_k$, the bilabelled graph $\boldsymbol{F} \odot \boldsymbol{F}'$ is obtained from $\boldsymbol{F}$ by adding an edge between $u_i$ and $u_{i+1}$ when $u'_iu'_{i+1} \in E(F')$, and similarly for $v_{i}$ and $v_{i+1}$, $u_1$ and $v_1$, and $u_k$ and $v_k$. Define $\boldsymbol{U} = (U,(u_1,\ldots,u_k),(u'_1,\ldots,u'_k))$ where $V(U) = \{u'_i \in V(F') : i \in [k]\}$ and $u'_iu'_{i+1} \in E(U)$ if $u'_iu'_{i+1} \in E(F')$. Define $\boldsymbol{V}$ similarly. Then the bilabelled graph $\boldsymbol{U} \cdot \boldsymbol{F} \cdot \boldsymbol{V}$ is almost $\boldsymbol{F} \odot \boldsymbol{F}'$, except that we may need to identify or make adjacent the vertices $u_1$ and $v_1$, and similarly for $u_k$ and $v_k$. This can easily be achieved by applying a cyclic permutation to $\boldsymbol{U} \cdot \boldsymbol{F} \cdot \boldsymbol{V}$ and then performing series composition on the left and right by appropriate bilabelled graphs similar to $\boldsymbol{U}$ and $\boldsymbol{V}$. It is easy to see that $\boldsymbol{U}$ and $\boldsymbol{V}$ are planar and thus we have show that $\boldsymbol{F} \odot \boldsymbol{F}'$ must be.

\end{proof}

We will also need that the class $\calP_k$ is closed under taking minors (recall the notion of minors of bilabelled graphs from~\cref{sec:hom-ten}) and that $\soe(\calP_k)$ is closed under minors and disjoint unions. The proofs of these facts are almost identical to the analogous proofs for the class $\mathcal{L}_k$ used in~\cite{roberson-seppelt-arxiv}. Therefore, we will just state the results and comment on the very minor differences in the proofs.




\begin{lem}\label{lem:minor-closedbl}
    For each $k \in \mathbb{N}$, the class of bilabelled graphs $\calP_k$ is minor-closed. 
\end{lem}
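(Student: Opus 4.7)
The plan is to prove the lemma by structural induction on the construction of $\boldsymbol{F} \in \calP_k$ via the generating operations given in \cref{def:pk}. For the base case, observe that $\calQ_k^P$ and $\calQ_k^S$ are by definition the classes of all bilabelled minors of the fixed graphs $\boldsymbol{C}_k$ and $\boldsymbol{M}_k$ respectively; hence, by transitivity of the bilabelled minor relation, both classes, and therefore their union $\calQ_k$, are minor-closed, so any bilabelled minor of an atomic generator lies in $\calP_k$.

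For the inductive step, I would analyse each of the three generating operations separately and show that any single bilabelled minor operation applied to the resulting graph (edge deletion, edge contraction, or deletion of an unlabelled vertex) can be realised by corresponding minor operations on the constituent factors, which remain in $\calP_k$ by the induction hypothesis. Cyclic permutation $\boldsymbol{F} \mapsto \boldsymbol{F}^\sigma$ is the easy case: since relabelling does not touch the underlying graph, the minors of $\boldsymbol{F}^\sigma$ are precisely $\boldsymbol{M}^\sigma$ for $\boldsymbol{M}$ a minor of $\boldsymbol{F}$, and $\calP_k$ is closed under $\cyclicpermutations$ by definition. For series composition $\boldsymbol{F}_1 \cdot \boldsymbol{F}_2$, every edge of the composition originates from a unique factor and every unlabelled vertex lives in a unique factor (since the identified glue vertices are all labelled), so deletion or contraction of such an edge or vertex can be pushed into that factor. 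Parallel composition $\boldsymbol{F} \odot \boldsymbol{F}'$ with $\boldsymbol{F}' \in \calQ_k^P$ is handled analogously, splitting the operation according to whether the affected edge or vertex comes from $\boldsymbol{F}$ or from $\boldsymbol{F}'$.

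The subtle point, which I expect to require the most care, is the edge-merging that occurs during composition: when both factors contribute an edge between the same pair of identified labelled vertices, these collapse to a single edge in the composition. Deleting such a merged edge must then be realised simultaneously in \emph{both} factors; this is permissible because $\calQ_k^P$ is minor-closed by the base case and $\calP_k$ is closed under minors of the other factor by induction. A parallel argument handles contraction of an edge incident to a shared labelled vertex: the new contracted vertex inherits the relevant labels, which is precisely what is needed for the composition structure to be respected. Since this structural-induction template mirrors the analogous proof for $\calL_k$ in \cite{roberson-seppelt-arxiv}, and since the operations generating $\calP_k$ form a subset of those used there, the argument transfers with only cosmetic changes.
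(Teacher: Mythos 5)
Your base case and the cyclic-permutation case are fine, but the series-composition case rests on a false claim that hides exactly the step the proof actually turns on. In $\boldsymbol{F}\cdot\boldsymbol{F}'$ the glue vertices (the identified vertices $\boldsymbol{v}_i=\boldsymbol{u}'_i$) are labelled \emph{in each factor} but are in general \emph{unlabelled in the composite}, and they belong to both factors; so it is not true that ``every unlabelled vertex lives in a unique factor (since the identified glue vertices are all labelled)''. The hard case is a minor operation touching such a vertex, e.g.\ deleting an unlabelled glue vertex of $\boldsymbol{F}\cdot\boldsymbol{F}'$: you cannot delete it inside either factor (there it carries a label), and the natural simulation---delete all its incident edges in both factors and recompose---leaves behind an unwanted isolated unlabelled vertex in the composite, which the target minor does not have. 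In the proof of the analogous statement for $\calL_k$ (Lemma~4.16 of \cite{roberson-seppelt-arxiv}) this is repaired by first taking a parallel composition with an atomic graph that merges the offending labelled vertex with another labelled vertex (there, the first input); that is legitimate for $\calL_k$ because its atomic graphs can identify an \emph{arbitrary} pair of label positions.

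This is also where your closing appeal breaks down: the fact that the operations generating $\calP_k$ form a subset of those generating $\calL_k$ makes the transfer \emph{harder}, not cosmetic, because at the critical step the $\calL_k$-proof uses an operation you no longer have---$\calQ_k^P$ consists of minors of the cycle $\boldsymbol{C}_k$, so parallel composition in $\calP_k$ can only identify cyclically consecutive label positions. The actual content of the paper's argument is the observation that the choice of merge partner in the $\calL_k$-proof was arbitrary, so one may instead merge the unwanted isolated labelled vertex with a cyclically adjacent labelled vertex (iterating along the cyclic order if that neighbour is itself unwanted and isolated), which is available in $\calP_k$. Your proposal never confronts this; the ``subtle point'' you do flag (parallel edges collapsing under composition) is the easy part and is handled, as you say, by deleting the edge in both factors. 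To repair the argument you need to (i) correct the bookkeeping for glue vertices in series composition, and (ii) supply the restricted-merge step just described.
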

The proof of this follows the proof of Lemma~4.16 of~\cite{roberson-seppelt-arxiv} almost exactly. The only difference is as follows. In the proof, at one point some care must be taken to remove some unwanted isolated labelled vertices that become isolated and unlabeled after a series composition. To do this, prior to the series composition, a parallel composition is used which has the effect of merging the $i^\text{th}$ labelled output vertex with the first input vertex. Such a parallel composition is not available in $\calP_k$, rather only cyclically consecutive labelled vertices can be merged. However, the choice of merging with the first input was arbitrary, and the only thing truly necessary for the proof is that the unwanted isolated labelled vertex is merged with some labelled vertex which does not become an unwanted isolated unlabelled vertex after the series composition. This is easily accomplished in $\calP_k$, since one can merge the unwanted isolated labeled vertex with one that is next to it in cyclic order (if this is also an unwanted isolated labelled vertex, then merge both with the next, etc.).

With the above lemma in hand, the proof of the following is identical to Lemma~4.9 in~\cite{roberson-seppelt-arxiv}.

\begin{lem}\label{lem:minor-closed}
    For each $k \in \mathbb{N}$, the class of graphs $\soe(\calP_k)$ is minor-closed and union-closed. 
\end{lem}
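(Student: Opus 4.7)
The plan is to reduce both statements to \cref{lem:minor-closedbl}, which already shows $\calP_k$ is closed under bilabelled minor operations, combined with the translation between graph minors and bilabelled minors recorded just after the definition of bilabelled minors.

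For minor-closedness, given $F \in \soe(\calP_k)$ witnessed by some $\boldsymbol{F} \in \calP_k$ and a graph minor $M$ of $F$, I would first invoke~\cite[Lemma~4.13]{roberson-seppelt-arxiv} to obtain a bilabelled minor $\boldsymbol{M}' \leq \boldsymbol{F}$ whose underlying graph is $M$ together with (possibly) some extra isolated unlabelled vertices. By \cref{lem:minor-closedbl}, $\boldsymbol{M}' \in \calP_k$. Since the extra vertices are unlabelled, I would further delete each of them via the bilabelled minor operation of deletion of unlabelled vertices, producing $\boldsymbol{M}'' \leq \boldsymbol{M}'$ with $\soe(\boldsymbol{M}'') = M$. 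A second application of \cref{lem:minor-closedbl} places $\boldsymbol{M}''$ in $\calP_k$, hence $M \in \soe(\calP_k)$.

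For union-closedness, let $\boldsymbol{F}_1, \boldsymbol{F}_2 \in \calP_k$ realise $F_1, F_2 \in \soe(\calP_k)$. The natural idea would be to take a parallel composition $\boldsymbol{F}_1 \odot \boldsymbol{F}_2$, but this is not a legal operation in $\calP_k$ because parallel composition is only allowed against elements of $\calQ_k^P$. The workaround I propose is to \emph{insulate} each $\boldsymbol{F}_i$ by forming
\[
  \boldsymbol{G}_i \coloneqq \boldsymbol{J}_k \cdot \boldsymbol{F}_i \cdot \boldsymbol{J}_k \;\in\; \calP_k,
\]
using the edgeless atomic graph $\boldsymbol{J}_k \in \calQ_k$. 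A direct inspection of series composition shows that $\soe(\boldsymbol{G}_i)$ is $F_i$ disjointly unioned with $2k$ fresh isolated vertices, which now carry every label of $\boldsymbol{G}_i$. Consequently the further series composition $\boldsymbol{G}_1 \cdot \boldsymbol{G}_2 \in \calP_k$ identifies only isolated vertices, and its underlying graph equals $F_1 \sqcup F_2$ together with $3k$ isolated vertices. Applying the already-established minor-closedness of $\soe(\calP_k)$ to delete these isolated vertices yields $F_1 \sqcup F_2 \in \soe(\calP_k)$.

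The main obstacle I anticipate is precisely the restricted form of parallel composition in $\calP_k$: one must carefully verify that the $\boldsymbol{J}_k$-padding trick genuinely transports a disjoint union to the underlying-graph level, and that every intermediate bilabelled graph remains inside $\calP_k$. This is the point at which the proof must genuinely diverge from the more flexible setup in~\cite{roberson-seppelt-arxiv}, although the final conclusion is the same.
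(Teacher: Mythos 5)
Your overall skeleton for minor-closedness (lift the graph minor to a bilabelled minor via \cite[Lemma~4.13]{roberson-seppelt-arxiv}, then apply \cref{lem:minor-closedbl}) is the intended one, but the crucial cleanup step is wrong: the leftover isolated vertices produced by that lemma are \emph{labelled}, not unlabelled. They are exactly the labelled vertices of $\boldsymbol{F}$ that the graph-minor operation wants to delete; since bilabelled minor operations only allow deletion of \emph{unlabelled} vertices, the best one can do is delete all their incident edges, which is why they survive as isolated vertices in $\soe(\boldsymbol{M}')$ at all. (If they were unlabelled and isolated, they could have been deleted inside the cited lemma and its statement would simply read $\soe(\boldsymbol{M}') = M$; the paper's paraphrase saying ``unlabelled'' is misleading on this point, but taken literally it would make the lemma's extra vertices vacuous.) Consequently your step ``delete each of them via the bilabelled minor operation of deletion of unlabelled vertices'' is not available, and this is precisely where the real work lies: following the argument of \cite[Lemma~4.9]{roberson-seppelt-arxiv} and the remark after \cref{lem:minor-closedbl}, one removes an unwanted isolated \emph{labelled} vertex by merging it into another labelled vertex via parallel composition with an identification gadget from $\calQ_k^P$ (a graph of type $\boldsymbol{C}^{=i}$); in $\calP_k$ only cyclically consecutive labels can be merged, so one chains such merges, and merging an isolated vertex into another vertex removes it from the underlying graph while staying inside $\calP_k$. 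Your proof is missing this idea, so the minor-closedness half has a genuine gap.

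For union-closedness, your padding trick does produce a graph of $\calP_k$ whose underlying graph contains $F_1 \sqcup F_2$, but as written it then needs the (gappy) minor-closedness step to discard $3k$ isolated vertices, of which $2k$ are labelled -- so even that cleanup cannot be done by vertex deletion alone. A cleaner route, entirely inside $\calP_k$ and requiring no cleanup, is the single series composition $\boldsymbol{F}_1 \cdot \boldsymbol{J}_k \cdot \boldsymbol{F}_2$: the in-vertices of $\boldsymbol{J}_k$ are absorbed into the out-vertices of $\boldsymbol{F}_1$, its fresh out-vertices are absorbed into the in-vertices of $\boldsymbol{F}_2$, and since $\boldsymbol{J}_k$ has no edges one gets $\soe(\boldsymbol{F}_1 \cdot \boldsymbol{J}_k \cdot \boldsymbol{F}_2) = \soe(\boldsymbol{F}_1) \sqcup \soe(\boldsymbol{F}_2)$ exactly. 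This is the device used in \cite{roberson-seppelt-arxiv}, and it makes the union part independent of the minor part.
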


We now work towards showing that although each $\calP_k$ has bounded treewidth, their union contains arbitrarily large grids  

\begin{lem}\label{lem:unbounded-tw}
     For each $k \in \mathbb{N}$, the class of graphs $\calP_k$ contains the $k \times k$ grid.   
\end{lem}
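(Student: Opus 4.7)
The plan is to construct the $k \times k$ grid by combining two building blocks from $\calQ_k$: the perfect matching $\boldsymbol{M}_k \in \calQ_k^S$ (supplying vertical edges between rows) and a pair of disjoint paths from $\calQ_k^P$ (supplying horizontal edges within rows). These are stitched into a ``$2 \times k$ ladder'' via parallel composition, and then $k-1$ copies of the ladder are glued together by series composition to form the full grid. The case $k = 1$ is degenerate: $\soe(\boldsymbol{I}_1)$ is a single vertex, which is the $1\times 1$ grid, so assume $k \geq 2$ from now on.

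The first step would be to identify the bilabelled graph $\boldsymbol{F} \in \calQ_k^P$ defined as the minor of $\boldsymbol{C}_k$ obtained by deleting the two bridge edges $\{1,k+1\}$ and $\{k,2k\}$; this is a valid bilabelled minor operation. Then $\boldsymbol{F}$ consists of a top path on $\{1,\dots,k\}$ carrying the input labels together with a disjoint bottom path on $\{k+1,\dots,2k\}$ carrying the output labels. Since $\boldsymbol{M}_k \in \calQ_k^S \subseteq \calP_k$ and $\calP_k$ is closed under parallel composition with graphs in $\calQ_k^P$ by \cref{def:pk}, the parallel composition
\[
\boldsymbol{L} \;:=\; \boldsymbol{F} \odot \boldsymbol{M}_k
\]
lies in $\calP_k$. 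A direct inspection using the definition of parallel composition (identify vertices with matching labels) shows that $\soe(\boldsymbol{L})$ is the $2 \times k$ ladder: two paths of $k$ vertices joined by a perfect matching, with input labels on the top row and output labels on the bottom row.

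The second step would be to form $\boldsymbol{G}_k := \boldsymbol{L} \cdot \boldsymbol{L} \cdots \boldsymbol{L}$, the series composition of $k-1$ copies of $\boldsymbol{L}$, which again lies in $\calP_k$ by \cref{def:pk}. Each series composition identifies the bottom row (output-labelled) of one copy with the top row (input-labelled) of the next; the convention that multiple edges are removed ensures that the duplicated horizontal edges along each shared row collapse to a single copy, while the matching edges of each copy remain in place as the new vertical edges between two adjacent rows. Iterating gives a graph with $k$ rows of $k$ vertices, each row a path and consecutive rows joined by a perfect matching, which is exactly the $k \times k$ grid. A quick edge count ($k(k-1)$ horizontal plus $k(k-1)$ vertical, totalling $2k(k-1)$) confirms that no stray edges appear.

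The main bookkeeping point, and the only real step that requires care rather than just invoking definitions, is verifying the behaviour of series composition on the shared rows: one must check that the duplicated horizontal edges on each interface row collapse to single edges, that the input/output label positions line up as claimed, and that no unintended vertex identifications take place between copies. All of this follows routinely from the conventions set out in \cref{sec:hom-ten}, so the argument reduces to these definitional verifications.
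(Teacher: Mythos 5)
Your proof is correct and follows essentially the same route as the paper: build a $2\times k$ ladder in $\calP_k$ by parallel-composing $\boldsymbol{M}_k$ with an element of $\calQ_k^P$, then series-compose $k-1$ copies to obtain the grid. The only (immaterial) difference is that you delete the bridge edges $\{1,k+1\}$, $\{k,2k\}$ from $\boldsymbol{C}_k$ beforehand, whereas the paper uses $\boldsymbol{M}_k \odot \boldsymbol{C}_k$ directly and lets the multiple-edge removal convention absorb them.
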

\begin{proof}
  Let $\boldsymbol{V}_k \in \calP_k$ denote the graph obtained by the parallel composition of $\boldsymbol{C}_k$ and $\boldsymbol{M}_k$, i.e $\boldsymbol{V}_k \coloneqq \boldsymbol{M}_k \odot \boldsymbol{C}_k$. Note that this parallel composition is allowed as $\boldsymbol{C}_k \in \calQ_k^P$. Clearly, $\soe(\boldsymbol{V}_k)$ is the vertical grid with $2$ columns and $k$ rows (see Figure~\ref{fig:grid}). Define $\boldsymbol{G}_k$ to be the graph constructed by the series composition of $k-1$ copies of $\boldsymbol{V}_k$, i.e. $\boldsymbol{G}_k = \boldsymbol{V}_k^{(1)} \cdot \ \dots \ \cdot V_k^{(k-1)}$, 
  where each $\boldsymbol{V}_k^{(i)}$ is an isomorphic copy of $\boldsymbol{V}_k$. It is clear from Figure~\ref{fig:grid} that $\soe \boldsymbol{G}_k$ is the $k \times k$ grid. 
\end{proof}

\begin{figure}[b]
  \centering
  \includegraphics{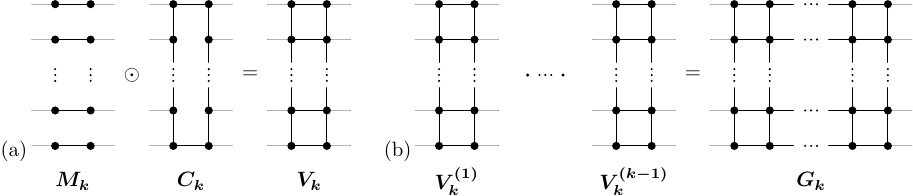}
  \caption{(a) Construction of the graphs $\boldsymbol{V}_k$. (b) Construction of the graph $\boldsymbol{G}_k$.}
  \label{fig:grid}
\end{figure}

We now need a standard result from graph minor theory that follows from \cite{robertson-seymour-3}:

\begin{thm}\label{thm:graph-minor-theory}
    For every planar graph $G$, there is a natural number $n_G$ such that $G$ is a minor of the $n_G \times n_G$ grid. 
\end{thm}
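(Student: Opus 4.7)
The plan is to realize $G$ as a topological minor of a sufficiently fine grid by digitizing a planar drawing. By Fáry's theorem (or indeed by any planar embedding together with the Jordan curve theorem), the graph $G$ admits a planar embedding in the plane in which the vertices are mapped to distinct points and each edge is drawn as a simple curve, with curves intersecting only at common endpoints. After a suitable affine rescaling, we may assume the image lies entirely in a large axis-aligned square $[0, N] \times [0, N]$. Consider the $N \times N$ integer grid $\Gamma_N$ with $N$ large.

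The key step is to \emph{digitize} this drawing at resolution $\Gamma_N$. First, choose $N$ large enough that (i) every vertex of $G$ lies in the interior of a distinct grid cell, and (ii) for every pair of distinct edges of $G$ (viewed as curves), the Euclidean distance between the two curves exceeds the grid spacing (except at their common endpoints, if any), and similarly every edge-curve is bounded away from every vertex other than its endpoints. Since the drawing uses only finitely many curves meeting only at endpoints, such an $N$ exists by compactness. For each vertex $v$ of $G$, designate a grid vertex $\tilde v \in V(\Gamma_N)$ lying in the cell containing $v$; for each edge $uv \in E(G)$, trace the edge-curve through $\Gamma_N$ and record a grid path $P_{uv}$ from $\tilde u$ to $\tilde v$ consisting of a sequence of consecutive grid edges following the curve. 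Condition (ii) guarantees that two such paths $P_{uv}, P_{u'v'}$ associated to distinct edges of $G$ are internally vertex-disjoint and touch only (possibly) at a shared endpoint $\tilde u = \tilde{u}'$.

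To extract $G$ as a minor, contract each path $P_{uv}$ down to a single edge joining $\tilde u$ and $\tilde v$, and delete all grid vertices and edges not used by any $P_{uv}$ or vertex-representative. What remains is precisely a subdivision of $G$ embedded in $\Gamma_N$; contracting each subdivision path to an edge yields $G$ itself as a minor of $\Gamma_N$. Setting $n_G := N$ completes the argument.

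The main technical obstacle is arranging the edge-routings so that distinct edges of $G$ do not collide as grid paths; but this is purely a matter of choosing $N$ large enough relative to the minimum pairwise distance between the edge-curves in the fixed planar embedding, which is positive by compactness. (Alternatively, one may invoke Robertson--Seymour's excluded-grid theorem applied to the minor-closed class of planar graphs, which immediately yields the statement, but the direct digitization argument above is self-contained.)
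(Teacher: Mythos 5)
Your main, self-contained argument has a genuine gap at the vertices of $G$. After digitizing, you route one grid path $P_{uv}$ into the single grid vertex $\tilde v$ for \emph{every} edge incident to $v$, and then claim that what remains is a subdivision of $G$. But a grid vertex has degree at most $4$, so if $\deg_G(v)\geq 5$ it is impossible for the paths of all edges at $v$ to be internally disjoint and end at the same grid vertex $\tilde v$; indeed no subdivision of such a $G$ embeds in any grid (already $K_{1,5}$, which is planar, is not a topological minor of any grid). So the step ``what remains is precisely a subdivision of $G$'' is false in general, and this is exactly the nontrivial point the digitization has to handle. The standard repair is to pass from a subdivision to a genuine minor model: represent each vertex $v$ not by a single grid vertex but by a connected \emph{branch set} $B_v$ of grid vertices (say a path or small subgrid inside the cell of $v$, with $N$ also chosen large relative to the maximum degree of $G$), attach the digitized edge paths to pairwise distinct grid vertices of $B_v$, keep the paths internally disjoint from each other and from all other branch sets, and finally contract each $B_v$ together with (all but one edge of) the incident paths. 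With that modification your digitization argument goes through; without it, it does not.

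For comparison: the paper does not prove this statement at all, it simply cites Robertson--Seymour (Graph Minors III), from which the claim follows because excluding a planar graph as a minor bounds treewidth while grids have unbounded treewidth. That is precisely your parenthetical fallback, which is correct; so either fix the branch-set issue in the direct argument or lean on the citation as the paper does.
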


The following corollary is now immediate from \cref{lem:planar-class}, \cref{lem:minor-closed}, \cref{lem:unbounded-tw} and \cref{thm:graph-minor-theory}:

\begin{cor}\label{cor:alt-proof}
     The set $\calP = \bigcup_{k \in \mathbb{N}}\soe(\calP_k)$ is the set of all planar graphs.  
\end{cor}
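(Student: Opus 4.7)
The plan is to prove the two inclusions separately, using the four results the author has just assembled; since all the real work has already been done in the preceding lemmas, the corollary amounts to a short synthesis.

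For the inclusion $\calP \subseteq \{\text{planar graphs}\}$, I would argue directly from \cref{lem:planar-class}. That lemma says every bilabelled graph $\boldsymbol{F} \in \calP_k$ is planar in the sense that $F^0$ admits a planar embedding. In particular the underlying graph $\soe \boldsymbol{F} = F$ is a subgraph of the planar graph $F^0$ and is therefore planar. Taking the union over all $k$ shows $\calP$ consists only of planar graphs.

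For the reverse inclusion, let $G$ be an arbitrary planar graph. By \cref{thm:graph-minor-theory}, there exists $n_G \in \mathbb{N}$ such that $G$ is a minor of the $n_G \times n_G$ grid. By \cref{lem:unbounded-tw}, the $n_G \times n_G$ grid lies in $\soe(\calP_{n_G})$. By \cref{lem:minor-closed}, $\soe(\calP_{n_G})$ is minor-closed, hence $G \in \soe(\calP_{n_G}) \subseteq \calP$. This gives the inclusion $\{\text{planar graphs}\} \subseteq \calP$, completing the proof.

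There is no real obstacle here; the argument is purely a bookkeeping step combining four already-proved facts. The only point requiring mild care is to make sure that the ``planarity'' of a bilabelled graph in the sense of~\cite{david-laura} does indeed imply ordinary planarity of its underlying unlabelled graph, which is immediate since $F$ is a subgraph of $F^0$. Everything else is a direct substitution into the chain planar $G \Rightarrow G$ is a grid minor $\Rightarrow G \in \soe(\calP_k)$ for some $k$.
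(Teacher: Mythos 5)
Your proof is correct and is exactly the synthesis the paper intends: the paper declares the corollary immediate from \cref{lem:planar-class}, \cref{lem:minor-closed}, \cref{lem:unbounded-tw} and \cref{thm:graph-minor-theory}, and your two inclusions spell out precisely that argument.
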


Recalling the discussion preceding \cref{lem:planar-class}, this completes the proof of \cref{thm:quant-iso}, and thus gives us the promised alternative proof of the result of~\cite{david-laura}. 

\section{Exact Feasibility of NPA in Randomized Polynomial Time}
\label{sec:alg-aspects}
Being a semidefinite program, the NPA relaxation can be solved using standard techniques such as the ellipsoid method. 
However, such techniques can, in polynomial time, only decide the approximate feasibility of a system.
In this section, we use homomorphism indistinguishability to give a randomized algorithm for deciding whether any level of the NPA hierarchy is feasible exactly.
\thmAlgorithm*

By a recent result \cite[Theorem~1.1]{seppelt_algorithmic_2024},
homomorphism indistinguishability over every minor-closed graph class of bounded treewidth can be decided in randomized polynomial time.
In \cref{thm:main-theorem}, we have established that the $k^{\text{th}}$-level of the NPA hierarchy for the $(G,H)$-isomorphism game is feasible for two graphs $G$ and $H$ if and only if 
they are homomorphism indistinguishable over $\mathcal{P}_k$.
By \cref{lem:minor-closed,lem:bounded-tw},
the graph class $\mathcal{P}_k$ is minor-closed and of bounded treewidth. 
Hence, by \cite[Theorem~1.1]{seppelt_algorithmic_2024}, 
the feasibility of the $k^{\text{th}}$-level of the NPA hierarchy can be decided in randomized polynomial time for every fixed~$k$.
However, this result assumes $k$ to be fixed and not part of the input.
\Cref{thm:algorithmic-aspects} makes the dependence on $k$ effective.

\subsection{Modular Homomorphism Indistinguishability}
\label{sec:mod-hom}
As a first step towards \cref{thm:algorithmic-aspects}, we show that it can be decided in deterministic polynomial time whether $G$ and $H$ are \emph{homomorphism indistinguishable over $\mathcal{P}_k$ modulo a prime~$p$}, i.e.\ $\hom(F, G) \equiv \hom(F, H) \mod p$ for all $F \in \mathcal{P}_k$. 
We choose to work over a finite field in order to avoid memory issues with too large integers.

\begin{thm}\label{thm:algorithmic-aspects-prime}
    There exists a deterministic algorithm which decides, given graphs $G$ and $H$, an integer $k\geq 1$, and a prime $p$,
    whether $G$ and $H$ are homomorphism indistinguishable over $\mathcal{P}_k$ modulo~$p$.
    The algorithm runs in time $n^{O(k)} (k \log p)^{O(1)}$ for $n \coloneqq \max \{\lvert V(G) \rvert , \lvert V(H) \rvert  \}$.
\end{thm}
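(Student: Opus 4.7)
The strategy is to reduce modular homomorphism indistinguishability over $\mathcal{P}_k$ to computing a basis of a finite-dimensional subspace of $\mathbb{F}_p$-valued tensors, following the blueprint of \cite{seppelt_algorithmic_2024}. I would work inside the space $V \coloneqq \mathbb{F}_p^{V(G)^k \times V(G)^k} \oplus \mathbb{F}_p^{V(H)^k \times V(H)^k}$, which has dimension at most $2n^{2k}$, and consider the subspace $W \subseteq V$ spanned by the pairs $(\boldsymbol{F}_G, \boldsymbol{F}_H)$ reduced modulo $p$, over all $\boldsymbol{F}\in \mathcal{P}_k$. Since $\hom(\soe\boldsymbol{F}, G) = \soe(\boldsymbol{F}_G)$, homomorphism indistinguishability modulo $p$ over $\mathcal{P}_k$ is equivalent to the linear functional $(X, Y)\mapsto \soe(X) - \soe(Y)$ vanishing on $W$, which in turn is equivalent to it vanishing on any basis of $W$.

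Concretely, I would compute a basis of $W$ by a fixed-point procedure that mirrors the inductive definition of $\mathcal{P}_k$ from \cref{def:pk}. Initialize a list $B$ with the pairs $(\boldsymbol{F}_G, \boldsymbol{F}_H)$ for each atomic $\boldsymbol{F}\in\mathcal{Q}_k$; since every such $\boldsymbol{F}$ is a minor of either $\boldsymbol{C}_k$ or $\boldsymbol{M}_k$, and minors correspond to choosing, for each of the $O(k)$ edges, whether to contract, delete, or keep, we have $|\mathcal{Q}_k|\leq 2^{O(k)}$, and each tensor entry reduces to a local combinatorial check computable in time $(k\log p)^{O(1)}$. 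Then repeatedly extend $B$: for every pair $u, u'\in B$, every $u\in B$ paired with $\boldsymbol{Q}\in\mathcal{Q}_k^P$, and every $u\in B$ paired with $\sigma\in \cyclicpermutations$, form the series composition $u\cdot u'$ (matrix multiplication in each component), the parallel composition $u\odot(\boldsymbol{Q}_G,\boldsymbol{Q}_H)$ (Schur product), and the permuted tensor $u^\sigma$, respectively. By the algebra--combinatorics correspondence recalled in \cref{sec:hom-ten}, each of these is precisely the $(\boldsymbol{F}_G,\boldsymbol{F}_H)$-pair of the corresponding composite bilabelled graph, which remains in $\mathcal{P}_k$. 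Bilinearity of series and parallel composition, together with linearity of the permutation action, ensures that closing under pairwise application to basis elements yields the full span $W$; whenever a produced vector lies outside the current span (checked by Gaussian elimination over $\mathbb{F}_p$), append it to $B$ and repeat.

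For the time analysis, $\dim V\leq 2n^{2k}$ bounds the number of basis-extension steps by $n^{O(k)}$. In each round we perform at most $|B|^2 + |B|\cdot |\mathcal{Q}_k^P| + |B|\cdot |\cyclicpermutations|\leq n^{O(k)}$ candidate operations, each costing $n^{O(k)}$ arithmetic operations in $\mathbb{F}_p$ (tensor operation plus a linear-dependence check). With arithmetic in $\mathbb{F}_p$ costing $(\log p)^{O(1)}$ per operation, and absorbing the $2^{O(k)}$ factors coming from enumerating $\mathcal{Q}_k$ and $\mathcal{Q}_k^P$ into $n^{O(k)}$ (valid whenever $n\geq 2$; the case $n=1$ is trivial), the total running time is $n^{O(k)} (k\log p)^{O(1)}$ as claimed. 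Correctness of the final test is immediate: once $B$ spans $W$, evaluate $\soe(X)-\soe(Y)$ on each basis pair, and accept iff all are zero modulo $p$.

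The main obstacle is keeping the $k$-dependence polynomial rather than exponential. Over $\mathbb{Z}$ the entries of $\boldsymbol{F}_G$ for $\boldsymbol{F}\in\mathcal{P}_k$ can grow doubly exponentially in $|V(\soe \boldsymbol{F})|$, and since graphs in $\mathcal{P}_k$ are unbounded in size this would destroy the bit complexity. Reducing modulo $p$ upfront keeps every tensor entry in $\{0,\dots,p-1\}$, so all arithmetic stays within $(k\log p)^{O(1)}$ per scalar operation, which is exactly why the theorem is stated over $\mathbb{F}_p$ rather than over $\mathbb{Z}$. The secondary subtlety is ensuring that fixed-point closure using only pairwise application of the generating operations to current basis elements suffices to produce all of $W$; this follows from bilinearity of $\cdot$ and $\odot$ and linearity of the $\sigma$-action, which together propagate through arbitrary linear combinations.
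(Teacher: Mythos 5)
Your proposal is correct and follows essentially the same route as the paper: a fixed-point computation over $\mathbb{F}_p$ of a basis for the span of the matrices $\boldsymbol{F}_G \oplus \boldsymbol{F}_H$, $\boldsymbol{F}\in\mathcal{P}_k$, mirroring the inductive definition of $\mathcal{P}_k$, followed by testing $\soe(\boldsymbol{F}_G)\equiv\soe(\boldsymbol{F}_H)$ on the basis. The only deviation is that the paper seeds the iteration with the $O(k)$-size generating sets $\mathcal{B}_k^P,\mathcal{B}_k^S$ of \cref{lem:basal-graphs} instead of enumerating all $2^{O(k)}$ atomic graphs in $\mathcal{Q}_k$, but your absorption of the $2^{O(k)}$ factor into $n^{O(k)}$ (with the trivial $n=1$ case handled separately) still meets the stated time bound.
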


The algorithm in \cref{thm:algorithmic-aspects-prime} decides whether $G$ and $H$ are homomorphism indistinguishable over $\mathcal{P}_k$ by computing a basis $B$ for the $\mathbb{F}_p$-vector space $S$ spanned by homomorphism matrices of bilabelled graphs in $\mathcal{P}_k$. More precisely,
\[
    S \coloneqq \spn\left\{ \boldsymbol{P}_G\oplus \boldsymbol{P}_H \mid \boldsymbol{P} \in \mathcal{P}_k \right\} \subseteq \mathbb{F}_p^{(V(G)^k \cup V(H)^k) \times (V(G)^k \times V(H)^k)}
\]
where $\boldsymbol{P}_G \oplus \boldsymbol{P}_H \coloneqq \left( \begin{smallmatrix} \boldsymbol{P}_G & 0 \\ 0 & \boldsymbol{P}_H \end{smallmatrix} \right)$.
A basis $B \subseteq S$ can be computed iteratively as follows.
We initialise $B$ with the singleton set containing $\boldsymbol{J}_G \oplus \boldsymbol{J}_H$.
Subsequently, we repeatedly apply the operation from \cref{def:pk} to compute new vectors. 
Whenever a new vector is linearly independent from the vectors already in $B$, we add it to $B$.
Since the dimension of $S$ is at most $2n^{2k}$, this process terminates after a polynomial number of steps.
This procedure is formally described in \cref{alg-modular}.

In order to achieve a better runtime in \cref{thm:algorithmic-aspects-prime}, we give size-$O(k)$ sets $\mathcal{B}_k^P$ and $\mathcal{B}_k^S$ of bilabelled graphs generating $\mathcal{Q}_k^P$ and $\mathcal{Q}_k^S$.
Let $\mathcal{B}_k \coloneqq \mathcal{B}_k^P \cup \mathcal{B}_k^S$.
\begin{lem} \label{lem:basal-graphs}
	Let $k \geq 1$ and consider the following $(k,k)$-bilabelled graphs.
	\begin{itemize}
		\item $\boldsymbol{J} \coloneqq (J, (1, \dots, k), (k+1, \dots, 2k))$ with $V(J) = [2k]$ and $E(J) = \emptyset$,
		\item for $i \in [2k]$ and $j \coloneqq i+1$ if $i < 2k$ and $j \coloneqq 1$ otherwise, the graphs $\boldsymbol{C}^{=i}$ and $\boldsymbol{C}^{\sim i}$ which are obtained from $\boldsymbol{J}$ by, respectively, identifying or connecting the vertices $i$ and $j$,
		\item $\boldsymbol{I} \coloneqq (I, (1, \dots, k), (1, \dots, k))$ with $V(I) = [k]$ and $E(I) = \emptyset$,
		\item for $i \in [k]$, the graphs $\boldsymbol{M}^{\neq i} = (M^{ \neq i}, (1, \dots, k), (1, \dots, i-1, i', i+1, \dots,k))$ with $V(M^{\neq i}) = [k] \cup \{i'\}$, $E(M^{\neq i}) = \emptyset$ and $\boldsymbol{M}^{\sim i} = (M^{\sim i}, (1, \dots, k), (1, \dots, i-1, i', i+1, \dots, k))$ with $V(M^{\sim i}) = [k] \cup \{i'\}$, $E(M^{\sim i}) = \{ii'\}$.
	\end{itemize}
	Then $\mathcal{Q}_k^P$ is generated by $\mathcal{B}_k^P \coloneqq \{\boldsymbol{J}\} \cup \{\boldsymbol{C}^{=i}, \boldsymbol{C}^{\sim i} \mid i \in [2k]\}$ under parallel composition
	and $\mathcal{Q}_k^S$ is contained by the graph class generated by $\mathcal{B}_k^S \coloneqq \{\boldsymbol{I}\} \cup \{\boldsymbol{M}^{\neq i}, \boldsymbol{M}^{\sim i} \mid i \in [k]\}$ under series composition.
\end{lem}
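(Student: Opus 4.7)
The plan is to prove both generation statements constructively, by associating each minor of $\boldsymbol{C}_k$ (resp.\ $\boldsymbol{M}_k$) with an explicit parallel (resp.\ series) composition of the listed basal graphs.

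For the parallel case, every $\boldsymbol{F}\in\mathcal{Q}_k^P$ is a minor of $\boldsymbol{C}_k$, so it is obtained by choosing for each of the $2k$ cyclically ordered edges whether to keep, contract, or delete it; no unlabelled-vertex deletions are required since $\boldsymbol{C}_k$ has no unlabelled vertices. Each element of $\mathcal{B}_k^P$ is itself $\boldsymbol{J}$ with one such operation performed on a single cyclically adjacent pair of labels. Recall that parallel composition of two $(k,k)$-bilabelled graphs identifies their $i$-th in-labels (resp.\ out-labels) pairwise and takes the union of the edge sets. Hence the parallel composition of $\boldsymbol{J}$ with $\boldsymbol{C}^{\sim i}$ for each kept edge $i$ and $\boldsymbol{C}^{=i}$ for each contracted edge $i$ reproduces exactly the edges and vertex identifications of $\boldsymbol{F}$. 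Conversely, every parallel composition of basal graphs is a bilabelled minor of $\boldsymbol{C}_k$ (being the union of single-edge minors), which establishes the claimed equality $\mathcal{Q}_k^P = \langle \mathcal{B}_k^P\rangle_\odot$.

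For the series case, each $\boldsymbol{F}\in\mathcal{Q}_k^S$ is specified by an independent keep/contract/delete choice on each of the $k$ disjoint edges of $\boldsymbol{M}_k$. I will argue at the level of homomorphism tensors: $\boldsymbol{F}_G = X_1\otimes\cdots\otimes X_k$, where $X_i\in\{A_G, I, J\}$ according to the choice at coordinate $i$. The basal graphs $\boldsymbol{M}^{\sim i}$, $\boldsymbol{I}$, and $\boldsymbol{M}^{\neq i}$ have homomorphism tensors equal to $A_G$, $I$, and $J$ respectively on the $i$-th tensor factor and to $I$ on every other factor. Their series composition in any order therefore produces the desired tensor product, and the underlying bilabelled graph is itself a minor of $\boldsymbol{M}_k$, hence lies in $\mathcal{Q}_k^S$. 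This establishes the containment of $\mathcal{Q}_k^S$ in the class generated by $\mathcal{B}_k^S$.

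The main bookkeeping obstacle is in the parallel case: one must verify that successive identifications of cyclically adjacent label pairs via $\boldsymbol{C}^{=i}$'s compose correctly, collapsing entire chains of labels onto a single vertex, and that edges are added consistently among the resulting representatives. This is handled by unwinding the definition of parallel composition as the disjoint union of the underlying graphs quotiented by identification of corresponding labelled vertices, from which one reads off that the combined effect of any collection of basal graphs is to perform all indicated identifications and add all indicated edges simultaneously. By contrast, the series case is considerably cleaner: the basal graphs in $\mathcal{B}_k^S$ act on disjoint coordinates, so the series compositions commute at the tensor level and the argument reduces to a direct tensor-factor computation.
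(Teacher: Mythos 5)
Your parallel-composition argument is, in its forward direction, exactly the paper's: read off from the minor $\boldsymbol{F}\leq\boldsymbol{C}_k$ the keep/contract/delete choice on each of the $2k$ cycle edges and write $\boldsymbol{F}=\boldsymbol{J}\odot\bigodot_{i \text{ contracted}}\boldsymbol{C}^{=i}\odot\bigodot_{i \text{ retained}}\boldsymbol{C}^{\sim i}$. For the series case, however, you diverge from the paper and this is where a genuine (if small) gap sits. The paper proves the graph-level identity $\boldsymbol{F}=\boldsymbol{I}\cdot\prod_{i\text{ deleted}}\boldsymbol{M}^{\neq i}\cdot\prod_{i\text{ retained}}\boldsymbol{M}^{\sim i}$ directly, whereas you only verify that the series composition has the same homomorphism tensor $X_1\otimes\cdots\otimes X_k$ as $\boldsymbol{F}$ for every target $G$. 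The lemma is a statement about classes of bilabelled graphs, so tensor equality alone does not place $\boldsymbol{F}$ itself in the generated class: you would need either a Lov\'asz-type injectivity statement (equal homomorphism tensors over all targets force isomorphic bilabelled graphs), which the paper neither states nor uses, or the direct combinatorial check that the composition reconstructs $\boldsymbol{F}$ --- at a contracted coordinate every factor carries a single doubly-labelled vertex, so the chain of identifications collapses to one vertex, while at a deleted or retained coordinate exactly one factor contributes two distinct vertices, non-adjacent or adjacent respectively, and no stray unlabelled vertices arise because each coordinate is touched by at most one non-identity factor. That check is a one-liner, so your argument is easily repaired, but as written the sentence ``this establishes the containment'' does not follow from what precedes it.

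Separately, your converse claim in the parallel case --- that every parallel composition of basal graphs is a bilabelled minor of $\boldsymbol{C}_k$ --- is both unnecessary (the paper proves, and later uses, only the containment of $\mathcal{Q}_k^P$ in the generated class) and false as stated: for instance $\boldsymbol{C}^{=i}\odot\boldsymbol{C}^{\sim i}$ has a loop at the merged vertex while all other labelled vertices remain distinct, and no sequence of edge contractions, edge deletions, and unlabelled-vertex deletions applied to $\boldsymbol{C}_k$ produces such a graph. This does not affect the part of the lemma that the rest of the paper relies on, but you should drop or qualify that sentence.
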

\begin{proof}
	For the claim regarding $\mathcal{Q}_k^P$, 
	let $\boldsymbol{C}$ be as in \cref{def:atomic-q}.
	Let $\boldsymbol{F} \leq \boldsymbol{C}$ be a bilabelled minor of $\boldsymbol{C}$.
	Then there exists a partition $[2k] = A \sqcup I \sqcup J$ such that 
	\begin{itemize}
		\item for $i \in A$ and $j \coloneqq i+1$ if $i < 2k$ and $j \coloneqq 1$, the edge $ij$ in $\boldsymbol{C}$ is retained to obtain $\boldsymbol{F}$,
		\item for $i \in I$ and $j \coloneqq i+1$ if $i < 2k$ and $j \coloneqq 1$, the edge $ij$ in $\boldsymbol{C}$ is contracted to obtain $\boldsymbol{F}$,
		\item for $i \in J$ and $j \coloneqq i+1$ if $i < 2k$ and $j \coloneqq 1$, the edge $ij$ in $\boldsymbol{C}$ is deleted to obtain $\boldsymbol{F}$.
	\end{itemize}
	Hence, $\boldsymbol{F} = \boldsymbol{J} \odot \bigodot_{i\in I} \boldsymbol{C}^{=i} \odot \bigodot_{i \in A} \boldsymbol{C}^{\sim i}$.
	
	For the claim regarding $\mathcal{Q}_k^S$,
	let $\boldsymbol{M}$ be as in \cref{def:atomic-q}.
	Let $\boldsymbol{F} \leq \boldsymbol{M}$ be a bilabelled minor of $\boldsymbol{M}$.
	Then there exists a partition $[k] = A \sqcup I \sqcup J$ such that 
	\begin{itemize}
		\item for $i \in A$ and $j \coloneqq i+k$, the edge $ij$ is in $\boldsymbol{M}$ is retained to obtain $\boldsymbol{F}$,
		\item for $i \in I$ and $j \coloneqq i+k$, the edge $ij$ is in $\boldsymbol{M}$ is contracted to obtain $\boldsymbol{F}$,
		\item for $i \in J$ and $j \coloneqq i+k$, the edge $ij$ is in $\boldsymbol{M}$ is deleted to obtain $\boldsymbol{F}$.
	\end{itemize}
	Hence, $\boldsymbol{F} = \boldsymbol{I} \cdot \prod_{i \in J} \boldsymbol{M}^{\neq i} \cdot \prod_{i \in A} \boldsymbol{M}^{\sim i}$.
\end{proof}

\begin{algorithm}[t]
	\SetAlgoNoEnd
	\LinesNumbered
	\KwIn{Graphs $G$ and $H$, an integer $k \geq 1$, a prime $p$ in binary.}
	\KwOut{Whether $G$ and $H$ are homomorphism indistinguishable over $\mathcal{P}_k$ modulo~$p$.}
	
	for every $\boldsymbol{A} \in \mathcal{B}_k$, 
	compute the homomorphism matrices $\boldsymbol{A}_G \in \mathbb{F}_p^{V(G)^k \times V(G)^k}$ and $\boldsymbol{A}_H \in \mathbb{F}_p^{V(H)^k \times V(H)^k}$\;
	
	initialise $B \leftarrow \{ \boldsymbol{J}_G \oplus \boldsymbol{J}_H  \} \subseteq \mathbb{F}_p^{(V(G)^k \cup V(H)^k) \times (V(G)^k \cup V(H)^k)}$\;	
	
	\ForEach{$\boldsymbol{A} \in \mathcal{B}_k$}{
		\If{$\boldsymbol{A}_G \oplus \boldsymbol{A}_H \not\in \spn(B)$\nllabel{line:lindin1}}{ add $\boldsymbol{A}_G \oplus \boldsymbol{A}_H$ to $B$\;}
	}
	
	\Repeat{$B$ is not updated}{\nllabel{line:repeat}

		\ForEach{$\boldsymbol{A} \in \mathcal{B}_k^P$, $v \in B$}{
			$w \leftarrow (\boldsymbol{A}_G \oplus \boldsymbol{A}_H) \odot v$\;
			
			\If{$w \not\in \spn(B)$\nllabel{line:lindin2}}{ add $w$ to $B$\;}
		}

		\ForEach{$v_1, v_2 \in B$}{
			$w \leftarrow v_1 \cdot v_2$\;	
			\If{$w \not\in \spn(B)$\nllabel{line:lindin3}}{add $w$ to $B$\;}
		}
		\ForEach{$\sigma \in \cyclicpermutations$, $v \in B$}{
			$w \leftarrow v^{\sigma}$\;	
			\If{$w \not\in \spn(B)$\nllabel{line:lindin4}}{add $w$ to $B$\;}
		}	
	}
	
	\eIf{$\boldsymbol{1}_G^T v \boldsymbol{1}_G = \boldsymbol{1}_H^Tv \boldsymbol{1}_H$ for all $v \in B$ }{accept\;}{reject\;}
	
	\caption{Modular NPA.}
	\label{alg-modular}
\end{algorithm}

\begin{lem}
    \cref{alg-modular} is correct.
\end{lem}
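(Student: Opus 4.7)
The plan is to prove correctness in three steps: termination, the identity $\spn(B) = S$ where $S \coloneqq \spn\{\boldsymbol{P}_G \oplus \boldsymbol{P}_H \mid \boldsymbol{P} \in \mathcal{P}_k\}$, and correctness of the final verification. Termination after polynomially many updates is immediate since each pass through the \textsf{repeat} loop either strictly enlarges $B$ or exits, while $\lvert B \rvert \leq \dim S \leq 2n^{2k}$.

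For the spanning claim, the inclusion $\spn(B) \subseteq S$ is an invariant of the execution: every vector added to $B$ is either one of the initial seeds $\boldsymbol{B}_G \oplus \boldsymbol{B}_H$ with $\boldsymbol{B} \in \mathcal{B}_k \subseteq \mathcal{Q}_k$, or obtained from vectors already in $B$ via the algebraic counterpart of one of the three operations generating $\mathcal{P}_k$ in \cref{def:pk}. For the reverse inclusion, the key observation is that upon exit $\spn(B)$ is closed under (a)~Schur product with $\boldsymbol{A}_G \oplus \boldsymbol{A}_H$ for every $\boldsymbol{A} \in \mathcal{B}_k^P$, (b)~matrix multiplication, and (c)~the action of $\cyclicpermutations$ on labels. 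Since these operations are linear or bilinear, closure on the basis $B$ extends by linearity to closure on all of $\spn(B)$. Invoking \cref{lem:basal-graphs}, every $\boldsymbol{Q} \in \mathcal{Q}_k^P$ is an iterated parallel composition of elements of $\mathcal{B}_k^P$, so $\boldsymbol{Q}_G \oplus \boldsymbol{Q}_H$ is an iterated Schur product of homomorphism matrices of those generators; this upgrades (a) to closure under Schur product with $\boldsymbol{Q}_G \oplus \boldsymbol{Q}_H$ for every $\boldsymbol{Q} \in \mathcal{Q}_k^P$. The series-composition part of \cref{lem:basal-graphs} analogously places $\boldsymbol{Q}_G \oplus \boldsymbol{Q}_H$ in $\spn(B)$ for every $\boldsymbol{Q} \in \mathcal{Q}_k$. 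Inducting over the construction of $\mathcal{P}_k$ in \cref{def:pk} then yields $\boldsymbol{P}_G \oplus \boldsymbol{P}_H \in \spn(B)$ for every $\boldsymbol{P} \in \mathcal{P}_k$, giving $S \subseteq \spn(B)$.

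For the final check, the functional $v \mapsto \boldsymbol{1}_G^T v \boldsymbol{1}_G - \boldsymbol{1}_H^T v \boldsymbol{1}_H$ is linear on the ambient block-diagonal matrix space, and on a generator $\boldsymbol{P}_G \oplus \boldsymbol{P}_H$ it evaluates to $\soe(\boldsymbol{P}_G) - \soe(\boldsymbol{P}_H) = \hom(\soe \boldsymbol{P}, G) - \hom(\soe \boldsymbol{P}, H)$ in $\mathbb{F}_p$. Since $\spn(B) = S$, this functional vanishes on $B$ if and only if it vanishes on all of $S$, which is by definition equivalent to $\hom(F,G) \equiv \hom(F,H) \pmod p$ for every $F \in \soe \mathcal{P}_k$.

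The main obstacle is justifying the two shortcuts in the closure argument: the algorithm only checks the operations on basis elements of $B$ (rather than all of $\spn(B)$) and only against $\mathcal{B}_k^P$ (rather than all of $\mathcal{Q}_k^P$). Both reductions hinge on (bi)linearity of the algebraic operations together with \cref{lem:basal-graphs}, and the proof must spell these out carefully in order to conclude $\spn(B) = S$.
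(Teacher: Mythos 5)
Your proof is correct and follows essentially the same route as the paper's: show $\spn(B)=S$ (one inclusion as an execution invariant, the other by structural induction over the generation of $\mathcal{P}_k$, using the termination-implied closure of $\spn(B)$ under the three operations together with \cref{lem:basal-graphs}), then observe the acceptance check is a linear functional vanishing on a spanning set iff it vanishes on $S$. If anything, you are slightly more explicit than the paper about upgrading closure under Schur product from the generators $\mathcal{B}_k^P$ to all of $\mathcal{Q}_k^P$, a step the paper handles with ``analogously.''
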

\begin{proof}
    Let $B$ denote the set computed when \cref{alg-modular} terminates.
    We first argue that the vectors in $B$ span $S$.
    Clearly, $B \subseteq S$ and thus $\spn(B)\subseteq S$.
    The inclusion $S \subseteq \spn(B)$ 
    is shown by induction on the operations used in \cref{def:pk} to define $\mathcal{P}_k$.
    The simplest graph in that set is~$\boldsymbol{J}$.
    It holds that $\boldsymbol{J}_G \oplus \boldsymbol{J}_H \in B$, by initialisation.
    By \cref{lem:basal-graphs}, it holds that the homomorphism matrices of all bilabelled graphs in $\mathcal{Q}_k$ are in $\spn(B)$.
    For the inductive step, let $\boldsymbol{P} \in \mathcal{P}_k$ be some graph. Distinguish cases:

    If $\boldsymbol{P} = \boldsymbol{R} \cdot \boldsymbol{S}$ for the graphs $\boldsymbol{R}, \boldsymbol{S} \in \mathcal{P}_k$ 
    to which the inductive hypothesis applies then $\boldsymbol{R}_G \oplus \boldsymbol{R}_H, \boldsymbol{S}_G \oplus \boldsymbol{S}_H \in \spn(B)$.
    Hence, we may write $\boldsymbol{R}_G \oplus \boldsymbol{R}_H = \sum \alpha_i b_i$ and  $\boldsymbol{R}_G \oplus \boldsymbol{R}_H = \sum \beta_i b_i$ for some $\alpha_i, \beta_i \in \mathbb{F}_p$ and $b_i \in B$.
    Since the algorithm terminated, it holds that $b_i \cdot b_j \in \spn(B)$ for all $i,j$. 
    Thus, $\boldsymbol{P}_G \oplus \boldsymbol{P}_H = \sum \alpha_i \beta_j b_i \cdot b_j \in \spn(B)$.

    The cases when $\boldsymbol{P} = \boldsymbol{R} \odot \boldsymbol{S}$ for $\boldsymbol{R} \in \mathcal{Q}_k^P$ and when $\boldsymbol{P} = \boldsymbol{R}^\sigma$ for $\sigma \in \cyclicpermutations$ follow analogously.

    It remains to argue that the acceptance condition $\boldsymbol{1}_G^T v \boldsymbol{1}_G = \boldsymbol{1}_H^Tv \boldsymbol{1}_H$ for all $v \in B$  is correct.
    To that end, first assume that $G$ and $H$ are homomorphism indistinguishable over $\mathcal{P}_k$.
    Then for every $\boldsymbol{P} \in \mathcal{P}_k$ it holds that
    \[
         \boldsymbol{1}_G^T (\boldsymbol{P}_G \oplus \boldsymbol{P}_H) \boldsymbol{1}_G   
        = \soe(\boldsymbol{P}_G) 
        = \soe(\boldsymbol{P}_H)
        = \boldsymbol{1}_H^T (\boldsymbol{P}_G \oplus \boldsymbol{P}_H) \boldsymbol{1}_H.  
    \]
    Since $S$ is spanned by these vectors, it follows that $\boldsymbol{1}_G^T v \boldsymbol{1}_G = \boldsymbol{1}_H^Tv \boldsymbol{1}_H$ for all $v \in B$.

    Conversely, suppose that $\boldsymbol{1}_G^T v \boldsymbol{1}_G = \boldsymbol{1}_H^Tv \boldsymbol{1}_H$  holds for all $v \in B$.
    Let $\boldsymbol{P} \in \mathcal{P}_k$ be arbitrary.
    By the previous claim, $\boldsymbol{P}_G \oplus \boldsymbol{P}_H \in \spn(B)$.
    Hence, the assumption implies that $\soe(\boldsymbol{P}_G) = \soe(\boldsymbol{P}_H)$, as desired.
\end{proof}

\begin{lem} \label{lem:runtime}
    \cref{alg-modular} terminates in time $n^{O(k)} (k \log p)^{O(1)}$ for $n \coloneqq \max \{\lvert V(G) \rvert , \lvert V(H) \rvert  \}$.
\end{lem}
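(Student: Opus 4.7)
The plan is to bound the runtime of \cref{alg-modular} by tracking three quantities: the maximum size of $B$, the number of iterations of the outer \texttt{repeat} loop, and the cost of a single atomic operation (producing a candidate vector plus testing linear independence).

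First, I would observe that every matrix in the space $S$ spanned by $B$ is block-diagonal of the form $\boldsymbol{P}_G \oplus \boldsymbol{P}_H$, whose nonzero entries lie in $(V(G)^k \times V(G)^k) \sqcup (V(H)^k \times V(H)^k)$. Hence $\dim S \leq |V(G)|^{2k} + |V(H)|^{2k} \leq 2n^{2k}$, so $|B|$ never exceeds $2n^{2k}$. Consequently, the outer \texttt{repeat} loop starting at line~\ref{line:repeat} iterates at most $2n^{2k}+1$ times, since every non-terminating iteration appends at least one new vector to $B$.

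Second, I would invoke \cref{lem:basal-graphs} to bound $|\mathcal{B}_k^P|, |\mathcal{B}_k^S| = O(k)$, together with $|\cyclicpermutations| = O(k)$. Thus each iteration of the outer loop performs $O(k|B|) + O(|B|^2) + O(k|B|) = O(|B|^2 + k|B|)$ atomic steps, each consisting of (i) producing a candidate vector via a Schur product with some $\boldsymbol{A}_G\oplus\boldsymbol{A}_H$, a matrix product, or a cyclic relabelling of matrices of order $O(n^k)$, and (ii) testing whether this candidate lies in $\spn(B)$. Maintaining $B$ in reduced row-echelon form over $\mathbb{F}_p$ and updating it incrementally on each insertion, a linear-independence test of a vector with $O(n^{2k})$ entries against the current basis costs $O(|B| \cdot n^{2k}) = O(n^{4k})$ field operations. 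A matrix product of two matrices of order $O(n^k)$ costs $O(n^{3k})$ field operations via schoolbook multiplication, while Schur products and cyclic relabellings cost only $O(n^{2k})$ field operations.

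Combining these estimates, the total number of $\mathbb{F}_p$-operations over the full execution is at most $O(n^{2k}) \cdot O(n^{4k}) \cdot O(n^{4k}) = n^{O(k)}$, which multiplied by the per-operation cost $(\log p)^{O(1)}$ in $\mathbb{F}_p$ yields the claimed bound. It remains to account for the precomputation of $\boldsymbol{A}_G$ and $\boldsymbol{A}_H$ for $\boldsymbol{A} \in \mathcal{B}_k$: each such bilabelled graph has at most $2k+1$ vertices and its homomorphism matrix is either $I$, $J$, or one of these modified by a single edge or vertex identification, so each of the $O(n^{2k})$ entries is computable in $\mathrm{poly}(k)$ time, giving $O(k) \cdot O(n^{2k}) \cdot \mathrm{poly}(k) = n^{O(k)} \cdot \mathrm{poly}(k)$ time total for line~1. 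The final acceptance check $\boldsymbol{1}_G^T v \boldsymbol{1}_G = \boldsymbol{1}_H^T v \boldsymbol{1}_H$ for $v \in B$ contributes only $O(|B| \cdot n^{2k}) = O(n^{4k})$ further field operations. All together, the runtime is $n^{O(k)} (k\log p)^{O(1)}$.

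The main subtlety, rather than any difficult argument, is the data-structure bookkeeping around the linear-independence checks on lines~\ref{line:lindin1}, \ref{line:lindin2}, \ref{line:lindin3}, and~\ref{line:lindin4}: a naive approach that recomputes a basis from scratch on each query would not meet the claimed bound. Incrementally maintaining $B$ in reduced row-echelon form reduces each query to one reduction pass against the current basis, and makes the above accounting tight. No further issues arise because all arithmetic happens in $\mathbb{F}_p$, so the bit-sizes of intermediate quantities remain $O(\log p)$ throughout.
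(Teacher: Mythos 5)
Your argument is correct and follows essentially the same route as the paper's proof: bound $\dim S \leq 2n^{2k}$ to limit both $|B|$ and the number of iterations of the outer loop, use \cref{lem:basal-graphs} to bound $|\mathcal{B}_k|$ and $|\cyclicpermutations|$ by $O(k)$, and observe that each candidate-generation and linear-independence step costs $n^{O(k)}(\log p)^{O(1)}$ field operations; you simply supply more bookkeeping detail (echelon-form maintenance, precomputation of the $\boldsymbol{A}_G$, $\boldsymbol{A}_H$). One small quibble: your claim that recomputing a basis from scratch at each membership query would violate the stated bound is not right, since even the naive approach stays within $n^{O(k)}(\log p)^{O(1)}$ per query and hence within the (exponent-unspecified) bound $n^{O(k)}(k\log p)^{O(1)}$ overall, so the incremental data structure is a nicety rather than a necessity.
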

\begin{proof}
	The initialisation (above \cref{line:repeat}) can be completed in time $n^{O(k)} k^{O(1)} (\log p)^{O(1)}$ noting that the set $\mathcal{B}_k$ is of size $O(k)$ by \cref{lem:basal-graphs}.
	
    Throughout the execution of the algorithm, $B$ is a set of linearly independent vectors in a $2n^{2k}$-dimensional vector space over $\mathbb{F}_p$.
    This implies that the loop in \cref{line:repeat} is entered at most $2n^{2k}$ many times.
    It further more implies that the checks for linear independence in \cref{line:lindin1,line:lindin2,line:lindin3,line:lindin4} can be carried out in $n^{O(k)} (\log p)^{O(1)}$.

	In each iteration of in \cref{line:repeat}, 
	the first inner loop is entered at most $n^{O(k)} k^{O(1)}$ times.
	The second inner loop is entered at most $n^{O(k)}$ times.
	The third inner loops is entered at most $n^{O(k)} k^{O(1)}$ times.
	This yields the overall runtime bound.
\end{proof}

\subsection{Reducing NPA to Modular NPA}

If $G \not\cong_{\mathcal{P}_k} H$, then there exists a graph $F \in \mathcal{P}_k$ such that $\hom(F, G) \neq \hom(F, H)$.
Since $\hom(F, G), \hom(F, H) \leq n^{\mid V(F)\mid }$ for $n \coloneqq \max\{\lvert V(G) \rvert , \lvert V(H) \rvert \}$, 
it follows that $G$ and $H$ are also not homomorphism indistinguishable over $\mathcal{P}_k$ modulo every prime $p$ greater than $n^{\mid V(F)\mid }$.
Unfortunately, there is a priori no bound on the size of $F$ in terms of $n$.
In this section, we give such a bound and thereby derive \cref{thm:algorithmic-aspects} from \cref{thm:algorithmic-aspects-prime}.
For $l \in \mathbb{N}$, 
write $(\mathcal{P}_k)_{\leq l} \coloneqq \{ F \in \mathcal{P}_k \mid \lvert V(F)\lvert  \leq l\}$.

\begin{thm} \label{thm:witness-function}
	Let $k \geq 1$.
	Let $G$ and $H$ be a graphs on at most $n$ vertices.
	Let $f_k(n) \coloneqq 2k \cdot 4^{n^{2k}}$.
	Then 
	\[
		G \cong_{\mathcal{P}_k} H \iff
		G \cong_{(\mathcal{P}_k)_{\leq f_k(n)}} H.
	\]
\end{thm}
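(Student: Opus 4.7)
The forward direction is immediate from $(\mathcal{P}_k)_{\leq f_k(n)} \subseteq \mathcal{P}_k$. For the converse, the plan is to reformulate homomorphism indistinguishability linear-algebraically: for any subclass $\mathcal{F} \subseteq \mathcal{P}_k$, $G \cong_{\mathcal{F}} H$ is equivalent to the linear functional $X \mapsto \boldsymbol{1}_G^T X \boldsymbol{1}_G - \boldsymbol{1}_H^T X \boldsymbol{1}_H$ vanishing on $\spn\{\boldsymbol{P}_G \oplus \boldsymbol{P}_H \mid \boldsymbol{P} \in \mathcal{F}\}$ (where $\boldsymbol{1}_G, \boldsymbol{1}_H$ denote the indicator vectors of $V(G)^k$ and $V(H)^k$ inside $V(G)^k \cup V(H)^k$). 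Hence it suffices to show that the homomorphism matrices of graphs in $(\mathcal{P}_k)_{\leq f_k(n)}$ already span the whole space $S \coloneqq \spn\{\boldsymbol{P}_G \oplus \boldsymbol{P}_H \mid \boldsymbol{P} \in \mathcal{P}_k\}$, which lives inside the $2n^{2k}$-dimensional vector space of block-diagonal matrices indexed by $V(G)^k \cup V(H)^k$.

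To organise the argument, I would stratify $\mathcal{P}_k$ by derivation depth: set $\mathcal{P}_k^{(0)} \coloneqq \mathcal{Q}_k$ and inductively let $\mathcal{P}_k^{(i+1)}$ be $\mathcal{P}_k^{(i)}$ together with all $\boldsymbol{R}_1 \cdot \boldsymbol{R}_2$, $\boldsymbol{R} \odot \boldsymbol{A}$, and $\boldsymbol{R}^\sigma$ for $\boldsymbol{R}_1, \boldsymbol{R}_2, \boldsymbol{R} \in \mathcal{P}_k^{(i)}$, $\boldsymbol{A} \in \mathcal{Q}_k^P$, and $\sigma \in \cyclicpermutations$. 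By \cref{def:pk}, $\mathcal{P}_k = \bigcup_i \mathcal{P}_k^{(i)}$. The subspaces $V_i \coloneqq \spn\{\boldsymbol{P}_G \oplus \boldsymbol{P}_H \mid \boldsymbol{P} \in \mathcal{P}_k^{(i)}\}$ form an ascending chain, and since $\dim S \leq 2n^{2k}$, there must exist $i_0 \leq 2n^{2k}$ with $V_{i_0} = V_{i_0+1}$. Using the algebro-combinatorial correspondence recalled in \cref{sec:hom-ten}---namely $(\boldsymbol{R}_1 \cdot \boldsymbol{R}_2)_G = (\boldsymbol{R}_1)_G \cdot (\boldsymbol{R}_2)_G$, $(\boldsymbol{R} \odot \boldsymbol{A})_G = \boldsymbol{R}_G \odot \boldsymbol{A}_G$, $(\boldsymbol{R}^\sigma)_G = \boldsymbol{R}_G^\sigma$, and analogously for $H$---the three generating operations lift bilinearly (respectively linearly) to homomorphism matrices. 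Consequently, if $V_{i_0} = V_{i_0+1}$, then $V_{i_0}$ is closed under these operations applied to arbitrary linear combinations, and a straightforward induction on $j$ yields $V_{i_0+j} = V_{i_0}$ for all $j \geq 0$, so $V_{i_0} = S$.

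It remains to track vertex counts. Graphs in $\mathcal{Q}_k$ have at most $2k$ vertices, so $L_0 \coloneqq \max_{\boldsymbol{P} \in \mathcal{P}_k^{(0)}} |V(\boldsymbol{P})| \leq 2k$. Since every graph in $\mathcal{Q}_k^P$ is atomic, all its vertices are labelled and get glued to labelled vertices of $\boldsymbol{R}$ under parallel composition, so $|V(\boldsymbol{R} \odot \boldsymbol{A})| \leq |V(\boldsymbol{R})|$; cyclic permutations preserve vertex counts; and series composition gives $|V(\boldsymbol{R}_1 \cdot \boldsymbol{R}_2)| = |V(\boldsymbol{R}_1)| + |V(\boldsymbol{R}_2)| - k \leq 2 L_i$. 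Hence $L_{i+1} \leq 2 L_i$, yielding $L_{i_0} \leq 2k \cdot 2^{2n^{2k}} = f_k(n)$ and thus $\mathcal{P}_k^{(i_0)} \subseteq (\mathcal{P}_k)_{\leq f_k(n)}$. Combined with $V_{i_0} = S$, this gives $\spn\{\boldsymbol{P}_G \oplus \boldsymbol{P}_H \mid \boldsymbol{P} \in (\mathcal{P}_k)_{\leq f_k(n)}\} = S$, closing the argument. The main obstacle is the closure-propagation step in the second paragraph: verifying that the one-step stability $V_{i_0} = V_{i_0+1}$ implies $V_{i_0+j} = V_{i_0}$ for all $j$, which is precisely where the bilinearity of series and parallel composition on homomorphism matrices is essential.
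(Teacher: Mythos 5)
Your proof is correct and follows essentially the same route as the paper: the paper filters $\mathcal{P}_k$ by an inductively defined complexity measure $\nu$ (charging only series composition), proves the size bound $2k\cdot 2^{\nu}$ and a chain-collapse lemma for the nested spans $S_l$ inside the at most $2n^{2k}$-dimensional space of matrices $\boldsymbol{F}_G \oplus \boldsymbol{F}_H$, exactly mirroring your derivation-depth stratification, stabilization-propagation argument via bilinearity of $\cdot$, $\odot$, and $(\cdot)^\sigma$, and the bound $L_i \leq 2k\cdot 2^i$. The only differences are cosmetic bookkeeping (your exact formula $\lvert V(\boldsymbol{R}_1\cdot\boldsymbol{R}_2)\rvert = \lvert V(\boldsymbol{R}_1)\rvert + \lvert V(\boldsymbol{R}_2)\rvert - k$ need not hold when labels repeat, but the inequality $\leq 2L_i$ you actually use is fine), so no changes are needed.
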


Towards \cref{thm:witness-function},
we define  the following complexity measure $\nu \colon \mathcal{P}_k \to \mathbb{N}$ inductively. If $\boldsymbol{Q} \in \mathcal{Q}_k$, then $\nu(\boldsymbol{Q}) \coloneqq 1$.
For $\boldsymbol{F} \in \mathcal{P}_k$, define $\nu(\boldsymbol{F})$ inductively as the least number $n \in \mathbb{N}$ such that there exist
\begin{enumerate}
	\item $\boldsymbol{F}' \in \mathcal{P}_k$ and $\boldsymbol{Q} \in \mathcal{Q}_k^P$ such that $\boldsymbol{F} = \boldsymbol{Q} \odot \boldsymbol{F}'$ and $n = \nu(\boldsymbol{F}')$, or
	\item $\boldsymbol{F}', \boldsymbol{F}'' \in \mathcal{P}_k$ such that $\boldsymbol{F} = \boldsymbol{F}' \cdot \boldsymbol{F}''$ and $n = \max\{ \nu(\boldsymbol{F}'), \nu(\boldsymbol{F}'')\} +1$, or
	\item $\boldsymbol{F}' \in \mathcal{P}_k$ and $\sigma \in \cyclicpermutations$ such that $\boldsymbol{F} = (\boldsymbol{F}')^\sigma$ and $n = \nu(\boldsymbol{F}')$.
\end{enumerate}
By \cref{def:pk}, $\nu \colon \mathcal{P}_k \to \mathbb{N}$ is well-defined.
\begin{lem} \label{lem:nu-size}
	Let $k \geq 1$.
	For every $\boldsymbol{F} \in \mathcal{P}_k$, it holds that $\boldsymbol{F}$ has at most $2k \cdot 2^{\nu(\boldsymbol{F})}$ vertices.
\end{lem}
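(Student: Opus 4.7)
The plan is to prove the bound by induction on $\nu(\boldsymbol{F})$, following the three-case inductive definition of $\nu$ from \cref{def:pk}. Since each clause defining $\nu$ decreases the value strictly (or in case~(2), strictly decreases the maximum of two values), the induction is well-founded.

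For the base case, one has $\nu(\boldsymbol{F}) = 1$ precisely when $\boldsymbol{F} \in \mathcal{Q}_k$. Since atomic graphs have only labelled vertices, they have at most $2k$ vertices, which is comfortably below $2k \cdot 2^1 = 4k$.

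For the inductive step, I would dispatch the three cases of the definition of $\nu$ by using the elementary vertex-count estimates for the three operations generating $\mathcal{P}_k$:
\begin{itemize}
  \item If $\boldsymbol{F} = \boldsymbol{Q} \odot \boldsymbol{F}'$ with $\boldsymbol{Q} \in \mathcal{Q}_k^P$ and $\nu(\boldsymbol{F}) = \nu(\boldsymbol{F}')$, then since all vertices of $\boldsymbol{Q}$ are labelled and hence get identified with vertices of $\boldsymbol{F}'$ in the parallel composition, one has $\lvert V(\boldsymbol{F})\rvert \leq \lvert V(\boldsymbol{F}')\rvert \leq 2k \cdot 2^{\nu(\boldsymbol{F}')} = 2k\cdot 2^{\nu(\boldsymbol{F})}$.
  \item If $\boldsymbol{F} = \boldsymbol{F}' \cdot \boldsymbol{F}''$ with $\nu(\boldsymbol{F}) = \max\{\nu(\boldsymbol{F}'),\nu(\boldsymbol{F}'')\} + 1$, then the series composition identifies at least $k$ labelled vertices, so $\lvert V(\boldsymbol{F})\rvert \leq \lvert V(\boldsymbol{F}')\rvert + \lvert V(\boldsymbol{F}'')\rvert \leq 2 \cdot 2k \cdot 2^{\max\{\nu(\boldsymbol{F}'),\nu(\boldsymbol{F}'')\}} = 2k \cdot 2^{\nu(\boldsymbol{F})}$, where the inductive hypothesis is applied to both $\boldsymbol{F}'$ and $\boldsymbol{F}''$.
  \item If $\boldsymbol{F} = (\boldsymbol{F}')^\sigma$ with $\sigma \in \cyclicpermutations$ and $\nu(\boldsymbol{F}) = \nu(\boldsymbol{F}')$, then relabelling preserves the underlying vertex set, so $\lvert V(\boldsymbol{F})\rvert = \lvert V(\boldsymbol{F}')\rvert \leq 2k \cdot 2^{\nu(\boldsymbol{F})}$.
\end{itemize}

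The argument is essentially bookkeeping, and no case presents a genuine obstacle. The only minor subtlety worth checking carefully is the series composition case, where the factor of two in $2k \cdot 2^{\nu(\boldsymbol{F})}$ exactly absorbs the doubling coming from summing two inductive bounds; this is precisely why the definition of $\nu$ increments by one only in that case. Everything else reduces to the standard vertex-count estimates for parallel and series composition already implicit in \cref{sec:hom-ten}.
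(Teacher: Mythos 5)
Your three case estimates are exactly those of the paper's proof: parallel composition with a fully labelled graph from $\mathcal{Q}_k^P$ adds no vertices, series composition at most adds the two vertex counts and the resulting factor of $2$ is absorbed by the increment of $\nu$, and permuting labels by $\sigma \in \cyclicpermutations$ leaves the vertex set unchanged. So the computational content of your argument coincides with the paper's.

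The genuine problem is the inductive setup. You induct on the integer $\nu(\boldsymbol{F})$ and justify well-foundedness by asserting that ``each clause defining $\nu$ decreases the value strictly.'' That is false for the first and third clauses: there $\nu(\boldsymbol{F}') = \nu(\boldsymbol{F})$, so in those two cases you are invoking the inductive hypothesis for a graph with the \emph{same} $\nu$-value, which a plain induction on the integer $\nu(\boldsymbol{F})$ does not license --- as written, those cases are circular. Only the series-composition clause strictly decreases $\nu$ of the constituents. The paper avoids this by arguing by structural induction on $\boldsymbol{F} \in \mathcal{P}_k$, i.e.\ over the well-founded recursion by which $\nu$ is defined from the generation of $\mathcal{P}_k$ in \cref{def:pk}: the constituents $\boldsymbol{F}'$ (and $\boldsymbol{F}''$) of the decomposition witnessing $\nu(\boldsymbol{F})$ are smaller in that derivation even when their $\nu$-value is not smaller. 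Your inequalities go through verbatim once the induction is re-grounded this way (or via a lexicographic induction on $\nu(\boldsymbol{F})$ together with the size of a derivation), so the gap lies in the bookkeeping of the induction rather than in the estimates, but it does need to be fixed.
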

\begin{proof}
	The proof is by structural induction on $\boldsymbol{F} \in \mathcal{P}_k$.
	If $\boldsymbol{F} \in \mathcal{Q}_k$ then $\boldsymbol{F}$ has at most $2k$ vertices and $\nu(\boldsymbol{F}) = 1$.
	For the inductive step, let $\boldsymbol{F} \in \mathcal{P}_k$ and distinguishing cases:
	\begin{enumerate}
		\item There exist $\boldsymbol{F}' \in \mathcal{P}_k$ and $\boldsymbol{Q} \in \mathcal{Q}_k^P$ such that $\boldsymbol{F} = \boldsymbol{Q} \odot \boldsymbol{F}'$ and $\nu(\boldsymbol{F}) = \nu(\boldsymbol{F}')$.
		
		Since taking the parallel composition of $\boldsymbol{F}'$ with the fully labelled graph $\boldsymbol{Q}$ does  not increase the number of vertices in $\boldsymbol{F}'$, it follows that $\boldsymbol{F}$ has at most $2k \cdot 2^{\nu(\boldsymbol{F}')} = 2k \cdot 2^{\nu(\boldsymbol{F})}$ many vertices.
		
		\item There exist $\boldsymbol{F}', \boldsymbol{F}'' \in \mathcal{P}_k$ such that $\boldsymbol{F} = \boldsymbol{F}' \cdot \boldsymbol{F}''$ and $\nu(\boldsymbol{F}) = \max\{ \nu(\boldsymbol{F}'), \nu(\boldsymbol{F}'')\} +1$.
		
		The number of vertices in $\boldsymbol{F}$ is at most the number of vertices in $\boldsymbol{F}'$ plus the number of vertices in $\boldsymbol{F}''$.
		Hence, $\boldsymbol{F}$ has at most $2k \cdot (2^{\nu(\boldsymbol{F}')} + 2^{\nu(\boldsymbol{F}'')}) \leq 2k \cdot 2 \cdot 2^{\nu(\boldsymbol{F}) - 1} = 2k \cdot 2^{\nu(\boldsymbol{F})}$ many vertices.
		
		\item There exist $\boldsymbol{F}' \in \mathcal{P}_k$ and $\sigma \in \cyclicpermutations$ such that $\boldsymbol{F} = (\boldsymbol{F}')^\sigma$ and $\nu(\boldsymbol{F}) = \nu(\boldsymbol{F}')$.
		
		The number of vertices in $\boldsymbol{F}'$ is not affected by permuting labels. Hence, $\boldsymbol{F}$ has at most $2k \cdot 2^{\nu(\boldsymbol{F})}$ many vertices. \qedhere
	\end{enumerate} 
\end{proof}

As in \cref{sec:mod-hom}, we consider a sequence of nested spaces of homomorphism matrices. For $l \geq 1$, write
\[
	S_l \coloneqq \spn \{ \boldsymbol{F}_G \oplus \boldsymbol{F}_H \mid \boldsymbol{F} \in \mathcal{P}_k, \nu(\boldsymbol{F}) \leq l \} 
	\subseteq \mathbb{R}^{(V(G)^k \cup V(H)^k) \times (V(G)^k \cup V(H)^k)}.
\]
Clearly, $S_1 \subseteq S_2 \subseteq \dots \subseteq S \coloneqq \bigcup_{l \geq 1} S_l$.
The space $S$ is of dimension at most $2n^{2k}$.
The following lemma shows that $S_{2n^{2k}} = S$.

\begin{lem} \label{lem:chain-collapse}
	If $l \geq 1$ is such that $S_l = S_{l+1}$,
	then $S_l = S$.
\end{lem}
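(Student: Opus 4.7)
The plan is to establish the closure of $S_l$ under the three generating operations from \cref{def:pk}, and then derive $S_l = S$ by structural induction on the bilabelled graph $\boldsymbol{F} \in \mathcal{P}_k$. The hypothesis $S_l = S_{l+1}$ is used exactly once, namely to absorb the potential increase of $\nu$ by one that only occurs in series composition.

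First I would verify three closure properties, which all follow from unpacking the definition of $\nu$. Take any $v \in S_l$, written as a linear combination $v = \sum_i \alpha_i (\boldsymbol{F}^i_G \oplus \boldsymbol{F}^i_H)$ with $\nu(\boldsymbol{F}^i) \leq l$. For any $\boldsymbol{Q} \in \mathcal{Q}_k^P$, using that $(\boldsymbol{Q} \odot \boldsymbol{F}^i)_G \oplus (\boldsymbol{Q} \odot \boldsymbol{F}^i)_H = (\boldsymbol{Q}_G \oplus \boldsymbol{Q}_H) \odot (\boldsymbol{F}^i_G \oplus \boldsymbol{F}^i_H)$ and $\nu(\boldsymbol{Q} \odot \boldsymbol{F}^i) \leq \nu(\boldsymbol{F}^i) \leq l$ by case (1) of the definition of $\nu$, we obtain $(\boldsymbol{Q}_G \oplus \boldsymbol{Q}_H) \odot v \in S_l$. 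Similarly, for any $\sigma \in \cyclicpermutations$, case (3) yields $\nu((\boldsymbol{F}^i)^\sigma) \leq \nu(\boldsymbol{F}^i) \leq l$, so $v^\sigma \in S_l$. Finally, for the series composition, if $w = \sum_j \beta_j (\boldsymbol{G}^j_G \oplus \boldsymbol{G}^j_H) \in S_l$, then bilinearity of matrix multiplication and the correspondence between series composition and matrix product give $v \cdot w = \sum_{i,j} \alpha_i \beta_j (\boldsymbol{F}^i \cdot \boldsymbol{G}^j)_G \oplus (\boldsymbol{F}^i \cdot \boldsymbol{G}^j)_H$; by case (2) of the definition of $\nu$, $\nu(\boldsymbol{F}^i \cdot \boldsymbol{G}^j) \leq \max\{\nu(\boldsymbol{F}^i), \nu(\boldsymbol{G}^j)\} + 1 \leq l+1$, so $v \cdot w \in S_{l+1}$. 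The hypothesis $S_l = S_{l+1}$ then yields $v \cdot w \in S_l$.

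With these closure properties in hand, I would prove that $\boldsymbol{F}_G \oplus \boldsymbol{F}_H \in S_l$ for every $\boldsymbol{F} \in \mathcal{P}_k$ by structural induction on the derivation of $\boldsymbol{F}$ from \cref{def:pk}. The base case consists of the atomic graphs $\boldsymbol{Q} \in \mathcal{Q}_k$ for which $\nu(\boldsymbol{Q}) = 1$ and hence $\boldsymbol{Q}_G \oplus \boldsymbol{Q}_H \in S_1 \subseteq S_l$. In the inductive step, $\boldsymbol{F}$ is obtained by parallel composition with an element of $\mathcal{Q}_k^P$, series composition, or cyclic permutation from bilabelled graphs whose homomorphism matrices lie in $S_l$ by the inductive hypothesis; the three closure properties above then deliver $\boldsymbol{F}_G \oplus \boldsymbol{F}_H \in S_l$. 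Since $S$ is by definition spanned by such vectors, this gives $S \subseteq S_l$, and the reverse inclusion is immediate.

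I do not anticipate a significant obstacle here: the whole content of the lemma is that a single stable step propagates forward, and this propagation reduces to the trilinear (linear in Schur product and permutation, bilinear in series composition) nature of the three combinatorial operations, combined with the fact that only series composition strictly increases $\nu$, and then only by one. The place where one must be slightly careful is the series composition closure step, where one must recognise that a linear combination of products is itself a linear combination of bilabelled-graph homomorphism matrices of $\nu$-complexity at most $l+1$, before invoking the hypothesis.
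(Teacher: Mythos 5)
Your proposal is correct and follows essentially the same route as the paper's proof: a structural induction on $\boldsymbol{F} \in \mathcal{P}_k$ showing $\boldsymbol{F}_G \oplus \boldsymbol{F}_H \in S_l$, expressing inductive-hypothesis elements as combinations of $\nu$-bounded graphs, using that parallel composition with $\mathcal{Q}_k^P$ and cyclic permutation do not increase $\nu$, and invoking $S_{l+1} = S_l$ only to absorb the $+1$ from series composition. Packaging the closure properties up front rather than inside the case analysis is a cosmetic difference only.
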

\begin{proof}
	Since $S_l \subseteq S$, it suffices to show the converse inclusion.
	We show by structural induction on $\boldsymbol{F} \in \mathcal{P}_k$ that $\boldsymbol{F}_G \oplus \boldsymbol{F}_H \in S_l$.
	If $\boldsymbol{F}$ is atomic then the claim follows immediately.
	For the inductive step, distinguish cases:
	\begin{enumerate}
		\item There exist $\boldsymbol{F}' \in \mathcal{P}_k$ and $\boldsymbol{Q} \in \mathcal{Q}_k^P$ such that $\boldsymbol{F} = \boldsymbol{Q} \odot \boldsymbol{F}'$.
		
		By the inductive hypothesis, $\boldsymbol{F}'_G \oplus \boldsymbol{F}'_H \in S_l$.
		Hence, there exist $\boldsymbol{F}^i \in \mathcal{P}_k$ with $\nu(\boldsymbol{F}^i) \leq l$ and coefficients $\alpha_i \in \mathbb{R}$ such that $\boldsymbol{F}'_G \oplus \boldsymbol{F}'_H  = \sum \alpha_i \boldsymbol{F}^i_G \oplus \boldsymbol{F}^i_H$.
		By definition, $\nu(\boldsymbol{Q} \odot \boldsymbol{F}^i) \leq \nu(\boldsymbol{F}^i) \leq l$.
		Hence,
		\[
			\boldsymbol{F}_G \oplus \boldsymbol{F}_H = (\boldsymbol{Q}_G \oplus \boldsymbol{Q}_H) \odot \sum \alpha_i \boldsymbol{F}^i_G \oplus \boldsymbol{F}^i_H
			= \sum \alpha_i (\boldsymbol{Q} \odot \boldsymbol{F}^i)_G
			 \oplus (\boldsymbol{Q} \odot \boldsymbol{F}^i)_H
			 \in S_l.
		\]
		
		\item There exist $\boldsymbol{F}', \boldsymbol{F}'' \in \mathcal{P}_k$ such that $\boldsymbol{F} = \boldsymbol{F}' \cdot \boldsymbol{F}''$.
		
		By the inductive hypothesis, $\boldsymbol{F}'_G \oplus \boldsymbol{F}'_H,\boldsymbol{F}''_G \oplus \boldsymbol{F}''_H  \in S_l$.
		Hence, there exist $\boldsymbol{F}^i \in \mathcal{P}_k$, $\boldsymbol{K}^j \in \mathcal{P}_k$ with $\nu(\boldsymbol{F}^i),\nu(\boldsymbol{K}^j) \leq l$ and coefficients $\alpha_i, \beta_j \in \mathbb{R}$ such that $\boldsymbol{F}'_G \oplus \boldsymbol{F}'_H  = \sum \alpha_i \boldsymbol{F}^i_G \oplus \boldsymbol{F}^i_H$
		and $\boldsymbol{F}''_G \oplus \boldsymbol{F}''_H  = \sum \beta_j \boldsymbol{K}^j_G \oplus \boldsymbol{K}^j_H$.
		By definition, $\nu(\boldsymbol{F}^i \cdot \boldsymbol{K}^j) \leq \max\{ \nu(\boldsymbol{F}^i), \nu(\boldsymbol{K}^j)\} \leq l+1$ for all $i, j$.
		Hence, 
		\[
			\boldsymbol{F}_G \oplus \boldsymbol{F}_H = \sum \alpha_i \beta_j (\boldsymbol{F}^i \cdot \boldsymbol{K}^j)_G \oplus 
			(\boldsymbol{F}^i \cdot \boldsymbol{K}^j)_H
			\in S_{l+1}  = S_l.
		\]
		
		\item There exist $\boldsymbol{F}' \in \mathcal{P}_k$ and $\sigma \in \cyclicpermutations$ such that $\boldsymbol{F} = (\boldsymbol{F}')^\sigma$.
		
		By the inductive hypothesis, $\boldsymbol{F}'_G \oplus \boldsymbol{F}'_H \in S_l$.
		Hence, there exist $\boldsymbol{F}^i \in \mathcal{P}_k$ with $\nu(\boldsymbol{F}^i) \leq l$ and coefficients $\alpha_i \in \mathbb{R}$ such that $\boldsymbol{F}'_G \oplus \boldsymbol{F}'_H  = \sum \alpha_i \boldsymbol{F}^i_G \oplus \boldsymbol{F}^i_H$.
		By definition, $\nu((\boldsymbol{F}^i)^\sigma) \leq \nu(\boldsymbol{F}^i) \leq l$.
		Hence,
		\[
		\boldsymbol{F}_G \oplus \boldsymbol{F}_H = \left( \sum \alpha_i \boldsymbol{F}^i_G \oplus \boldsymbol{F}^i_H \right)^\sigma
		= \sum \alpha_i ((\boldsymbol{F}^i)^\sigma)_G \oplus ((\boldsymbol{F}^i)^\sigma)_H
		\in S_l. \qedhere
		\] 
	\end{enumerate} 
\end{proof}

This concludes the preparations for the proof of \cref{thm:witness-function}.
\begin{proof}[Proof of \cref{thm:witness-function}]
	Only the backward implication requires a justification.
	Suppose that $G$ and $H$ are homomorphism indistinguishable over all graphs in $\mathcal{P}_k$ of size at most $2k \cdot 4^{n^{2k}}$.
	
	Let $\boldsymbol{F} \in \mathcal{P}_k$ be of arbitrary size.
	By \cref{lem:chain-collapse}, there exist $\boldsymbol{F}^i \in \mathcal{P}_k$ with $\nu(\boldsymbol{F}^i) \leq 2n^{2k}$ and coefficients $\alpha_i \in \mathbb{R}$ such that
	$\boldsymbol{F}_G \oplus \boldsymbol{F}_H = \sum \alpha_i \boldsymbol{F}^i_G \oplus \boldsymbol{F}^i_H$.
	By \cref{lem:nu-size}, the $\boldsymbol{F}^i$ have at most $2k \cdot 4^{n^{2k}}$ many vertices.
	Thus, $ \hom(\soe(\boldsymbol{F}^i), G) =  \hom(\soe(\boldsymbol{F}^i), H)$, by assumption.
	It follows that
	\begin{align*}
		\hom(\soe(\boldsymbol{F}), G)
		&= \boldsymbol{1}_G^T (\boldsymbol{F}_G \oplus \boldsymbol{F}_H) \boldsymbol{1}_G \\
		&= \sum \alpha_i \boldsymbol{1}_G^T (\boldsymbol{F}^i_G \oplus \boldsymbol{F}^i_H) \boldsymbol{1}_G \\
		&= \sum \alpha_i \hom(\soe(\boldsymbol{F}^i), G) \\
		&= \sum \alpha_i \hom(\soe(\boldsymbol{F}^i), H) \\
		&= \sum \alpha_i \boldsymbol{1}_H^T (\boldsymbol{F}^i_G \oplus \boldsymbol{F}^i_H) \boldsymbol{1}_H \\
		&= \hom(\soe(\boldsymbol{F}), H).
	\end{align*}	
	Here, $\boldsymbol{1}_G, \boldsymbol{1}_H \in \mathbb{R}^{V(G)^k \cup V(H)^k}$ denote the indicator vectors on the coordinates which correspond to $G$ and $H$, respectively.
	Hence, $G$ and $H$ are homomorphism indistinguishable over $\mathcal{P}_k$.
\end{proof}

It remains to derive \cref{thm:algorithmic-aspects} from \cref{thm:algorithmic-aspects-prime,thm:witness-function}. 

\begin{proof}[Proof of \cref{thm:algorithmic-aspects}]
	A randomized algorithm for the problem \cref{thm:algorithmic-aspects} proceeds as similar to \cite[Algorithm~2]{seppelt_algorithmic_2024}.
	Let $N \coloneqq 2k \cdot 4^{n^{2k}}$ be as in \cref{thm:witness-function}.
	For $\lceil 4 \log(N \log(n)) \rceil = n^{O(k)}$ times, sample an integer $N \log n < p \leq (N \log n)^2$.
	This integer requires $n^{O(k)}$ many bits.
	In time $n^{O(k)}$, deterministically check whether $p$ is a prime \cite{agrawal_primes_2004}.
	If it is a prime, run \cref{alg-modular} for $G$, $H$, $k$, and $p$.
	By \cref{lem:runtime}, this takes time $n^{O(k)}k^{O(1)}$.
	If the algorithm asserts that $G \not\cong_{\mathcal{P}_k} H$, reject.
	Otherwise, proceed.
	If $p$ is not a prime, proceed to the next iteration.

	By the proof of \cite[Theorem~1.1]{seppelt_algorithmic_2024},
	the probability of incorrectly accepting an instance such that $G \not\cong_{\mathcal{P}_k} H$ is less than $1/2$.
\end{proof}

\section{Discussion}\label{sec:conclusion}

We have established a characterization of the feasibility of the $k^{\text{th}}$-level of the NPA hierarchy of relaxations for the $(G, H)$-isomorphism game in terms of homomorphism indsitingushability over the graph class $\mathcal{P}_k$.
We know that $\mathcal{P}_k$ is a subclass of planar graphs, has bounded-treewidth, and is closed under taking minors.
We only have an inductive description of the class $\mathcal{P}_k$ as being generated from $\calQ_k$ using series composition, cyclic permutations, and parallel composition with atomic graphs from $\calQ_k^P$. Thus, a natural question is to understand better the properties of the graphs in $\mathcal{P}_k$.
Recall that $\mathcal{L}_k$ is the analogous class of graphs for the $k^{\text{th}}$-level of the Lasserre hierarchy of relaxations of the integer program for graph isomorphism.

\begin{prob}
  Is $\mathcal{P}_k$ equal to the intersection of $\mathcal{L}_k$ with the class of all planar bilabelled graphs?
\end{prob}

In the field of graph isomorphism testing, the $k$-dimensional Weisfeiler-Leman ($k$-WL) algorithm is an essential subroutine.
It iteratively constructs an automorphism-invariant partition of the $k$-tuples of vertices of a graph.
For some classes of graphs, the $k$-WL algorithm is sufficient to test isomorphism.
This is closely related to the feasibility of the $k^{\text{th}}$-level of the Lassere hierarchy of SDP relaxations of the graph isomorphism integer program~\cite{roberson-seppelt-arxiv}.
This motivates the following problem.

\begin{prob}
  Is there a class of graphs $\mathcal{C}$ for which the $k^{\text{th}}$-level of the NPA hierarchy determines quantum isomorphism and isomorphism does not coincide with quantum isomorphism on~$\mathcal{C}$?
\end{prob}

Note that, for example, it is known that two trees are isomorphic if and only if they are quantum isomorphic, and that $1$-WL determines if two trees are isomorphic. However, it would be interesting to find a class of graphs that can be quantum isomorphic without being isomorphic, and where we can decide quantum isomorphism by some fixed level of the NPA hierarchy.  

\bibliographystyle{quantum}
\bibliography{npa}

\begin{thebibliography}{10}

\bibitem{lovasz_operations_1967}
László Lovász.
\newblock ``Operations with structures''.
\newblock \href{https://dx.doi.org/10.1007/BF02280291}{Acta Mathematica
  Academiae Scientiarum Hungarica {\bf 18}, 321--328}~(1967).

\bibitem{sherali-adams-3}
Zden{\v{e}}k Dvo{\v{r}}{\'a}k.
\newblock ``On recognizing graphs by numbers of homomorphisms''.
\newblock \href{https://dx.doi.org/10.1002/jgt.20461}{Journal of Graph Theory
  {\bf 64}, 330--342}~(2010).

\bibitem{grohe_counting_2020}
Martin Grohe.
\newblock ``Counting bounded tree depth homomorphisms''.
\newblock In Proceedings of the 35th Annual ACM/IEEE Symposium on Logic in
  Computer Science (LICS).
\newblock \href{https://dx.doi.org/10.1145/3373718.3394739}{Pages 507--520}.
\newblock ~(2020).

\bibitem{fluck_going_2024}
Eva Fluck, Tim Seppelt, and Gian~Luca Spitzer.
\newblock ``{Going Deep and Going Wide: Counting Logic and Homomorphism
  Indistinguishability over Graphs of Bounded Treedepth and Treewidth}''.
\newblock In 32nd EACSL Annual Conference on Computer Science Logic (CSL 2024).
\newblock \href{https://dx.doi.org/10.4230/LIPIcs.CSL.2024.27}{Volume 288,
  pages 27:1--27:17}.
\newblock ~(2024).

\bibitem{dell_lovasz_2018}
Holger Dell, Martin Grohe, and Gaurav Rattan.
\newblock ``{Lov\'{a}sz Meets Weisfeiler and Leman}''.
\newblock In 45th International Colloquium on Automata, Languages, and
  Programming (ICALP 2018).
\newblock \href{https://dx.doi.org/10.4230/LIPIcs.ICALP.2018.40}{Volume 107,
  pages 40:1--40:14}.
\newblock ~(2018).

\bibitem{grohe_homomorphism_2022}
Martin Grohe, Gaurav Rattan, and Tim Seppelt.
\newblock ``Homomorphism {Tensors} and {Linear} {Equations}''.
\newblock \href{https://dx.doi.org/10.19086/aic.2025.4}{Advances in
  Combinatorics}~(2025).

\bibitem{roberson-seppelt-arxiv}
David~E. Roberson and Tim Seppelt.
\newblock ``Lasserre hierarchy for graph isomorphism and homomorphism
  indistinguishability''.
\newblock
  \href{https://dx.doi.org/10.46298/theoretics.24.20}{TheoretiCS}~(2024).

\bibitem{xu_how_2019}
Keyulu Xu, Weihua Hu, Jure Leskovec, and Stefanie Jegelka.
\newblock ``How {Powerful} are {Graph} {Neural} {Networks}?''.
\newblock In International {Conference} on {Learning} {Representations}.
\newblock ~(2018).
\newblock  url:~\url{https://openreview.net/forum?id=ryGs6iA5Km}.

\bibitem{morris_weisfeiler_2019}
Christopher Morris, Martin Ritzert, Matthias Fey, William~L. Hamilton, Jan~Eric
  Lenssen, Gaurav Rattan, and Martin Grohe.
\newblock ``Weisfeiler and {Leman} {Go} {Neural}: {Higher}-{Order} {Graph}
  {Neural} {Networks}''.
\newblock \href{https://dx.doi.org/10.1609/aaai.v33i01.33014602}{Proceedings of
  the AAAI Conference on Artificial Intelligence {\bf 33}, 4602--4609}~(2019).

\bibitem{zhang_beyond_2024}
Bohang Zhang, Jingchu Gai, Yiheng Du, Qiwei Ye, Di~He, and Liwei Wang.
\newblock ``{Beyond Weisfeiler--Lehman: A Quantitative Framework for {GNN}
  Expressiveness}''.
\newblock In The Twelfth International Conference on Learning Representations.
\newblock ~(2024).
\newblock  url:~\url{https://openreview.net/forum?id=HSKaGOi7Ar}.

\bibitem{dawar_lovasz-type_2021}
Anuj Dawar, Tomá\v{s} Jakl, and Luca Reggio.
\newblock ``Lovász-{Type} {Theorems} and {Game} {Comonads}''.
\newblock In 36th {Annual} {ACM}/{IEEE} {Symposium} on {Logic} in {Computer}
  {Science}, {LICS} 2021, {Rome}, {Italy}, {June} 29 - {July} 2, 2021.
\newblock \href{https://dx.doi.org/10.1109/LICS52264.2021.9470609}{Pages
  1--13}.
\newblock IEEE~(2021).

\bibitem{abramsky_discrete_2022}
Samson Abramsky, Tomá\v{s} Jakl, and Thomas Paine.
\newblock ``Discrete {Density} {Comonads} and {Graph} {Parameters}''.
\newblock In Helle~Hvid Hansen and Fabio Zanasi, editors, Coalgebraic {Methods}
  in {Computer} {Science}.
\newblock \href{https://dx.doi.org/10.1007/978-3-031-10736-8_2}{Pages 23--44}.
\newblock Cham~(2022). Springer International Publishing.

\bibitem{montacute_pebble-relation_2022}
Yoàv Montacute and Nihil Shah.
\newblock ``The {Pebble}-{Relation} {Comonad} in {Finite} {Model} {Theory}''.
\newblock In Christel Baier and Dana Fisman, editors, {LICS} '22: 37th {Annual}
  {ACM}/{IEEE} {Symposium} on {Logic} in {Computer} {Science}, {Haifa},
  {Israel}, {August} 2 - 5, 2022.
\newblock \href{https://dx.doi.org/10.1145/3531130.3533335}{Pages 13:1--13:11}.
\newblock ACM~(2022).

\bibitem{roberson_oddomorphisms_2022}
David~E. Roberson.
\newblock ``Oddomorphisms and homomorphism indistinguishability over graphs of
  bounded degree''~(2022).
\newblock  \href{http://arxiv.org/abs/2206.10321v1}{arXiv:2206.10321v1}.

\bibitem{seppelt_logical_2023}
Tim Seppelt.
\newblock ``{Logical Equivalences, Homomorphism Indistinguishability, and
  Forbidden Minors}''.
\newblock In J\'{e}r\^{o}me Leroux, Sylvain Lombardy, and David Peleg, editors,
  48th International Symposium on Mathematical Foundations of Computer Science
  (MFCS 2023).
\newblock \href{https://dx.doi.org/10.4230/LIPIcs.MFCS.2023.82}{Volume 272 of
  Leibniz International Proceedings in Informatics (LIPIcs), pages
  82:1--82:15}.
\newblock Dagstuhl, Germany~(2023). Schloss Dagstuhl -- Leibniz-Zentrum f{\"u}r
  Informatik.

\bibitem{neuen_homomorphism-distinguishing_2023}
Daniel Neuen.
\newblock ``{Homomorphism-Distinguishing Closedness for Graphs of Bounded
  Tree-Width}''.
\newblock In Olaf Beyersdorff, Mamadou~Moustapha Kant\'{e}, Orna Kupferman, and
  Daniel Lokshtanov, editors, 41st International Symposium on Theoretical
  Aspects of Computer Science (STACS 2024).
\newblock \href{https://dx.doi.org/10.4230/LIPIcs.STACS.2024.53}{Volume 289 of
  Leibniz International Proceedings in Informatics (LIPIcs), pages
  53:1--53:12}.
\newblock Dagstuhl, Germany~(2024). Schloss Dagstuhl -- Leibniz-Zentrum f{\"u}r
  Informatik.

\bibitem{seppelt_algorithmic_2024}
Tim Seppelt.
\newblock ``{An Algorithmic Meta Theorem for Homomorphism
  Indistinguishability}''.
\newblock In Rastislav Kr\'{a}lovi\v{c} and Anton{\'\i}n Ku\v{c}era, editors,
  49th International Symposium on Mathematical Foundations of Computer Science
  (MFCS 2024).
\newblock \href{https://dx.doi.org/10.4230/LIPIcs.MFCS.2024.82}{Volume 306 of
  Leibniz International Proceedings in Informatics (LIPIcs), pages
  82:1--82:19}.
\newblock Dagstuhl, Germany~(2024). Schloss Dagstuhl -- Leibniz-Zentrum f{\"u}r
  Informatik.

\bibitem{seppelt_homomorphism_2024}
Tim Seppelt.
\newblock ``Homomorphism {Indistinguishability}''.
\newblock \href{https://dx.doi.org/10.18154/RWTH-2024-11629}{Dissertation}.
\newblock RWTH Aachen University.
\newblock Aachen~(2024).

\bibitem{david-laura}
Laura Mančinska and David~E. Roberson.
\newblock ``Quantum isomorphism is equivalent to equality of homomorphism
  counts from planar graphs''.
\newblock In 2020 IEEE 61st Annual Symposium on Foundations of Computer Science
  (FOCS).
\newblock \href{https://dx.doi.org/10.1109/FOCS46700.2020.00067}{Pages
  661--672}.
\newblock ~(2020).

\bibitem{ATSERIAS2019289}
Albert Atserias, Laura Mančinska, David~E. Roberson, Robert Šámal, Simone
  Severini, and Antonios Varvitsiotis.
\newblock ``Quantum and non-signalling graph isomorphisms''.
\newblock
  \href{https://dx.doi.org/https://doi.org/10.1016/j.jctb.2018.11.002}{Journal
  of Combinatorial Theory, Series B {\bf 136}, 289--328}~(2019).

\bibitem{babai_graph_2016}
L{\'{a}}szl{\'{o}} Babai.
\newblock ``Graph isomorphism in quasipolynomial time [extended abstract]''.
\newblock In Daniel Wichs and Yishay Mansour, editors, Proceedings of the 48th
  Annual {ACM} {SIGACT} Symposium on Theory of Computing, {STOC} 2016,
  Cambridge, MA, USA, June 18-21, 2016.
\newblock \href{https://dx.doi.org/10.1145/2897518.2897542}{Pages 684--697}.
\newblock {ACM}~(2016).

\bibitem{Navascues_2008}
Miguel Navascués, Stefano Pironio, and Antonio Acín.
\newblock ``A convergent hierarchy of semidefinite programs characterizing the
  set of quantum correlations''.
\newblock \href{https://dx.doi.org/10.1088/1367-2630/10/7/073013}{New Journal
  of Physics {\bf 10}, 073013}~(2008).

\bibitem{rattan_weisfeiler_2023}
Gaurav Rattan and Tim Seppelt.
\newblock ``{Weisfeiler--Leman and Graph Spectra}''.
\newblock In Proceedings of the 2023 Annual ACM-SIAM Symposium on Discrete
  Algorithms (SODA).
\newblock \href{https://dx.doi.org/10.1137/1.9781611977554.ch87}{Pages
  2268--2285}.
\newblock Society for Industrial and Applied Mathematics~(2023).

\bibitem{Nielsen_Chuang_2010}
Michael~A. Nielsen and Isaac~L. Chuang.
\newblock ``Quantum computation and quantum information: 10th anniversary
  edition''.
\newblock \href{https://dx.doi.org/10.1017/CBO9780511976667}{Cambridge
  University Press}. ~(2010).

\bibitem{david_mathprog}
Laura Man\v{c}inska, David~E. Roberson, and Antonios Varvitsiotis.
\newblock ``Graph isomorphism: physical resources, optimization models, and
  algebraic characterizations''.
\newblock \href{https://dx.doi.org/10.1007/s10107-023-01989-7}{Math. Program.
  {\bf 205}, 617--660}~(2024).

\bibitem{robertson-seymour-3}
Neil Robertson and Paul~D. Seymour.
\newblock ``Graph minors. iii. planar tree-width''.
\newblock
  \href{https://dx.doi.org/https://doi.org/10.1016/0095-8956(84)90013-3}{Journal
  of Combinatorial Theory, Series B {\bf 36}, 49--64}~(1984).

\bibitem{agrawal_primes_2004}
Manindra Agrawal, Neeraj Kayal, and Nitin Saxena.
\newblock ``{PRIMES} is in {P}''.
\newblock \href{https://dx.doi.org/10.4007/annals.2004.160.781}{Annals of
  Mathematics {\bf 160}, 781--793}~(2004).

\bibitem{CHOI1975285}
Man-Duen Choi.
\newblock ``Completely positive linear maps on complex matrices''.
\newblock
  \href{https://dx.doi.org/https://doi.org/10.1016/0024-3795(75)90075-0}{Linear
  Algebra and its Applications {\bf 10}, 285--290}~(1975).

\bibitem{watrous_notes}
John Watrous.
\newblock ``Advanced topics in quantum information theory''~(2020).

\bibitem{russell_synchronous_2023}
Travis~B. Russell.
\newblock ``A synchronous {NPA} hierarchy with applications''.
\newblock \href{https://dx.doi.org/10.7153/oam-2023-17-60}{Operators and
  Matrices {\bf 17}, 901--924}~(2023).

\bibitem{NPA-2}
Miguel Navascu{\'e}s, Stefano Pironio, and Antonio Ac{\'i}n.
\newblock ``Sdp relaxations for non-commutative polynomial optimization''.
\newblock \href{https://dx.doi.org/10.1007/978-1-4614-0769-0_21}{Pages
  601--634}.
\newblock Springer US. New York, NY~(2012).

\bibitem{cstar-hierarchy}
Gereon Koßmann, René Schwonnek, and Jonathan Steinberg.
\newblock ``Hierarchies for {Semidefinite} {Optimization} in
  {$C^\star$}-{Algebras}''~(2023).
\newblock  url:~\url{http://arxiv.org/abs/2309.13966}.

\end{thebibliography}

\appendix

\newpage

\section{Linear algebra}\label{app:linalg}

Recall that a linear map $\Phi: M_n(\mathbb{C}) \to M_k(\mathbb{C})$ is said to be \emph{positive} if it maps all positive semidefinite matrices in $M_n(\mathbb{C})$ to positive semidefinite matrices in $M_k(\mathbb{C})$. $\Phi$ is said to be \emph{completely positive} if for all $l \in \mathbb{N}$ the map $\mathbb{I}_l \otimes \Phi : M_n(\complex) \otimes M_l(\complex) \to M_k(\complex) \otimes M_l(\complex)$ is positive, where $\mathbb{I}_l: M_l(\mathbb{C}) \to M_l(\complex)$ is the identity map. 

The \emph{Choi matrix} of a map $\Phi: M_n(\mathbb{C}) \to M_k(\mathbb{C})$ is the $nk \times nk$ matrix $\sum_{i,j = 1}^k E_{ij}\otimes \Phi(E_{ij})$, where $E_{ij}$ are the matrix units. It follows from a well known result of Choi \cite{CHOI1975285} that a linear map $\Phi: M_n(\mathbb{C}) \to M_k(\mathbb{C})$ is completely positive if and only if its Choi matrix is positive semidefinite. 

\begin{lem}[{\cite[Lemma A.1]{david_mathprog}}]
  \label{lem:linalg_unitary}
  Let $\calA$ and $\calB$ be self-adjoint unital subalgebras of $\complex^{n\times n}$ and $\varphi\colon \calA \to \calB$ be a trace-preserving isomorphism such that $\varphi(X^*) = \varphi(X)^*$ for all $X\in\calA$.
  Then there exists a unitary $U \in \complex^{n\times n}$ such that $\varphi(X) = UXU^*$ for all $X\in\calA$.
\end{lem}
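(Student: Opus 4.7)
The plan is to apply the structure theory of finite-dimensional $C^*$-algebras (Artin--Wedderburn) to both $\calA$ and $\calB$, use trace-preservation to match up the multiplicities of the simple summands, and finally invoke the Skolem--Noether theorem on each block to assemble a single implementing unitary.

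First, I would record the standard normal form for unital $*$-subalgebras of $\complex^{n \times n}$: every such algebra $\calA$ admits a unitary $W_\calA \in \complex^{n \times n}$ with
\[
    W_\calA^* \, \calA \, W_\calA \;=\; \bigoplus_{i=1}^{r} M_{d_i}(\complex) \otimes I_{m_i},
    \qquad \sum_i d_i m_i \;=\; n,
\]
and similarly fix $W_\calB$, giving blocks $(d'_j, m'_j)_j$. Since $\varphi$ is a unital $*$-isomorphism, it bijects the minimal central projections of $\calA$ onto those of $\calB$, inducing a permutation $\pi$ of the block indices with $d_i = d'_{\pi(i)}$. The trace (inside $\complex^{n \times n}$) of the $i$-th minimal central projection of $\calA$ equals $d_i m_i$; applying $\tr \circ \varphi = \tr$ to this projection forces $d_i m_i = d'_{\pi(i)} m'_{\pi(i)}$, so that $m_i = m'_{\pi(i)}$. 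After modifying $W_\calB$ by a block-permutation unitary, the conjugated copies of $\calA$ and $\calB$ coincide as the same algebra $\calC = \bigoplus_i M_{d_i}(\complex) \otimes I_{m_i}$.

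Setting $\tilde{\varphi}(Y) \coloneqq W_\calB^* \, \varphi(W_\calA Y W_\calA^*) \, W_\calB$ then yields a unital trace-preserving $*$-automorphism of $\calC$ that preserves each simple summand. Restricting to the $i$-th factor $M_{d_i}(\complex) \otimes I_{m_i} \cong M_{d_i}(\complex)$, Skolem--Noether produces a unitary $V_i \in M_{d_i}(\complex)$ with $\tilde{\varphi}(X \otimes I_{m_i}) = (V_i X V_i^*) \otimes I_{m_i}$. Assembling $V \coloneqq \bigoplus_i V_i \otimes I_{m_i} \in \complex^{n \times n}$ then implements $\tilde{\varphi}(Y) = V Y V^*$ for all $Y \in \calC$, and undoing the outer conjugations gives the claimed unitary $U \coloneqq W_\calB V W_\calA^*$, so that $\varphi(X) = U X U^*$ for all $X \in \calA$.

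The main obstacle I anticipate is the block-matching step: when several simple summands of $\calA$ share a common dimension $d_i$, the abstract $*$-isomorphism pairs them in some a priori unknown order, and the trace data only constrains the pairing up to permutations within each group of equal $d_i$. This permutation has to be absorbed into $W_\calB$ by explicitly writing down the permutation matrix that swaps the corresponding block coordinates; once that bookkeeping is in place, the rest of the argument is a direct appeal to Skolem--Noether applied blockwise. As a sanity check, the trace-preservation hypothesis is genuinely needed here: without it one could have $\calA = \complex \cdot I_2 \subseteq M_2(\complex)$ and $\calB = \complex \cdot I_3 \subseteq M_3(\complex)$ embedded in $M_5(\complex)$, where both are abstractly isomorphic to $\complex$ but no conjugation in $M_5(\complex)$ maps one onto the other.
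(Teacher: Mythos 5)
Your argument is correct and follows the standard structure-theoretic route; the paper itself does not prove this lemma but imports it from the cited reference, whose proof proceeds along essentially the same lines (Wedderburn/von Neumann decomposition of the unital $*$-subalgebras, matching the multiplicities $m_i$ by applying trace preservation to minimal central projections, and implementing the resulting blockwise automorphisms unitarily). Two minor polish points: Skolem--Noether only hands you an invertible $S_i$, and you should add the one-line normalization that $*$-preservation forces $S_i^*S_i$ to be a positive scalar so $S_i$ can be rescaled to the unitary $V_i$; also your closing sanity-check example is not quite well-formed, since $\complex\cdot I_2$ and $\complex\cdot I_3$ do not sit as \emph{unital} subalgebras of a common $\complex^{n\times n}$ --- a correct non-example is $\calA=\{\mathrm{diag}(a,a,b)\}$ and $\calB=\{\mathrm{diag}(a,b,b)\}$ in $\complex^{3\times 3}$ with the isomorphism matching the $a$-blocks, which is unital and $*$-preserving but not trace-preserving and not unitarily implementable.
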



\begin{lem}[Generalisation of {\cite[Lemma 4.9]{david_mathprog}}]
  \label{lem:lingalg_series}
  Let $\Phi_1, \Phi_2\colon M_n(\mathbb{C}) \to M_n(\mathbb{C})$ be two linear maps which are completely positive, trace-preserving, and unital.
  Then for any matrices $X$ and $Y$ such that $\Phi_1(X) = Y$ and $\Phi_2(Y) = X$ it holds that $\Phi_1(XW) = Y\Phi_1(W)$ and $\Phi_1(WX) = \Phi_1(W)Y$ for all $W \in M_n(\mathbb{C})$. Furthermore, 
  
  \begin{enumerate}[label = \roman*.]
      \item $X$ and $Y$ are cospectral in this case,
      \item and $\Phi_1^*(Y) = X$. 
  \end{enumerate}
\end{lem}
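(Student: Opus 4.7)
The plan is to exploit the Schwarz inequality for unital completely positive maps together with trace preservation to show that $X$ lies in the \emph{multiplicative domain} of $\Phi_1$. From there the main intertwining identities, cospectrality, and the adjoint identity $\Phi_1^*(Y)=X$ all follow by short arguments.

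First I would establish the key identities $\Phi_1(X^*X)=Y^*Y$ and $\Phi_1(XX^*)=YY^*$. The Schwarz inequality for the unital completely positive map $\Phi_1$ gives $\Phi_1(X^*X)\geq \Phi_1(X)^*\Phi_1(X)=Y^*Y$, and symmetrically $\Phi_2(Y^*Y)\geq X^*X$. Applying the positive map $\Phi_2$ to the first inequality yields the chain $\Phi_2\Phi_1(X^*X)\geq \Phi_2(Y^*Y)\geq X^*X$; since $\Phi_2\circ\Phi_1$ is trace-preserving, the extreme ends have equal trace and the chain collapses, giving $\Phi_1(X^*X)=Y^*Y$ and $\Phi_2(Y^*Y)=X^*X$. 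The argument for $XX^*$ is identical. These are precisely the two equality cases of Schwarz which, by the classical multiplicative-domain theorem for unital $2$-positive maps (a short Stinespring computation, or equivalently expanding $\Phi_1((X-\lambda W)^*(X-\lambda W))\geq \Phi_1(X-\lambda W)^*\Phi_1(X-\lambda W)$ for arbitrary $\lambda\in\mathbb{C}$ and $W\in M_n(\mathbb{C})$ and comparing cross-terms), place $X$ in the multiplicative domain of $\Phi_1$. This immediately yields the intertwining identities $\Phi_1(XW)=Y\,\Phi_1(W)$ and $\Phi_1(WX)=\Phi_1(W)\,Y$ for every $W\in M_n(\mathbb{C})$, which is the main assertion.

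For cospectrality, I would iterate the multiplicative-domain property to deduce $\Phi_1(X^k)=Y^k$ for every $k\geq 1$, and then use trace preservation: $\tr(X^k)=\tr(\Phi_1(X^k))=\tr(Y^k)$. Equal power sums for $k=1,\ldots,n$ determine the elementary symmetric polynomials of the eigenvalues by Newton's identities, so $X$ and $Y$ have identical characteristic polynomials and hence the same spectrum.

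Finally, to show $\Phi_1^*(Y)=X$, observe that $\Phi_1^*$ is itself unital, completely positive, and trace-preserving (since $\Phi_1$ is trace-preserving and unital). Using the Hilbert--Schmidt inner product $\langle A,B\rangle=\tr(A^*B)$, the adjoint relation gives
\[
\langle X,\Phi_1^*(Y)\rangle=\langle \Phi_1(X),Y\rangle=\tr(Y^*Y)=\tr(\Phi_1(X^*X))=\|X\|_2^2,
\]
while Schwarz applied to $\Phi_1^*$ gives $\|\Phi_1^*(Y)\|_2^2\leq \tr(\Phi_1^*(Y^*Y))=\|Y\|_2^2=\|X\|_2^2$. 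The Cauchy--Schwarz inequality then sandwiches $\|X\|_2^2\leq\|X\|_2\,\|\Phi_1^*(Y)\|_2\leq\|X\|_2^2$, forcing equality throughout, and the equality case of Cauchy--Schwarz together with the computed value of the inner product identifies the proportionality constant as $1$, so $\Phi_1^*(Y)=X$. The main obstacle I anticipate is the passage from the scalar equality $\Phi_1(X^*X)=Y^*Y$ to the full bimodule identities in Step~1; this is standard but requires invoking (or briefly reproving) the equality case of the Kadison--Schwarz inequality.
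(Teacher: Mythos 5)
Your proof is correct, but it runs along a genuinely different (though parallel) track from the paper's. The paper fixes a Kraus decomposition $\Phi_1(W)=\sum_i K_iWK_i^*$ and shows directly that the positive matrix $\mathcal{Z}=\sum_i(K_iX-YK_i)(K_iX-YK_i)^*$ equals $\Phi_1(XX^*)-YY^*$, kills its trace by playing the trace-preservation of $\Phi_1$ against that of $\Phi_2$, and concludes $K_iX=YK_i$; the bimodule identities, the computation $\Phi_1^*(Y)=\sum_iK_i^*YK_i=X$, and cospectrality (via the unitary-implementation result, Lemma~\ref{lem:linalg_unitary}, applied to the restriction of $\Phi_1$ to the $*$-algebra generated by $X$) all drop out of this single relation. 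You package the same trace argument abstractly: Kadison--Schwarz in both directions forces $\Phi_1(X^*X)=Y^*Y$ and $\Phi_1(XX^*)=YY^*$, so $X$ lies in the multiplicative domain of $\Phi_1$ and Choi's multiplicative-domain theorem gives the two intertwining identities; cospectrality then comes from $\tr(X^k)=\tr(Y^k)$ and Newton's identities, and $\Phi_1^*(Y)=X$ from a Hilbert--Schmidt Cauchy--Schwarz squeeze. What the paper's route buys is self-containedness (the Kraus computation is the multiplicative-domain argument written out, so no external theorem is invoked) and a slightly stronger by-product, namely unitary similarity of $X$ and $Y$ rather than mere cospectrality; what your route buys is independence from any choice of Kraus decomposition and the observation that everything already works for unital, trace-preserving, $2$-positive maps.

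Two small points to tighten. In the chain-collapse step, $\Phi_2\Phi_1(X^*X)\geq\Phi_2(Y^*Y)\geq X^*X$ with equal traces at the ends gives $\Phi_2(\Phi_1(X^*X)-Y^*Y)=0$; to conclude $\Phi_1(X^*X)=Y^*Y$ you still need one more appeal to trace preservation (of $\Phi_2$), since $\Phi_2$ need not be injective --- equivalently, just compare $\tr(X^*X)\geq\tr(Y^*Y)\geq\tr(X^*X)$ directly from the two Schwarz inequalities and trace preservation of $\Phi_1,\Phi_2$ separately. And in the final Cauchy--Schwarz argument, either dispose of the trivial case $X=0$ separately or replace the equality-case discussion by the one-line estimate $\lVert X-\Phi_1^*(Y)\rVert_2^2=\lVert X\rVert_2^2-2\,\mathrm{Re}\langle X,\Phi_1^*(Y)\rangle+\lVert\Phi_1^*(Y)\rVert_2^2\leq 0$, which avoids both the degenerate case and the proportionality-constant discussion.
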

\begin{proof}
    The proof we present is a minor modification of the proof given in \cite[lemma 4.10]{david_mathprog}. Since $\Phi_1$ is completely positive, it admits a Kraus decomposition $\Phi_1(W) = \sum_{i=1}^m K_i W K_i^{*}$, for some matrices $K_i \in M_n(\mathbb{C})$. Note that since $\Phi_1$ is trace-preserving and unital, we have that $\sum_{i}K_iK_i^* = \sum_{i}K_i^*K_i = I$ and similarly for the $L_i$. The main idea of the proof is to show that $K_i X = YK_i$ for all $i$. Given this, it is easy to see that $\Phi_1(XW) = \sum_{i}K_i XW K_i^* = Y \sum_{i}K_iW K_i^* = Y \Phi_1(W)$. Set $\mathcal{Z} = \sum_{i} (K_i X - YK_i)(K_iX - YK_i)^*$. We note that 

    \begin{align*}
        \mathcal{Z} & = \sum_{i} (K_i X - YK_i)(K_iX - YK_i)^* \\
        & = \sum_i K_i XX^* K_i^* - \sum_i K_i X K_i^* Y^* - \sum_i Y K_i X K_i^* + \sum_i Y K_iK_i^* Y^* \\
        & = \Phi_1(XX^*) - \Phi_1(X)Y^* - Y \Phi_1(X^*) + YY^* \\
        & = \Phi_1(XX^*) - YY^*, 
    \end{align*}
    where we use the definition of $\Phi_1$ and that it is unital, and also the fact that $\Phi_1(W^*) = \Phi_1(W)^*$ for all $W \in M_n(\mathbb{C}$. 
    
    Since $\mathcal{Z}$ is positive semidefinite, we have $0 \leq \tr(\mathcal{Z}) = \tr(\Phi_1(XX^*) - YY^*) = \tr(XX^* - YY^*)$, where in the last equality we used the fact that $\Phi_1$ is trace preserving. In a similar fashion, but using $\Phi_2$, 
    we can obtain that $0 \le \tr(YY^* - XX^*)$. Hence, $\tr(\mathcal{Z}) = 0$ and therefore $\mathcal{Z} = 0$ since it is positive semidefinite. It follows that $K_iX = YK_i$ for all $i$ as desired.

    Thus we have shown that $\Phi_1(XW) = Y\Phi_1(W)$ for all $W \in M_n(\mathbb{C})$. To see that $\Phi_1(WX) = \Phi_1(W)Y$, we note that by assumption $\Phi_1(X^*) = Y^*$ and $\Phi_2(Y^*) = X^*$. Thus by the same argument as above, we have that $K_iX^* = Y^*K_i$, and by taking adjoints we obtain $XK_i^* = K_i^*Y$.
    

    For the final claim, let us denote the self-adjoint subalgebra of $M_n(\mathbb{C})$ generated by $X$ and $Y$ as $\calA_X$ and $\calA_Y$ respectively. Then, the restriction of $\Phi_1$ to $\calA_X$ is a trace-preserving isomorphism from $\calA_X$ to $\calA_Y$, which also preserves adjoints. Hence, it follows from \cref{lem:linalg_unitary} that $Y = \Phi_1(X) = UXU^*$ for some unitary $U \in M_n(\mathbb{C})$, so that $X$ and $Y$ are cospectral. Now, consider $\Phi_1^*(Y) = \sum_i K_i^* Y K_i =  \sum_i  X K_i^* K_i = X$, since $K_i^* Y = X K_i$. 
\end{proof}

\begin{lem}[{\cite[Lemma 4.5]{david_mathprog}}]
  \label{lem:linalg_parallel}
  Let $D\in\complex^{m\times n}$ be a matrix and $u\in\complex^n$ and $v \in \complex^{m}$.
  Then the following are equivalent:
  \begin{enumerate}
    \item $D(u\odot w) = v\odot(Dw)$ for all $w\in\complex^n$,
    \item $D_{ij} = 0$ whenever $v_i \neq u_j$,
    \item $D^*(v\odot z) = u\odot(D^*z)$ for all $z\in\complex^m$.
  \end{enumerate}
\end{lem}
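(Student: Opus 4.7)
\medskip

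The plan is to prove the cyclic chain $(1) \Rightarrow (2) \Rightarrow (3) \Rightarrow (1)$, but in fact the symmetry of the statement means it suffices to establish $(1) \Leftrightarrow (2)$ and then observe that $(2) \Leftrightarrow (3)$ follows by applying the same equivalence to $D^*$ with the roles of $u$ and $v$ swapped.

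For $(1) \Leftrightarrow (2)$, I would write the equation $D(u\odot w) = v\odot(Dw)$ out in coordinates. The $i$-th component of the left-hand side is $\sum_j D_{ij} u_j w_j$ while the $i$-th component of the right-hand side is $v_i \sum_j D_{ij} w_j$. Rearranging, the identity is equivalent to
\[
\sum_{j} D_{ij}(u_j - v_i) w_j = 0 \quad \text{for all } i \text{ and all } w \in \complex^n.
\]
Since this is required for every $w$, taking $w$ to range over the standard basis vectors shows that it is equivalent to $D_{ij}(u_j - v_i) = 0$ for all $i,j$, which is exactly the assertion that $D_{ij} = 0$ whenever $v_i \neq u_j$. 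This gives $(1) \Leftrightarrow (2)$.

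For $(2) \Leftrightarrow (3)$, note that the condition in (2) is a condition only on the support of $D$, and $(D^*)_{ji} = \overline{D_{ij}}$ vanishes precisely when $D_{ij}$ vanishes. Thus (2) is equivalent to the condition that $(D^*)_{ji} = 0$ whenever $v_i \neq u_j$. Applying the already-established equivalence $(1) \Leftrightarrow (2)$ to the matrix $D^* \in \complex^{n \times m}$ with the roles of $u$ and $v$ interchanged yields that this is equivalent to $D^*(v \odot z) = u \odot (D^* z)$ for all $z \in \complex^m$, which is (3).

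There is no real obstacle here; the only point requiring care is to quantify over all $w$ (resp.\ $z$) correctly when extracting the entrywise condition, and to note that the support condition on $D$ is invariant under complex conjugation so that (2) transfers cleanly between $D$ and $D^*$.
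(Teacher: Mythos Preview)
Your proof is correct. The paper does not actually supply its own proof of this lemma; it merely cites it from \cite{david_mathprog}, so there is nothing to compare against beyond noting that your direct coordinate computation is the standard and essentially only way to establish this elementary equivalence.
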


\section{The NPA Hierarchy}\label{sec:npa}

Our treatment of the NPA hierarchy is based on~\cite[Chapter 8]{watrous_notes}.
Let $(p(a,b|x,y))_{x \in X, y \in Y, a \in A, b \in B}$ be a correlation in $P(X, Y, A, B)$. We now address the issue of determining if $p$ is a quantum correlation, i.e. we ask if there are commuting PVMs $\{E_{xa}\}_{a \in A}$ and $\{F_{y,b}\}_{b \in B}$ for each $x \in X$ and $y \in Y$, and a state $\psi$ such that $p(a,b|x,y) = \braket{ \psi, E_{xa} F_{yb} \psi}$ for all $x \in X$, $y \in Y$, $a \in A$ and $b \in B$. 

Note that in order to determine the correlation $p$, one only needs the vectors $\{E_{xa}\psi\}_{x\in X, y \in A}$ and $\{F_{yb}\psi\}_{y \in Y, b \in B}$. In particular, we note that all of this information is contained in the Gram Matrix of these vectors. 

Let us define $\Sigma = (X \times A) \sqcup (Y \times B)$. We can now define an SDP relaxation of the problem of determining whether $p$ is a quantum correlation, by searching for positive matrices $R$ indexed by $\Sigma^{\leq 1}$ such that $R_{xa, yb} = p(a,b|x,y)$ for all $x \in X, y \in Y, a \in A, b \in B$. A Gram Matrix arising from an actual quantum strategy as described earlier also satisfies several other equations. For example, one always has that $\sum_{a} E_{x,a}\psi = \psi$ for all $x \in X$. Hence, we may restrict our search space to positive matrices indexed by $\Sigma^{\leq 1}$ that satisfy these additional constraints. Each of these additional constraints actually turns out to be affine. Hence, we get an SDP relaxation of the problem of determining if $p$ is a quantum correlation as discussed earlier. 

One can construct increasingly complicated SDP relaxations of this problem in a similar manner by searching over positive matrices indexed by longer words formed from the alphabet $\Sigma$. For each $k \in \mathbb{N}$, one searches over positive semidefinite matrices indexed by words from $\Sigma^{\leq}$.

This hierarchy of SDP relaxations is known as the \emph{NPA hierarchy}, named after the authors of \cite{Navascues_2008} who introduced it, where the choice of $k$ corresponds to different levels of the hierarchy. We give more details in the next section. However, we shall restrict ourselves to the case of synchronous correlations and games. We refer the reader to \cite{Navascues_2008} for more details regarding the general case. 

\subsection{A Synchronous NPA Hierarchy}

We choose our alphabet to be $\Sigma = X \times A$ instead of $(X \times A) \sqcup (X\times A)$, since we are restricting ourselves to synchronous correlations. We define $\sim$ to be the finest equivalence relation satisfying the following two properties: 
\begin{enumerate}
    \item For each $x,a \in X \times A$, $s(x,a)(x,a)t \sim s(x,a)t$ for all $s,t \in \Sigma^*$.
    \item $st \sim ts$ for all words $s,t \in \Sigma^*$. 
\end{enumerate}

\begin{definition}
A function $\phi: \Sigma^* \to \mathbb{C}$ is said to be \emph{admissible} if:
\begin{enumerate}
    \item $\phi(\epsilon) = 1$
    \item For all words $s,t \in \Sigma^*$, we have 
    $$\sum_{a \in A}\phi(s(x,a)t) = \phi(st)$$
    for each $x \in X$.
    \item For all words $s,t \in \Sigma^*$, we have 
    $$\phi(s(x,a)(y,b)t) = 0$$
    for each $x,y \in X$ and $a,b \in A$ such that $V(a,b\mid x,y) = 0$.
    \item For all words $s,t \in \Sigma^*$ satisfying $s \sim t$, we have $\phi(s) = \phi(t)$.
\end{enumerate}

Similarly, an \emph{admissible function of order $k$} is a function $\phi: \Sigma^{\leq 2k} \to \mathbb{C}$ satisfying above conditions. An \emph{admissible operator of order $k$} is a positive semidefinite matrix $\mathcal{R} \in M_{\Sigma^{\leq k}}(\mathbb{C})$ such that there exists an admissible function $\phi: \Sigma^{\leq 2k}$ of order $k$ satisfying $\mathcal{R}_{s,t} = \phi(s^Rt).$ 
\end{definition}

We now state the result about the convergence of the NPA hierarchy for synchronous correlations from \cite{russell_synchronous_2023} without a proof: 

\begin{lem}[{\cite[Corollary 2]{russell_synchronous_2023}}]\label{lem:conv-npa-corr}
Let $p \in P(X,A)$ be a synchronous correlation. Then, $p \in C_q(X,A)$ if and only if for each $k \in \mathbb{N}$ there exists an admissible operator $\mathcal{R}^k$ of order $k$ such that $\mathcal{R}^k_{(x,a), (y,b)} = p(a,b\mid x,y)$ for all $x,y \in X, \ a,b \in A$.
\end{lem}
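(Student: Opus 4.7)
The plan is to prove both directions of the equivalence separately. The forward direction is a direct Gram matrix construction; the backward direction requires a compactness argument followed by a GNS-type construction to extract a commuting-operator quantum realization from the admissible operators.

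For the forward direction, suppose $p \in C_q(X,A)$ is synchronous, hence realized by a unital $C^*$-algebra $\calA$ with PVMs $\{E_{x,a}\}_{a \in A} \subseteq \calA$ for each $x \in X$ and a tracial state $\tau$ on $\calA$ such that $p(a,b \mid x,y) = \tau(E_{x,a} E_{y,b})$. For each $k$ and each word $s = (x_1,a_1)\cdots(x_l,a_l) \in \Sigma^{\leq k}$, define $\pi(s) \coloneqq E_{x_1,a_1}\cdots E_{x_l,a_l}$ and set $\pi(\epsilon) \coloneqq I$. Then $\calR^k_{s,t} \coloneqq \tau(\pi(s)^*\pi(t))$ is positive semidefinite as a Gram matrix under $\tau$. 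The admissibility conditions then follow routinely: $\calR^k_{\epsilon,\epsilon} = \tau(I) = 1$; the PVM property gives the sum rule; idempotency $E_{x,a}^2 = E_{x,a}$ gives the relation $s(x,a)(x,a)t \sim s(x,a)t$; traciality gives $st \sim ts$; and the perfect-strategy condition $V(a,b\mid x,y)=0 \Rightarrow E_{x,a}E_{y,b}=0$ gives the vanishing rule. Restricting to $\calR^k_{(x,a),(y,b)}$ recovers $p(a,b\mid x,y)$.

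For the backward direction, suppose admissible operators $\calR^k$ exist for every $k$. Each induces an admissible function $\phi^k$ of order $k$ with $|\phi^k(w)| \leq 1$ for all $w \in \Sigma^{\leq 2k}$. By a diagonal/compactness argument in the product topology on the unit polydisc $\mathbb{D}^{\Sigma^*}$, pass to a subsequence converging pointwise to an admissible function $\phi\colon \Sigma^* \to \complex$ satisfying all four conditions globally. Define a sesquilinear form on the free vector space $\complex \Sigma^*$ by $\langle s,t\rangle \coloneqq \phi(s^R t)$. Positivity of the $\calR^k$ at every level makes this form positive semidefinite. Quotienting by its null space and completing yields a Hilbert space $\calH$. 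The left-multiplication action $s \mapsto (x,a)s$ descends to bounded operators $L_{(x,a)}$ on $\calH$: the idempotency relation in condition (1) forces $L_{(x,a)}^2 = L_{(x,a)}$, and the cyclic condition makes $L_{(x,a)}$ self-adjoint, so each is a projection; condition (2) then gives $\sum_a L_{(x,a)} = I$. Taking $\psi \coloneqq [\epsilon]$, one computes $\langle \psi, L_{(x,a)} L_{(y,b)} \psi\rangle = \phi((x,a)(y,b)) = p(a,b\mid x,y)$. Finally, the traciality relation $st \sim ts$ makes the vector state $\omega(\cdot) = \langle \psi, \cdot\, \psi\rangle$ tracial on the $\ast$-algebra generated by the $L_{(x,a)}$, so by the standard form (or GNS) construction we obtain a commuting copy of this algebra on the right, giving Bob's PVMs $F_{y,b}$ which commute with Alice's $L_{(x,a)}$ and reproduce the synchronous correlation $p$.

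The main obstacle is the backward direction, specifically producing the commuting pair of PVMs required by the commuting-operator model from what is a priori only a single family of projections on $\calH$. This is exactly where synchronicity plays its essential role: the equivalence relation $\sim$ encodes a tracial property, and traciality of $\omega$ relative to the algebra generated by the $L_{(x,a)}$ is what permits passing to the standard form to obtain a commuting opposite action. Verifying that this standard-form construction yields PVMs (rather than only POVMs) and that the resulting joint distribution equals $p$ on all pairs $(x,y,a,b)$ amounts to unwinding the admissibility conditions for all words of length $\leq 2$, but the conceptual work is in the compactness/GNS passage. A cleaner alternative, which I would pursue if writing this out in detail, is to quote the fact from Paulsen--Todorov that synchronous quantum correlations correspond bijectively to tracial states on the universal $C^*$-algebra of projections $\{e_{x,a}\}$ with $\sum_a e_{x,a} = 1$ and $e_{x,a} e_{y,b} = 0$ whenever $V(a,b\mid x,y)=0$; the admissibility conditions then exactly describe finite-dimensional truncations of such a tracial state, and convergence of the hierarchy reduces to a standard weak-$\ast$ compactness argument on the state space.
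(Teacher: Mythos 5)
The paper does not actually prove this lemma: it is imported verbatim as \cite[Corollary~2]{russell_synchronous_2023} and explicitly stated ``without a proof'', so there is no in-paper argument to compare against. Your proposal essentially reconstructs the standard proof from the cited literature, and it is correct in outline: the forward direction via the tracial-state characterization of synchronous commuting-operator correlations (Paulsen--Severini--Stahlke--Todorov--Winter) together with the Gram matrix $\mathcal{R}^k_{s,t}=\tau(\pi(s)^*\pi(t))$, and the backward direction via a Banach--Alaoglu/diagonal compactness argument to get a globally admissible $\phi$, followed by a GNS construction in which left multiplications give Alice's PVMs and---this is exactly where synchronicity enters, as you say---the traciality encoded in $st\sim ts$ makes the right multiplications well defined, bounded, and commuting with the left ones, yielding Bob's PVMs. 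Note that this proves membership in the commuting-operator set, which is what the paper means by $C_q$ here.

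Three small points worth tightening if you write this out. First, self-adjointness of $L_{(x,a)}$ does not need the cyclic condition: it follows directly from $\langle s,(x,a)t\rangle=\phi(s^R(x,a)t)=\langle (x,a)s,t\rangle$; the cyclic condition is needed only for the right action and for traciality of $\omega$. Second, in the forward direction the step $\tau(E_{x,a}E_{y,b})=0\Rightarrow E_{x,a}E_{y,b}=0$ requires passing to the GNS representation of $\tau$, where the trace is faithful on the generated von Neumann algebra, so that $\tau(E_{x,a}E_{y,b}E_{x,a})=0$ forces $E_{y,b}E_{x,a}=0$; as written you assert the operator identity without this justification. Third, the vanishing condition in the definition of admissibility refers to the game predicate $V$, so the forward implication implicitly assumes $p$ is a perfect (game-consistent) correlation, i.e.\ $p(a,b\mid x,y)=0$ whenever $V(a,b\mid x,y)=0$; your proof uses this tacitly, and it is an imprecision inherited from the statement rather than a flaw in your argument. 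Beyond these, the elided verifications (boundedness of the left and right multiplications on the quotient, positivity of the limiting form from positivity at every level) are routine and follow from the sum rule plus positivity exactly as you indicate.
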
 \

\subsection{Proof of Convergence of NPA Hierarchy for Quantum Isomorphism}

We finish the proof of \ref{prop:conv-npa-graph} as promised in \cref{sec:preliminaries}. 

\begin{proof}[Proof of \ref{prop:conv-npa-graph}]
    The ``only if" direction follows from \cref{prop:perf-strat-graph}. We now show the ``if" direction. 

    Let the sequence of certificates be denoted $\Gamma^1, \Gamma^2, \dots$. We will establish the existence of certificates $\Gamma^{'1}, \Gamma^{'2}, \dots$ that also satisfy $\Gamma^{'i}_{(g,h), (g',h')}=  \Gamma^{'j}_{(g,h), (g',h')}$ for all $i, j \in \mathbb{N}$, $g,g' \in V(G)$ and $h,h' \in V(H)$. The proof is then complete by an application of \cref{lem:conv-npa-corr}. 

    We proceed in a similar fashion as in the proof of \cite[Corollary 2]{russell_synchronous_2023}. For each certificate $\Gamma^k$, we construct an infinite matrix $\widehat{\Gamma}^k$ indexed by the elements of $\Sigma^*$ as follows: 
    \begin{align*}
        \widehat{\Gamma}^k_{s, t} = \begin{cases}
            \Gamma^k_{s,t} & \text{ if } s,t \in \Sigma^{\leq k} \\
            0 & \text{ otherwise }
        \end{cases}
    \end{align*}

    We may now regard each $\widehat{\Gamma}_k$ as an element of $l^{\infty}(\Sigma^*)$. Since each element lies in the unit ball of $l^{\infty}(\Sigma^*)$ (as verified in the proof of \cite[Corollary 2]{russell_synchronous_2023}), it follows from the Banach-Alagolu theorem that the sequence $\{\widehat{\Gamma}_k\}_{k = 1}^{\infty}$ has a limit point (in the weak-$*$ topology) $\Gamma'$ that also lies in the unit ball of $l^{\infty}(\Sigma^*)$. 

    For each $k \in \mathbb{N}$, let $\Gamma^{'k}$ denote the square sub-matrix of $\Gamma'$ indexed by $\Sigma^{\leq k}$. It is clear that $\Gamma^{'i}_{(g,h), (g',h')}=  \Gamma^{'j}_{(g,h), (g',h')}$ for all $i, j \in \mathbb{N}$, $g,g' \in V(G)$ and $h,h' \in V(H)$. We only need to show that each $\Gamma^{'k}$ is a certificate for the $k^{\text{th}}$-level of the NPA hierarchy. 

    We start by showing that each $\Gamma^{'k}$ is positive semidefinite. Firstly, let us denote the subsequence of $\{\widehat{\Gamma}^k\}_{k=1}^{\infty}$ that converges to $\Gamma'$ by $\{\widehat{\Gamma}^{k_i}\}_{i=1}^{\infty}$. Let us consider a vector $v = \sum_{s\in \Sigma^{\leq}} v_s e_s$ expressed in the standard basis of $\mathbb{C}^{\Sigma^{\leq k}}$. We then have 
    \begin{align*}
        \braket{v , \Gamma^{'k} v} & = \sum_{s, t \in \Sigma^{\leq k}} \overline{v_{s}}v_t \braket{e_s, \Gamma^{'k} e_t} \\
        & = \sum_{s, t \in \Sigma^{\leq k}} \overline{v_{s}}v_t \Gamma^{'k}_{s,t} = \lim_{l \to \infty }\left(\sum_{s, t \in \Sigma^{\leq k}} \overline{v_{s}}v_t \widehat{\Gamma}^{k_l}_{s,t} \right) \\
        & = \lim_{l \to \infty } \braket{v , \widehat{\Gamma}^{k_l} v}  \geq 0 
    \end{align*}

    In the third equality, we use the weak-$*$ convergence of $\widehat{\Gamma}^k$ to $\Gamma'$ in the penultimate step, we identify $v$ with an appropriate vector in $\mathbb{C}^{\Sigma^{\leq k_l}}$. We also note that it is possible that $k \leq k_l$, for small $l$. In this case, we only evaluate the limit on the tail, i.e. on $k_l$ such that $k_l \geq k$. The last step now follows from the fact that each $\widehat{\Gamma}^{l_k}$ is positive semidefinite.

    We may also show that each $\Gamma^{'k}$ satisfies the algebraic conditions in \cref{def:cert-npa} by showing similarly that $\Gamma'$ satisfies them, since it is a limit of $\{\widehat{\Gamma}^{k_l}\}$ and all the $\widehat{\Gamma}^{k_l}$ satisfy these relations when $l$ is large enough. 
\end{proof}

\section{Lasserre Hierarchy for Graph Isomorphism}\label{sec:lasserre}

Recall that two graphs are isomorphic if and only if there exists a permutation matrix $P$ such that $A_G P = P A_H$, where $A_G$ and $A_H$ denote the adjacency matrices of $G$ and $H$ respectively. The problem of checking whether or not two graphs $G$ and $H$ are isomorphic can be formulated as the integer program $\iso(G,H)$ defined as follows: 

\begin{align}
    \begin{aligned}
        \sum_{h \in H} X_{g,h} - 1 & = 1 && \text{for all } g\in V(G), \\
        \sum_{g \in G} X_{g,h} - 1 & = 1 && \text{for all } h\in V(H), \\
        X_{g,h}X_{g',h'} & = 0 && \parbox[t]{6cm}{for all $g,g' \in V(G)$, $h,h' \in V(H)$  such that $\rel_G(g,g') \neq \rel_H(h,h')$.}
    \end{aligned}
\end{align}

where the variables $\{X_{g,h}\}$ are allowed take the values $0,1$. An element $\{g_1h_1, \dots , g_lh_l\}$ is said to be a \emph{partial isomorphism} if $\rel(g_i, g_j) = \rel(h_i, h_j)$ for all $i, j \in [l]$.  We can now consider the Lasserre hierarchy of SDP relaxations for $\iso(G,H)$. We present the version used in \cite{roberson-seppelt-arxiv}. 

\begin{definition}
    Let $k \geq 1$. The level-$k$ Lasserre relaxation for graph isomorphism has variables $y_I$ ranging over $\mathbb{R}$ for $I \in \binom{V(G)\times V(H)} {\leq 2k} $. The constraints are as follows: 
    \begin{align}
        \begin{aligned}
            M_t(y) \coloneqq (y_{I \cup J})_{I, J \in \binom{V(G) \times V(H)}{\leq t}}  &\succeq 0,&& \\
			\sum_{h \in V(H)} y_{I \cup \{gh\}} &= y_{I} &&\text{for all } I \text{ s.t. } \card{I} \leq 2t-2 \text{ and all } g \in V(G), \\
			\sum_{g \in V(G)} y_{I \cup \{gh\}} &= y_{I} && \text{for all } I \text{ s.t. } \card{I} \leq 2t-2 \text{ and all } h \in V(H),  \\
			y_I & = 0 && \text{if } I \text{ s.t.\@ } \card{I} \leq 2t \text{ is not a partial isomorphism,}\\
			y_{\emptyset}& = 1. && \label{lassere5}
        \end{aligned}
    \end{align}
\end{definition}

In \cite{roberson-seppelt-arxiv}, for each $k \in \mathbb{N}$, a class of $(k,k)$-bilabelled graphs $\calL_k$ was constructed such that for each the $k^{\text{th}}$-level of the Lasserre hierarchy $\iso(G,H)$ is feasible if and only if $G,\ H$ are homomorphism indistinguishable over $\calL_k$. The construction of these graph classes $\calL_k$ begins from \emph{atomic graphs} $\calA_k$ which are defined $(k,k)$-bilabelled graphs $\boldsymbol{F} = (F, \boldsymbol{u}, \boldsymbol{v})$ with all of its vertices labelled. Note that the the set of atomic graphs $\mathcal{A}_k$ is generated under parallel composition by the graphs
		\begin{itemize}
			\item $\boldsymbol{J} \coloneqq (J, (1,\dots, k), (k+1, \dots, 2k))$ with $V(J) = [2k]$, $E(J) = \emptyset$,
			\item $\boldsymbol{A}^{ij} \coloneqq (A^{ij}, (1,\dots, k), (k+1, \dots, 2k))$ with $V(A^{ij}) = [2k]$, $E(A^{ij}) = \{ij\}$ for $1 \leq i < j \leq 2k$,
			\item $\boldsymbol{I}^{ij}$ for $1 \leq i < j \leq 2k$ which is obtained from $\boldsymbol{A}^{ij}$ by contracting and removing the edge $ij$. 
		\end{itemize}

  We dedicate this part of the paper to motivate the viewpoint that the NPA hierarchy is a noncommutative generalisation of the Lasserre hierarchy. This is indeed well known in the literature (see \cite{NPA-2, cstar-hierarchy} for example), but we include this for the sake of completion. 

  One can construct an NPA like hierarchy of SDP relaxations of the problem of deciding if two graphs are isomorphic. If we make use of the characterization given in \cref{prop:perf-strat-graph}, a certificate $\cal{R}$ for the $k^{\text{th}}$-level of the NPA hierarchy constructed from a set of commuting projections also satisfies the following condition:
  
  \begin{equation}\label{eq:lasserre-npa}
      \calR_{g_1h_1 \dots g_kh_k, g_{k+1}h_{k+1}\dots g_{2k}h_{2k} } = \calR_{g_{\sigma(1)}h_{\sigma(1)}\dots g_{\sigma(k)}h_{\sigma(k)}, g_{\sigma(k+1)}h_{\sigma(k+1)}\dots g_{\sigma(2k)}h_{\sigma(2k)}} \text{ for all } \sigma \in \mathbb{S}_{2k}. 
  \end{equation} 

  Now, it is not too difficult to see that the certificates for the $k^{\text{th}}$-level of the NPA hierarchy for the $(G,H)$-isomorphism game satisfying~\eqref{eq:lasserre-npa} are in bijective correspondence with certificates for the $k^{\text{th}}$-level of the Lasserre hierarchy. In particular, this implies that the feasibility of the $k^{\text{th}}$-level of the Lasserre hierarchy also implies the feasibility of the $k^{\text{th}}$-level of the NPA hierarchy. Hence, we should expect that $\calP_k \subseteq \calL_k$ for all $k \in \mathbb{N}$ which is indeed the case as we have seen. 

\end{document}